\title{Priority Downward Closures} %
\author{Ashwani Anand}{Max Planck Institute for Software Systems (MPI-SWS), Germany}{ashwani@mpi-sws.org}{https://orcid.org/0000-0002-1825-0097}{}%
\author{Georg Zetzsche}{Max Planck Institute for Software Systems (MPI-SWS), Germany}{georg@mpi-sws.org}{https://orcid.org/0000-0002-6421-4388}{}
\authorrunning{A. Anand and G. Zetzsche} %
\keywords{downward closure, priority order,	pushdown automata, non-deterministic finite automata, abstraction, computability} %
\tikzset{->,  %
	>=Latex, %
	node distance=8em, %
	every state/.style={thick}, %
	initial text=$ $, %
	initial/.style={initial by arrow,initial where=above}
}
\newcommand{\sqbrctdown}[5]{
	\draw[#1] (#2.west) -- (#2.south west) -- (#3.south east) node[below, midway,#5]{#4} -- (#3.east)
}
\DeclareFontFamily{U}{mathb}{\hyphenchar\font45}
\DeclareFontShape{U}{mathb}{m}{n}{
	<-6> mathb5 <6-7> mathb6 <7-8> mathb7
	<8-9> mathb8 <9-10> mathb9
	<10-12> mathb10 <12-> mathb12
}{}
\DeclareSymbolFont{mathb}{U}{mathb}{m}{n}
\DeclareMathSymbol{\llcurly}{\mathrel}{mathb}{"CE}
\newcommand{\ld}{\lessdot}
\newcommand{\priority}{\ensuremath{\mathcal{P}}}
\renewcommand{\so}{\preccurlyeq}
\NewDocumentCommand\bo{o}{
	\IfNoValueTF{#1}{\preccurlyeq_{\mathsf{B}}}{\preccurlyeq_{\mathsf{B}}^{#1}}
}
\NewDocumentCommand\po{o}{
	\IfNoValueTF{#1}{\preccurlyeq_{\mathsf{P}}}{\preccurlyeq_{\mathsf{P}}^{#1}}
}
\NewDocumentCommand\sd{o}{
	\IfNoValueTF{#1}{\mathord{\downarrow}}{\mathord{\downarrow^{#1}}}
}
\NewDocumentCommand\bd{o}{
	\IfNoValueTF{#1}{\mathord{\downarrow_{\mathsf{B}}}}{\mathord{\downarrow_{\mathsf{B}}^{#1}}}
}
\NewDocumentCommand\pd{o}{
	\IfNoValueTF{#1}{\mathord{\downarrow_{\mathsf{P}}}}{\mathord{\downarrow_{\mathsf{P}}^{#1}}}
}
\newcommand{\Z}{\mathbb{Z}}
\newcommand{\N}{\mathbb{N}}
\newcommand{\s}{\Sigma}
\newcommand{\na}{\mathcal{A}}
\newcommand{\nb}{\mathcal{B}}
\newcommand{\calc}{\mathcal{C}}
\newcommand{\lang}{\mathcal{L}}
\newcommand{\bigO}{\mathcal{O}}
\newcommand{\trans}{\mathcal{T}}
\newcommand{\lna}{\lang(\na)}
\newcommand{\lnb}{\lang(\nb)}
\newcommand{\wrt}{\hyperlink{wrt}{w.r.t.}\@\xspace}
\newcommand{\wqo}{\hyperlink{wqo}{WQO}\@\xspace}
\newcommand{\FSA}{\hyperlink{FSA}{NFA}\@\xspace}
\newcommand{\colred}{\color{red!60}}
\newcommand{\lowerletters}[1]{\ensuremath{\Sigma_{\leq #1}}}
\newcommand{\equalletters}[1]{\ensuremath{\Sigma_{= #1}}}
\begin{document}

\maketitle

\begin{abstract}
When a system sends messages through a lossy channel, then the language encoding all sequences of messages can be abstracted by its downward closure, i.e. the set of all (not necessarily contiguous) subwords. This is useful because even if the system has infinitely many states, its downward closure is a regular language. However, if the channel has congestion control based on priorities assigned to the messages, then we need a finer abstraction: The downward closure with respect to the priority embedding. As for subword-based downward closures, one can also show that these priority downward closures are always regular.

While computing finite automata for the subword-based downward closure is well understood, nothing is known in the case of priorities. We initiate the study of this problem and provide algorithms to compute priority downward closures for regular languages, one-counter languages, and context-free languages.
\end{abstract}

\section{Introduction}\label{sec:introduction}

When analyzing infinite-state systems, it is often possible to
replace individual components by an overapproximation based on (subword)
downward closures. Here, the \emph{(subword) downward closure} of a language
$L\subseteq\Sigma^*$ is the set of all words that appear as (not necessarily
contiguous) subwords of members of $L$. This overapproximation is usually
possible because the verified properties are not changed when we allow
additional behaviors resulting from subwords. Furthermore, this
overapproximation simplifies the system because a well-known result by Haines
is that for \emph{every language $L\subseteq\Sigma^*$}, its subword downward
closure is regular.

This idea has been successfully applied to many verification tasks, such as the verification of restricted lossy channel systems~\cite{DBLP:conf/icalp/AbdullaBB01}, concurrent programs with dynamic thread spawning and bounded context-switching~\cite{DBLP:journals/corr/abs-1111-1011,DBLP:conf/icalp/0001GMTZ23}, asynchronous programs (safety, termination, liveness~\cite{DBLP:journals/lmcs/MajumdarTZ22}, but also context-free refinement verification~\cite{BaumannGanardiMajumdarThinniyamZetzsche2023b}), the analysis of thread pools~\cite{DBLP:journals/pacmpl/BaumannMTZ22}, and safety of parameterized asynchronous shared-memory systems~\cite{DBLP:conf/concur/TorreMW15}. For these reasons, there has been a substantial amount of interest in algorithms to compute finite automata for subword downward closures of given infinite-state sytems~\cite{vanleeuwen1978,COURCELLE91,DBLP:conf/icalp/HabermehlMW10,DBLP:conf/stacs/Zetzsche15,DBLP:conf/icalp/Zetzsche15,DBLP:conf/lata/BachmeierLS15,DBLP:conf/icalp/Zetzsche16,DBLP:conf/popl/HagueKO16,DBLP:conf/lics/ClementePSW16,DBLP:conf/mfcs/AtigMMS17,DBLP:conf/lics/Zetzsche18,DBLP:journals/corr/abs-1904-10703,Atig2016OCAdown}.

One situation where downward closures are useful is that of systems that send messages through a lossy channel,
meaning that every message can be lost on the way. Then clearly, the downward closure of the set of sequences of messages is exactly the set of sequences observed by the receiver.
This works as long as all messages can be dropped arbitrarily.

\subparagraph{Priorities} However, if the messages are not dropped arbitrarily but as part of congestion control, then taking the set of all subwords would be too coarse an abstraction: Suppose we want to prioritize critical messages that can only be dropped if there are no lower-priority messages in the channel.
For example, RFC~2475 describes an architecture that allows specifying relative priority among the IP packets from a finite set of priorities and allows the network links to drop lower priority packets to accommodate higher priority ones when the congestion in the network reaches a critical point~\cite{Blake1998AnAF}. As another example, in networks with an Asynchronous Transfer Mode layer, cells carry a priority in order to give preferences to audio or video packages over less time-critical packages~\cite{LEBOUDEC1992279}. In these situations, the subword downward closure would introduce behaviors that are not actually possible in the system.

To formally capture the effect of dropping messages by priorities, Haase, Schmitz and Schnoebelen~\cite{HaaseSchmitzSchnoebelen:PowerOfPriorityChannelSystems} introduced \emph{Priority Channel Systems (PCS)}. These feature an ordering on words (i.e.\ channel contents), called the \emph{Prioritised Superseding Order (PSO)}, which allows the messages to have an assigned priority, such that higher priority messages can supersede lower priority ones. This order indeed allows the messages to be treated discriminatively, but the superseding is asymmetric. A message can be superseded only if there is a higher priority letter coming in the channel later. This means, PSO are the ``priority counterpart'' of the subword order for channels with priorities. In particular, in these systems, components can be abstracted by their \emph{priority downward closure}, the downward closure with respect to the PSO. Fortunately, just as for subwords, priority downward closures are also always regular.

This raises the question of whether it is possible to compute finite automata
for the priority downward closure for given infinite-state systems. For
example, consider a recursive program that sends messages into a lossy channel
with congestion control. Then, the set of possible message sequences that can
arrive is exactly the priority downward closure $S\pd$ of the language $S$ of
sent messages. Since $S$ is context-free in this case, we would like to compute
a finite automaton for $S\pd$.  While this problem is well-understood for
subwords, nothing is known for priority downward closures.

\subparagraph{Contribution} We initiate the study of computing priority
downward closures. We show two main results. On the one hand, we study the
setting above---computing priority downward closures of context-free languages.
Here, we show that one can compute a doubly-exponential-sized automaton for its
priority downward closure. On the other hand, we consider a natural restriction
of context-free languages: We show that for one-counter automata, there is a
polynomial-time algorithm to compute the priority downward closure.

\subparagraph{Key technical ingredients} The first step is to consider a
related order on words, which we call \emph{block order}, which also has
priorities assigned to letters, but imposes them more symmetrically. Moreover,
we show that under mild assumptions, computing priority downward closures
reduces to computing block downward closures.

Both our constructions---for one-counter automata and context-free
languages---require new ideas.  For one-counter automata, we modify the
subword-based downward closures construction from~\cite{Atig2016OCAdown} in a non-obvious way to
block downward closures. Crucially, our modification relies on the insight
that, in some word, repeating existing factors will always yield a word that is
larger in the block order.  For context-free languages, we present a novel
inductive approach: We decompose the input language into finitely many
languages with fewer priority levels and apply the construction recursively.

\subparagraph{Outline of the paper} We fix notation in
\cref{sec:preliminaries} and introduce the block order and show its
relationship to the priority order in \cref{sec:block-order}. In
\cref{sec:regular,sec:oca,sec:cfl}, we then present methods for computing block and priority downward closures for regular languages, one-counter languages, and context-free languages, respectively.

\section{Preliminaries}\label{sec:preliminaries}
We will use the convention that $[i,j]$ denotes the set $\{i,i+1,\ldots,j\}$. By $\Sigma $, we represent a finite alphabet. $ \Sigma^*\ (\Sigma^+) $ denotes the set of (non-empty) words over $ \s $. When defining the priority order, we will equip $ \s $ with a set of priorities with total order $ (\priority,\lessdot )$, i.e. there exists a fixed priority mapping from $ \s $ to $ \priority $. The set of priority will be the set of integers $ [0,d] $, with the canonical total order.  By sets $ \equalletters{p} $ ($ p\in\priority $), we denote the set of letters in $ \s $ with priority $ p $. For priority $ p\in\priority $, $ \lowerletters{p}=\equalletters{0}\cup\cdots\cup\equalletters{p} $, i.e. the set of letters smaller than or equal to $ p $. For a word $ w=a_0a_1\cdots a_k $, where $ a_i\in\s $, by $ w[i,j] $, we denote the infix $ a_ia_{i+1}\cdots a_{j-1}a_j $, and by $ w[i] $, we denote $ a_i $.

\subparagraph{Finite automata and regular languages}
A \emph{non-deterministic finite state automaton (\FSA)} is a tuple $ \na= (Q, \Sigma, \delta, q_0, F) $, where
$ Q $ is a finite set of \emph{states},
$ \Sigma $ is its \emph{input alphabet},
$ \delta $ is its set of \emph{edges} i.e.\ a finite subset of $ Q\times \Sigma\cup\{\epsilon\} \times Q $,
$ q_0\in Q $ is its \emph{initial state}, and
$ F\subseteq Q $ is its set of \emph{final states}.
A word is accepted by $ \na $ if it has a run from the initial state ending in a final state. 
The language \emph{recognized} by an NFA $ \na $ is called a regular language, and is denoted by $\lang (\na)$.
The \emph{size of a \FSA}, denoted by $ |\na| $, is the number of states in the \FSA.

\subparagraph{(Well-)quasi-orders} A \emph{quasi-order}, denoted as $ (X, \leq) $, is a set $ X $ with a reflexive and transitive relation $ \leq $ on $ X $. If $ x\leq y $ (or equivalently, $ y\geq x $), we say that $ x $ is smaller than $ y $, or $ y $ is greater than $x$.
If $ \leq $ is also anti-symmetric, then it is called a \emph{partial order}. If every pair of elements in $ X $ is comparable by $ \leq $, then it is called a \emph{total} or \emph{linear} order.
Let $ (X,\leq_1) $ and $ (Y,\leq_2) $ be two quasi orders, and $ h:X\rightarrow Y $ be a function. We call $ h $ a \emph{monomorphism} if it is one-to-one and $ x_1\leq_1 x_2 \iff h(x_1)\leq_2 h(x_2) $. 

A quasi order $ (X, \leq) $ is called a \emph{well-quasi order (\wqo)}, if any infinite sequence of elements $ x_0,x_1,x_2,\ldots $ from $ X $ contains an increasing pair $ x_i\leq x_j $ with $ i<j $. If $X$ is the set of words over some alphabet, then a \wqo $ (X,\leq) $ is called \emph{multiplicative} if $ \forall u,u',v,v'\in X $, $ u\leq u' $ and $ v\leq v' $ imply that $ uv\leq u'v' $.
\subparagraph{Subwords} For $ u,v\in\Sigma^*,$ we say $ u\so v $, which we refer to as \emph{subword order}, if $ u $ is a subword (not necessarily, contiguous) of $ v $, i.e. if 
\begin{eqnarray*}
	u&=&u_1u_2\cdots u_k\\
	\text{and, }v&=& v_0u_1v_1u_2v_2\cdots v_{k-1}u_kv_k
\end{eqnarray*}
where $ u_i\in \s $ and $ v_i\in\s^* $. In simpler words, $ u\so v $ if some letters of $ v $ can be dropped to obtain $ u $. For example, let $ \s=[0,1] $. Then, $ 0\so 00\so 010\not\so 110 $; $ 0 $ and $ 00 $ can be obtained by dropping letters from $ 00 $ and $ 010 $, respectively. But $ 010 $ cannot be obtained from $ 110 $, as the latter does not have sufficiently many $ 0 $s. If $ u\so v $, we say that $ u $ is \textit{subword smaller} than $ v $, or simply that $ u $ is a \textit{subword} of $ v $. And we call a mapping from the positions in $ u $ to positions in $ v $ that witnesses $ u\so v $ as the \emph{witness position mapping}.

Since $ \Sigma $ is a \wqo with the equality order, by Higman's lemma, $ \Sigma^* $ is a \wqo with the subword order. It is in fact a multiplicative WQO: if $ u\so u' $ and $ v\so v' $, then dropping the same letters from $ u'v' $ gives us $ uv $.

\subparagraph{Priority order} We take an alphabet $ \s $ with priorities totally ordered by $ \ld $. We say $ u\po v $, which we refer to as \emph{priority order}, if $ u=\epsilon $ or,
\begin{eqnarray*}
	u&=&u_1u_2\cdots u_k\\
	\text{and, }v&=&v_1u_1v_2u_2\cdots v_k u_k,
\end{eqnarray*}
such that $ \forall i\in [1,k] $, $ u_i\in\s $ and $ v_i\in \lowerletters{u_i}^* $. It is easy to observe that the priority order is multiplicative, and is finer than the subword order, i.e. $ \forall u,v\in\s^*, u\po v\implies u\so v $. %
As shown in \cite[Theorem 3.6]{HaaseSchmitzSchnoebelen:PowerOfPriorityChannelSystems}, the priority order on words over a finite alphabet with priorities is a well-quasi ordering:
\begin{lemma}\label{prioritywqo}
	$ (\Sigma^*, \po) $ is a \wqo.
\end{lemma}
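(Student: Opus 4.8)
The plan is to prove that $(\Sigma^*,\po)$ is a \wqo by induction on the number of priority levels, i.e.\ on $d$ with $\priority=[0,d]$. Throughout, write $\Gamma=\lowerletters{d-1}$ for the alphabet of letters of priority $<d$; by the induction hypothesis $(\Gamma^*,\po)$ is a \wqo, being an instance of the claim with one priority level fewer. For the base case $d=0$, every letter has priority $0$, so $\lowerletters{u_i}=\Sigma$ for all $u_i$, and $u\po v$ holds exactly when $u$ is a subword of $v$ via an embedding whose image contains the last position of $v$; equivalently, $wa\po v$ iff $v$ is non-empty, ends with the letter $a$, and $w$ is a subword of the word obtained from $v$ by deleting its last letter. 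Hence $wa\mapsto(a,w)$ is an order isomorphism from $(\Sigma^+,\po)$ onto $\Sigma\times\Sigma^*$ ordered by $(a,w)\le(a',w')\iff a=a'\wedge w\so w'$, which is a \wqo by Higman's lemma together with the fact that finite products preserve the \wqo property; adjoining the word $\epsilon$, which $\po$-precedes everything, keeps it a \wqo.

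For the inductive step I would first isolate $\Gamma^*$: if $u\po v$, then $u$ contains a priority-$d$ letter iff $v$ does, because kept letters retain their priority and a priority-$d$ letter can be neither matched against a lower-priority letter nor placed in an admissible block (whose letters must have priority at most that of the next kept letter). So $(\Sigma^*,\po)$ is the disjoint sum, with no comparabilities across, of $(\Gamma^*,\po)$ — a \wqo by induction — and $(\Sigma^*\setminus\Gamma^*,\po)$, and it suffices to show the latter is a \wqo. The crux is a characterization of $\po$ on words with priority-$d$ letters via their decomposition at those letters. Writing $u=x_0a_1x_1\cdots a_nx_n$ and $v=y_0b_1y_1\cdots b_my_m$ with $a_i,b_j\in\equalletters{d}$ and $x_i,y_j\in\Gamma^*$, I claim $u\po v$ iff there is a strictly increasing $\sigma\colon[1,n]\to[1,m]$ with $\sigma(n)=m$ and $a_i=b_{\sigma(i)}$ for all $i$, such that $x_0\po' y_0$, $x_i\po' y_{\sigma(i)}$ for $1\le i\le n-1$, and $x_n\po^\bullet y_m$. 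Here $\po'$ is the priority order on $\Gamma^*$ relaxed to permit an arbitrary trailing block — equivalently, $u\po' v$ iff $u\po w$ for some prefix $w$ of $v$ — and $\po^\bullet$ is the priority order on $\Gamma^*$ with $\epsilon$ declared incomparable to every non-empty word.

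Establishing this characterization is the main obstacle. For the forward direction one shows that in any $\po$-witness the last priority-$d$ letter of $u$ is matched to the last priority-$d$ letter of $v$ (any later priority-$d$ letter of $v$ would be deleted yet sit before a kept, lower-priority letter), and that the block $x_i$ cannot be embedded past the next priority-$d$ letter $b_{\sigma(i)+1}$, so it embeds priority-respectingly into $y_{\sigma(i)}$ — with a trailing remnant of $y_{\sigma(i)}$ permitted when $i<n$ (it is absorbed into the admissible block before $a_{i+1}$), and with no remnant when $i=n$, where $x_n$'s last letter is forced to be the last letter of $v$. The converse direction glues the block embeddings together and dumps the unmatched priority-$d$ letters of $v$ into the blocks preceding the matched $a_i$, which is legal since those blocks may carry priority-$d$ letters. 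Care is needed with empty blocks; in particular the naive guess ``$x_i\so y_{\sigma(i)}$ for all $i$'' is false — a block placed before a letter of priority $<d$ must itself have priority $<d$ — and this is exactly what forces the refined orders $\po'$ and $\po^\bullet$.

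Finally, I would conclude by a reflection. Both $\po'$ and $\po^\bullet$ are quasi-orders on $\Gamma^*$, each a \wqo: $\po'$ contains the \wqo $\po$, so every infinite sequence already has a $\po$- and hence $\po'$-increasing pair; and $\po^\bullet$ is the disjoint sum of $(\Gamma^+,\po)$ with a one-point order. Therefore the map
\[ x_0a_1x_1\cdots a_nx_n\ \longmapsto\ \bigl(x_0,\ (a_1,x_1)(a_2,x_2)\cdots(a_{n-1},x_{n-1}),\ a_n,\ x_n\bigr) \]
sends $\Sigma^*\setminus\Gamma^*$ into $\Gamma^*\times(\equalletters{d}\times\Gamma^*)^*\times\equalletters{d}\times\Gamma^*$, equipped with the product order whose components are $\po'$, the Higman domination order on $(\equalletters{d}\times\Gamma^*)^*$ induced by the quasi-order pairing equality on $\equalletters{d}$ with $\po'$ on $\Gamma^*$, equality, and $\po^\bullet$. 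This codomain is a \wqo by Higman's lemma and closure under finite products, and the characterization above says precisely that the map reflects the order: whenever the image of $u$ lies below that of $v$, one reads off $\sigma$ and the required block relations, so $u\po v$. Hence $(\Sigma^*\setminus\Gamma^*,\po)$ is a \wqo, and with the disjoint-sum observation so is $(\Sigma^*,\po)$.
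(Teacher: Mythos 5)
The paper itself does not give a proof of this lemma; it simply cites Haase, Schmitz and Schnoebelen \cite{HaaseSchmitzSchnoebelen:PowerOfPriorityChannelSystems}. What you have written is therefore a genuine, self-contained argument, and after checking the key steps I believe it is essentially correct. It is also strategically parallel to the proof the paper gives in \cref{app:block} for the \emph{block}-order analogue (\cref{thm:generalizedblockwqo}): both decompose a word at its highest-priority letters and then exhibit a map into a product of the form $\Gamma^*\times(\equalletters{d}\times\Gamma^*)^*\times\cdots$ that reflects the order, so that Dickson/Higman plus a monomorphism-style argument closes the induction. The extra machinery you need --- the relaxed order $\po'$ and the ``no $\epsilon$'' order $\po^\bullet$ --- is exactly the price paid for the asymmetry of $\po$ compared to $\bo$: in $\po$ the last letter of $u$ must literally be the last letter of $v$, while leftover suffixes of the non-final blocks $y_{\sigma(i)}$ get absorbed into the admissible block preceding the next priority-$d$ letter, which is why $\po'$ (prefix-relaxed) is right for $x_0,\ldots,x_{n-1}$ and $\po^\bullet$ (endpoint-exact, $\epsilon$ isolated) is right for $x_n$. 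The paper's block order treats both ends symmetrically, so its appendix proof avoids this bookkeeping.

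Two small remarks. First, your claim that $\Gamma^*$ and $\Sigma^*\setminus\Gamma^*$ form a disjoint sum ``with no comparabilities across'' is not literally true: $\epsilon\po v$ holds for every $v$, including $v\notin\Gamma^*$, because the definition of $\po$ makes $\epsilon$ a global minimum without requiring a matching decomposition of $v$. This does not damage the WQO argument --- adding a minimum element to a WQO, or adding extra comparabilities to a WQO, preserves WQO-ness --- but it is worth phrasing carefully (e.g.\ treat $\epsilon$ separately and argue the disjoint sum over $\Sigma^+$). Second, there is a shorter route available \emph{within the paper's own framework}, although it relies on results developed only later: by \cref{thm:generalizedblockwqo}, the block order over the flattened alphabet is a WQO on each $\Sigma^*a$, and by \cref{generalizedBlockFiner} it is contained in $\po$ there; since $\Sigma^*=\{\epsilon\}\cup\bigcup_{a\in\Sigma}\Sigma^*a$ is a finite union, $\po$ is a WQO. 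Your proof has the advantage of being self-contained and usable at the point where \cref{prioritywqo} is stated.
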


\subparagraph{Downward closure} 
We define the \textit{subword downward closure} and \textit{priority downward closure} for a language $L\subseteq\Sigma^*$ as follows:
\begin{align*}
	L\sd&:=\{ u\in \s^*\, \mid \exists\ v\in L\colon u\ \so\ v  \},  & L\pd &:=\{ u\in \s^*\mid \exists\ v\in L\colon u\ \po\ v  \}.
\end{align*}

The following is the starting point for our investigation: It shows that for every language $L$, there exist finite automata for its downward closures w.r.t.\ $\so$ and $\po$.
\begin{restatable}{lemma}{donwardclosuresregular}\label{subwordregular}
	Every subword downward closed sets and every priority downward closed set is regular.
\end{restatable}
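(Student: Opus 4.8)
The plan is to derive both halves of the statement from the well-quasi-ordering of $(\Sigma^*,\so)$ (Higman's lemma, recalled above) and of $(\Sigma^*,\po)$ (\cref{prioritywqo}), together with the classical observation underlying Haines's theorem: in a \wqo, every upward-closed set is a finite union of principal filters, and in our two cases each such filter is regular. Throughout, write $\le$ for either of the two orders and let $S\subseteq\Sigma^*$ be downward closed w.r.t.\ $\le$.

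First I would observe that the complement $\Sigma^*\setminus S$ is \emph{upward} closed w.r.t.\ $\le$, and that $S$ is regular iff $\Sigma^*\setminus S$ is; so it suffices to show that every upward-closed $U\subseteq\Sigma^*$ is regular. Next I would prove the key \wqo fact: there is a finite set $F\subseteq U$ with $U=\bigcup_{w\in F}{\uparrow}w$, where ${\uparrow}w:=\{v\in\Sigma^*\mid w\le v\}$. This follows by contradiction: if no finite subset of $U$ works, one greedily picks $w_0,w_1,\dots\in U$ with $w_{i+1}\notin\bigcup_{j\le i}{\uparrow}w_j$, i.e.\ $w_j\not\le w_{i+1}$ for all $j\le i$; then $(w_i)_i$ is an infinite sequence with no pair $i<j$ satisfying $w_i\le w_j$, contradicting that $\le$ is a \wqo. (The inclusion $\bigcup_{w\in F}{\uparrow}w\subseteq U$ is immediate since $F\subseteq U$ and $U$ is upward closed.)

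It then remains to show that each principal filter ${\uparrow}w$ is regular, which is the only place the two orders are treated differently. For $w=a_1\cdots a_n$ one has ${\uparrow}w=\Sigma^*a_1\Sigma^*a_2\cdots\Sigma^*a_n\Sigma^*$ in the subword order, whereas unwinding the interleaving definition of $\po$ gives ${\uparrow}w=\lowerletters{a_1}^*\,a_1\,\lowerletters{a_2}^*\,a_2\cdots\lowerletters{a_n}^*\,a_n$ in the priority order, with ${\uparrow}\epsilon=\Sigma^*$ in both cases. Each of these is a concatenation of regular languages and hence regular; therefore $\Sigma^*\setminus S$ is a finite union of regular languages, hence regular, and so is $S$.

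I do not anticipate a real obstacle here: the subword case is exactly Haines's theorem, and the priority case runs through the identical template once \cref{prioritywqo} is available. The only points needing a little care are reading off the displayed closed form for the priority filter directly from the definition of $\po$ (each interleaved block $v_i$ lies over $\lowerletters{a_i}$ and immediately precedes $a_i$, and there is no trailing block after $a_n$), and checking that the finiteness of $F$ genuinely uses the \wqo property rather than mere reflexivity and transitivity.
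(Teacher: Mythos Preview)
Your proposal is correct and follows essentially the same approach as the paper's own sketch: complement to an upward-closed set, use the \wqo property to get a finite basis, and check that each principal filter is regular. The only cosmetic difference is that the paper alludes to ``a simple automaton construction'' for the filters whereas you write down explicit regular expressions, which is equivalent.
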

For the subword order, this was shown by Haines~\cite{HAINES196994}. The same idea applies to the priority ordering: A downward closed set is the complement of an upward closed set. Therefore, and since every upward closed set in a well-quasi ordering has finitely many minimal elements, it suffices to show that the set of all words above a single word is a regular language. This, in turn, is shown using a simple automaton construction. In \cref{app:block}, we prove an analogue of this for the block ordering (\cref{generalizedblockregular}).

We stress that \cref{subwordregular} is not effective: It does not guarantee that finite automata for downward closures can be computed for any given language. In fact, there are language classes for which they are not computable, such as reachability sets of \textit{lossy channel systems} and \textit{Church-Rosser languages} \cite{10.1007/10719839_37, GRUBER2007167}.  Therefore, our focus will be on the question of how to effectively compute automata for priority downward closures.

\section{The Block Order}\label{sec:block-order}

We first define the block order formally and then give the intuition behind the definition.
Let $ \Sigma $ be a finite alphabet, and $\priority= [0,d] $ be a set of priorities with a total order $ \ld $. 
Then for $ u,v\in \Sigma^* $, where maximum priority occurring among $ u $ and $ v $ is $ p $, we say $ u\bo v $, if
\begin{enumerate}[i.]
	\item if $ u,v\in \equalletters{p}^* $, and $ u\so v $, or
	\item if 
	\begin{eqnarray*}
		u&=& u_0x_0u_1x_1\cdots x_{n-1}u_{n}\\
		\text{and, }v&=& v_0y_0v_1y_1\cdots y_{m-1}v_{m}
	\end{eqnarray*}
	where $  x_0,\ldots x_{n-1},y_0,\ldots, y_{m-1} \in \equalletters{p}$, and for all $i\in[0,n]$, we have $ u_i,v_i\in \lowerletters{p-1}^*$ (the $u_i$ and $v_i$ are called \emph{sub-$ p $} blocks), and there exists a strictly monotonically increasing map $ \phi:[0,n]\rightarrow [0,m] $, which we call the \emph{witness block map}, such that 
	\begin{enumerate}
		\item $ u_i\bo v_{\phi(i)} $, $ \forall i $,\label{def:generalized is recursive}
		\item $ \phi(0)= 0 $, \label{def:first-to-first}
		\item $ \phi(n) =m $, and\label{def:last-to-last}
		\item $ x_i\so v_{\phi(i)}y_{\phi(i)}v_{\phi(i)+1}\cdots v_{\phi(i+1)} $, $ \forall i \in [0,n-1] $.\label{def:generalized is subword}
	\end{enumerate}
\end{enumerate}

Intuitively, we say that $ u $ is \emph{block smaller} than $ v $, if either
\begin{itemize}
	\item both words have letters of same priority, and $ u $ is a subword of $ v $, or,
	\item the largest priority occurring in both words is $ p $. Then we split both words along the priority $ p $ letters, to obtain sequences of sub-$ p $ blocks of words, which have words of strictly less priority. Then by item \ref{def:generalized is recursive}, we embed the sub-$ p $ blocks of $ u $ to those of $ v $, such that they are recursively block smaller. Then with items \ref{def:first-to-first} and \ref{def:last-to-last}, we ensure that the first (and last) sub-$ p $ block of $ u $ is embedded in the first (resp., last) sub-$ p $ block of $ v $. We will see later that this constraint allows the order to be multiplicative. Finally, by item \ref{def:generalized is subword}, we ensure that the letters of priority $ p $ in $ u $ are preserved in $ v $, i.e. every $ x_i $ indeed occurs between the embeddings of the sub-$ p $ block $ u_{i} $ and $ u_{i+1} $.
\end{itemize}

\begin{example}\label{ex:block-order}
	Consider the alphabet $ \s= \{ 0^a,0^b,1^a,1^b,2^a,2^b   \} $ with priority set $ \priority=[0,2] $ and $ \equalletters{i}=\{i^a, i^b\} $. 
	In the following examples, the {\colred color} helps to identify the largest priority occurring in the words.
	First, notice that $ \epsilon\bo 0^a\bo 0^a0^b $, and hence 
	\begin{align*} &{\colred 1^b}0^a\bo0^a{\colred 1^b}0^a0^a1^a0^a0^b, &&\text{but} && {\colred 1^b}0^a\not\bo0^a{\colred 1^b}0^a0^a{\colred 1^a}0^b0^b. \end{align*}
	This is because $ 0^a\not\bo 0^b0^b $, i.e. the last sub-$ 1 $ block of the former word cannot be mapped to the last sub-$ 1 $ block of the latter word. As another example, we have 
	\begin{align*} &{\colred 2^a}1^b0^a\bo 0^a{\colred 2^a}0^a1^b0^a0^a1^a0^a0^b, &&\text{but} &  {\colred 2^a}1^b0^a\not\bo 0^a{\colred 2^b}0^a1^b0^a0^a1^a0^a0^b. \end{align*}
	This is because $ 2^a $ does not exist in the latter word, violating item \ref{def:generalized is subword}. 
	Finally, notice that
	\begin{equation} 1^a1^b\not\bo 1^a {\colred 2^a} 1^b, \label{no-insertions}\end{equation}
	because the sub-$2$ block $1^a1^b$ would have to be mapped to a single sub-$2$ block in the right-hand word; but none of them can accomodate $1^a1^b$.
\end{example}

Note that by items \ref{def:generalized is subword} and \ref{def:generalized is recursive}, we have that $ u\bo v\implies u\so v $, for all $ u,v\in \s^* $. Then there exists a position mapping $ \rho $ from $ [0,|u|] $ to $ [0,|v|] $ such that $ u[i]=v[\rho(i)] $, for all $ i $. We say that a position mapping \emph{respects block order} if for all $ i $, $ v[\rho(i),\rho(i+1)] $ contains letters of priorities smaller than $ u[i] $ and $ u[i+1] $. It is easy to observe that if $ u\bo v $, then there exists a position mapping from $ u $ to $ v $ respecting the block order. 
The following is a straightforward repeated application of Higman's Lemma~\cite{10.1112/plms/s3-2.1.326} (see \cref{app:block}).
\begin{restatable}{theorem}{thmgeneralizedblockwqo}\label{thm:generalizedblockwqo}
	$ (\Sigma^*, \bo) $ is a \wqo.
\end{restatable}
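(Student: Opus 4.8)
The plan is to prove the statement by induction on the number of priority levels $d$, i.e., on $|\priority|$. The induction will be set up so that the "base case" and the "inductive step" are in fact the same argument: at each level $p$, a word $v$ with maximal priority $p$ is encoded as a word over an auxiliary alphabet whose letters are (a) the priority-$p$ letters $\equalletters{p}$ and (b) "block symbols" standing for sub-$p$ blocks, which are themselves words with maximal priority at most $p-1$. The key point is that the block order on the original words should correspond, via this encoding, to a suitable order on the encoded words to which Higman's Lemma applies.

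First I would make the induction hypothesis precise: for every $p \le d$, the set $W_{\le p}$ of words over $\lowerletters{p}$, ordered by $\bo$ restricted to such words, is a \wqo. For $p$ such that $\equalletters{p}$ is the only nonempty priority class among $\equalletters{0},\dots,\equalletters{p}$ — or more simply, as the genuine base case — when all words lie in $\equalletters{p}^*$, clause (i) says $\bo$ coincides with $\so$, which is a \wqo by Higman's Lemma. For the inductive step, assume $W_{\le p-1}$ is a \wqo. Given an infinite sequence $v^{(0)}, v^{(1)}, \dots$ in $W_{\le p}$, I first pass to an infinite subsequence in which all words have the same maximal priority; if that priority is $< p$ we are done by the hypothesis applied to $W_{\le p-1}$ (note $W_{\le p-1}$ is \wqo and any \wqo stays a \wqo under restriction to a subset), so assume it is exactly $p$. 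Write each $v^{(j)} = v^{(j)}_0 y^{(j)}_0 v^{(j)}_1 \cdots y^{(j)}_{m_j - 1} v^{(j)}_{m_j}$ with $y^{(j)}_\ell \in \equalletters{p}$ and each $v^{(j)}_\ell \in W_{\le p-1}$. Now I encode $v^{(j)}$ as the finite word $\beta(v^{(j)}) := \langle v^{(j)}_0 \rangle\, y^{(j)}_0\, \langle v^{(j)}_1 \rangle \cdots y^{(j)}_{m_j-1}\, \langle v^{(j)}_{m_j}\rangle$ over the (infinite) alphabet $\Gamma := \{\langle u \rangle : u \in W_{\le p-1}\} \cup \equalletters{p}$, and I equip $\Gamma$ with the quasi-order that makes $\langle u\rangle \le \langle u'\rangle$ iff $u \bo u'$, makes $\equalletters{p}$ discretely ordered by equality, and makes elements of the two parts incomparable. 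By the inductive hypothesis $(\Gamma, \le)$ is a \wqo, so by Higman's Lemma the word extension order $\so_\Gamma$ on $\Gamma^*$ is a \wqo; hence the infinite sequence $\beta(v^{(0)}), \beta(v^{(1)}), \dots$ has an increasing pair $\beta(v^{(i)}) \so_\Gamma \beta(v^{(j)})$.

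The main obstacle is that $\beta(v^{(i)}) \so_\Gamma \beta(v^{(j)})$ does not by itself give $v^{(i)} \bo v^{(j)}$: a Higman embedding of encoded words need not send the first block-symbol to the first and the last to the last (clauses \ref{def:first-to-first} and \ref{def:last-to-last}), and it need not preserve the priority-$p$ letters in a way matching clause \ref{def:generalized is subword} — in $\Gamma^*$ a priority-$p$ letter $y^{(i)}_\ell$ must be matched to an identical priority-$p$ letter in $v^{(j)}$, whereas clause \ref{def:generalized is subword} of the block order only requires $x_\ell \so v_{\phi(\ell)}y_{\phi(\ell)}\cdots v_{\phi(\ell+1)}$, a genuinely weaker demand. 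To handle this I would strengthen the combinatorial bookkeeping: rather than using the bare subword order $\so_\Gamma$, I would apply Higman's Lemma together with a Ramsey/pigeonhole refinement on the first and last blocks (pass to a subsequence where all first blocks are pairwise $\bo$-comparable in a chain, likewise all last blocks, which is possible since $W_{\le p-1}$ is a \wqo and hence a subsequence has an increasing chain), so that the first-to-first and last-to-last constraints can be met; and I would note that matching a priority-$p$ letter to an identical priority-$p$ letter is a \emph{stronger} requirement than clause \ref{def:generalized is subword} asks, so any $\Gamma$-embedding that respects first/last blocks does witness $u \bo v$ — the slack only goes in our favor. Concretely: given $\beta(v^{(i)}) \so_\Gamma \beta(v^{(j)})$ with first and last blocks aligned, read off $\phi$ from the positions to which the block-symbols of $v^{(i)}$ are sent; clause \ref{def:generalized is recursive} holds because $\langle u \rangle \le \langle u'\rangle$ in $\Gamma$ means $u \bo u'$; clause \ref{def:generalized is subword} holds because each $x_\ell = y^{(i)}_\ell$ is literally one of the letters of $v^{(j)}$ lying between the images of block $\ell$ and block $\ell+1$. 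This yields $v^{(i)} \bo v^{(j)}$ and completes the induction. I expect the write-up to lean on a short lemma isolating the Ramsey-style alignment of first/last blocks, since that is the one genuinely new ingredient beyond "Higman plus induction"; the appendix reference (\cref{app:block}) suggests the authors indeed relegate exactly this kind of bookkeeping there.
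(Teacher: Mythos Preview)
Your proposal is correct and follows the same inductive skeleton as the paper (induction on the top priority level, Higman's Lemma as the engine). The one genuine difference is how the first-to-first and last-to-last constraints (\ref{def:first-to-first}, \ref{def:last-to-last}) are handled. You fix them \emph{a posteriori}: after encoding words in $\Gamma^*$ and obtaining a Higman embedding, you thin the sequence (using that $W_{\le p-1}$ is a \wqo) so that first blocks and last blocks form increasing chains, and then manually reset $\phi(0)=0$ and $\phi(n)=m$. This works, and your observation that matching priority-$p$ letters exactly is stronger than clause~\ref{def:generalized is subword} is the right justification that nothing breaks. The paper instead bakes these constraints into the \emph{encoding}: it maps $u_0 y_0 u_1 \cdots y_{k-1} u_k$ to the triple $((u_0,y_0),\,(u_1,y_1)\cdots(u_{k-1},y_{k-1}),\,u_k)$ in the product order $(\lowerletters{p-1}^*\times\equalletters{p})\times(\lowerletters{p-1}^*\times\equalletters{p})^*\times\lowerletters{p-1}^*$, which is a \wqo by Dickson~$\times$~Higman~$\times$~Dickson, and then invokes the monomorphism lemma. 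Separating the first pair and the last block as product components forces first-to-first and last-to-last automatically, so no Ramsey-style refinement is needed. Your route is slightly more elementary (no explicit appeal to a monomorphism lemma) but needs the extra subsequence bookkeeping you flagged; the paper's route is shorter and hides that bookkeeping in the product structure.
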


In fact, the block order is multiplicative, i.e. for all $ u,v,u',v'\in\s^* $ such that $ u\bo u' $ and $ v\bo v' $, it holds that $ uv\bo u'v' $.
\begin{lemma}\label{generalizedblockmultiplicative}
	$ (\Sigma^*,\bo) $ is a multiplicative \wqo.
\end{lemma}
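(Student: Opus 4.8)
The plan is to prove this by induction on the highest priority $d$ occurring in $\priority$ (equivalently, the number of priority levels). For the base case $d=0$ there is a single priority, so on $\s^*=\equalletters{0}^*$ the relation $\bo$ is literally the subword order $\so$, which is multiplicative by Higman's lemma as already recorded in \cref{sec:preliminaries}. For the inductive step, suppose $u\bo u'$ and $v\bo v'$, and let $p$ be the highest priority occurring in $uvu'v'$; note that everything of strictly lower priority is governed by the induction hypothesis.

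First I would handle the principal case, in which each of $u,u',v,v'$ contains at least one letter of priority $p$. Then each word splits along its priority-$p$ letters, say $u=u_0x_0u_1\cdots x_{n-1}u_n$ and $v=v_0s_0v_1\cdots s_{k-1}v_k$ (and $u',v'$ analogously with sub-$p$ block sequences $u_0',\dots,u_m'$ and $v_0',\dots,v_l'$), with witness block maps $\phi_u\colon[0,n]\to[0,m]$ for $u\bo u'$ and $\phi_v\colon[0,k]\to[0,l]$ for $v\bo v'$. The key structural observation is that the sub-$p$ blocks of $uv$ are $u_0,\dots,u_{n-1},\ u_nv_0,\ v_1,\dots,v_k$: the last block of $u$ and the first block of $v$ fuse into one word of $\lowerletters{p-1}^*$; symmetrically the blocks of $u'v'$ are $u_0',\dots,u_{m-1}',\ u_m'v_0',\ v_1',\dots,v_l'$. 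I would then define the combined map $\phi\colon[0,n+k]\to[0,m+l]$ by $\phi(i)=\phi_u(i)$ for $i\le n$ and $\phi(n+j)=m+\phi_v(j)$ for $0\le j\le k$. This is well defined at $i=n$ precisely because $\phi_u(n)=m$ by \ref{def:last-to-last} and $\phi_v(0)=0$ by \ref{def:first-to-first} — this is where those two conditions earn their keep.

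It then remains to check that $\phi$ witnesses $uv\bo u'v'$. Strict monotonicity follows from that of $\phi_u$ and $\phi_v$ together with $\phi_u(n-1)<m<m+\phi_v(1)$; condition \ref{def:first-to-first} holds since $\phi(0)=\phi_u(0)=0$, and condition \ref{def:last-to-last} since $\phi(n+k)=m+\phi_v(k)=m+l$. Condition \ref{def:generalized is recursive} for every block except the fused one is inherited verbatim from $\phi_u$ and $\phi_v$; for the fused block I need $u_nv_0\bo u_m'v_0'$, which follows from $u_n\bo u_m'$ and $v_0\bo v_0'$ (the images of the last, resp.\ first, block) via the induction hypothesis, since these four words use only priorities $\le p-1$. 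For condition \ref{def:generalized is subword}: each $x_i$ is still a subword of the corresponding stretch of $u'v'$, because passing from $u'$ to $u'v'$ only appends letters at the right-hand end of the last block and subword-embeddings survive inserting extra letters; symmetrically each $s_j$ stays covered because one only prepends the letters of $u_m'$ in front of the relevant stretch of $v'$.

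The remaining cases are degenerate: one of $u,u',v,v'$ may be empty, may lie in $\equalletters{p}^*$ and be compared via item~(i), or may contain no priority-$p$ letter while the word it is concatenated with does. In each of these, one factor is absorbed entirely into a single boundary block of the concatenation (the last block of $u$ or the first block of $v$), and the argument collapses either to the principal case, to multiplicativity of $\so$, or to the induction hypothesis applied to that boundary block. The main obstacle — and really the only content — is the junction: showing that the two witness block maps splice into one well-defined strictly increasing map sending the fused block to the fused block, and that the fused block's comparison is supplied by the induction hypothesis. Once that splice is in place, verifying \ref{def:generalized is recursive}–\ref{def:generalized is subword} is routine bookkeeping.
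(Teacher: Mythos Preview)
Your proof is correct and follows essentially the same approach as the paper's: induction on the number of priority levels, splicing the two witness block maps at the fused boundary block, and invoking the induction hypothesis to obtain $u_nv_0\bo u'_mv'_0$. If anything, your write-up is more careful than the paper's, which silently assumes the principal case and has some minor indexing slips; your explicit handling of the degenerate cases (a factor with no priority-$p$ letter collapsing into a single boundary block) is a welcome addition.
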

\begin{proof}
	For singleton $ \priority $, the result trivially holds because it coincides with the subword order. Let $ (\lowerletters{p-1}^*,\bo) $ be multiplicative. Now we show that $ (\lowerletters{p}^*,\bo) $ is multiplicative. To this end, let $ u\bo u' $, $ v\bo v' $, and $ \phi, \psi $ be the witnessing block maps respectively. We assume
	\begin{eqnarray*}
		u&=& u_0x_0u_1x_1u_2x_2\cdots x_{k-1}u_k\\
		v&=& v_0y_0v_1y_1v_2y_2\cdots y_{l-1}v_l\\
		u'&=& u'_0x'_0u'_1x'_1u'_2x'_2\cdots x'_{k-1}u'_{k'}\\
		v'&=& v'_0y'_0v'_1y'_1v'_2y'_2\cdots y'_{l-1}v'_{l'}
	\end{eqnarray*}
	where $ x_i,y_i,x_i',y_i'\in \equalletters{p} $. 
	Consider the function $ \delta\colon [0,k+l-1]\rightarrow [0,k'+l'-1]$ with
	\[ i\mapsto \begin{cases}
		\phi(i), \text{ if } 1\leq i\leq k\\
		\psi(i-k+1), \text{ if } k< i\leq k+l-1
	\end{cases} \]
	Since the $ k^{th} $ sub-$ p $ block of $ u $ and the $ 1^{st} $ sub-$ p $ block of $ v $ combines in $ uv $ to form one sub-$ p $ block, we have $ k+l-1 $ sub-$ p $ blocks. Similarly, $ u'v' $ has $ k'+l'-1 $ sub-$ p $ blocks. And hence $ u_kv_1\bo u'_{k'}v'_1 $, by induction hypothesis. The recursive embedding is obvious for other sub-$ p $ blocks. We also have that $ \delta(0)=0 $ and $ \delta(k+l-1)=k'+l'-1 $. By monotonicity of $ \phi $ and $ \psi $, $ \delta $ is also strictly monotonically increasing. Hence, $ \delta $ witnesses $ uv\bo u'v' $.
\end{proof}

\subparagraph{Pumping} 
In the subword ordering, an often applied property is that for any words
$u,v,w$, we have $uw\so uvw$, i.e.\ inserting any word leads to a superword.
This is not true for the block ordering, as we saw in \cref{ex:block-order},
\eqref{no-insertions}. However, one of our key observations about the block
order is the following property: If the word we insert is just a repetition of
an existing factor, then this yields a larger word in the block ordering. This
will be crucial for our downward closure construction for one-counter automata
in \cref{sec:oca}. 
\begin{restatable}[Pumping Lemma]{lemma}{arbitrarypumpinglemma}\label{generalizedblockrepeat}
	For any $u,v,w\in\Sigma^*$, we have $uvw\bo uvvw$.
\end{restatable}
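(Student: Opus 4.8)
The plan is to prove this by induction on the number of priority levels, i.e.\ on the maximum priority $p$ occurring in $uvw$, mirroring the recursive structure of the block order definition. The base case ($p=0$, or more generally when all letters have the same priority) is immediate: the block order coincides with the subword order on words over $\equalletters{p}^*$, and there $uvw\so uvvw$ always holds since $uvvw$ is obtained by not dropping the extra copy of $v$. For the inductive step, I would fix $p$ to be the maximum priority appearing in $uvw$ and split every word along its priority-$p$ letters into sub-$p$ blocks. The subtlety is that $v$ need not be ``aligned'' with the block structure of $uvw$: $v$ may start and end in the middle of sub-$p$ blocks of $uvw$.

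\textbf{The main technical maneuver} is to handle this misalignment by rewriting the factorization. Write $uvw$ with its sub-$p$ block decomposition. The factor $v$ occupies a contiguous stretch; say it begins inside (or at the start of) the sub-$p$ block $z_a$ of $uvw$ and ends inside (or at the end of) the sub-$p$ block $z_b$. Then I can write $u = u' z_a'$, $v = z_a'' (\text{full blocks } z_{a+1}\cdots z_{b-1}) z_b'$, $w = z_b'' w'$ where $z_a = z_a' z_a''$ and $z_b = z_b' z_b''$ — unless $a=b$, in which case $v$ sits entirely within one sub-$p$ block and the priority-$p$ letters of $uvw$ are untouched by the duplication; that sub-case is essentially the ``local'' statement that within a single block of lower-priority letters we may duplicate a factor, which is exactly the induction hypothesis applied to that block. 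In the generic case $a<b$, duplicating $v$ produces $uvvw$ whose sub-$p$ block structure is: the blocks $z_0,\dots,z_{a-1}$ unchanged; then $z_a' z_a'' = z_a$; then $z_{a+1},\dots,z_{b-1}$; then $z_b' z_b'' z_a' z_a''$ (the junction where the end of the first copy of $v$ meets the start of $w\cdots$? — careful: it is $z_b' $ followed by $z_b''$, but $z_b''$ is the start of $w$'s contribution, so actually the new block at the seam is $z_b' z_b'' z_a'$)\dots

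Let me restate this more cleanly. \textbf{I would define the witness block map explicitly.} Enumerate the sub-$p$ blocks of $uvw$ as $B_0,\dots,B_N$ and of $uvvw$ as $C_0,\dots,C_M$. The second copy of $v$ inserts, roughly, $b-a$ extra block-boundaries (the priority-$p$ letters inside $v$ appear twice), plus it perturbs two blocks at the seams. Concretely, $uvvw$ looks like $B_0\cdots B_{a-1}\,B_a\,B_{a+1}\cdots B_{b-1}\,(B_b B_a')\,B_{a+1}\cdots B_{b-1}\,B_b\cdots B_N$ where $B_a'$ denotes the suffix of $B_a$ lying in $v$ (so $B_a = B_a^{\mathrm{pre}} B_a'$ with $B_a^{\mathrm{pre}}\subseteq u$) — the point being that every original block $B_i$ either appears verbatim in the list of $C_j$'s, or is a subword of some $C_j$ (the seam blocks $B_b B_a'$ contains $B_b$ as a subword since $B_b\bo B_b$ and we only appended letters, using that the block order restricted to $\equalletters{p-1}^*$ is multiplicative, \cref{generalizedblockmultiplicative}, via induction hypothesis and $\epsilon\bo B_a'$). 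Define $\phi$ to send block $B_i$ to its verbatim occurrence among the $C_j$ (for $i\notin[a,b]$ take the unique copy; for $i\in[a,b-1]$ take the \emph{first} copy; for $i=b$ take the last copy). Then: $\phi$ is strictly increasing; $\phi(0)=0$ and $\phi(N)=M$ since the outermost blocks are untouched, giving conditions \ref{def:first-to-first},\ref{def:last-to-last}; condition \ref{def:generalized is recursive} ($B_i\bo C_{\phi(i)}$) holds because each target block either equals $B_i$ or contains it as a subword with all inserted letters having priority $<p$ — and here I invoke the induction hypothesis together with \cref{generalizedblockmultiplicative} restricted to $\equalletters{p-1}^*$; and condition \ref{def:generalized is subword} (each priority-$p$ letter $x_i$ of $uvw$ embeds into the stretch of $uvvw$ between $C_{\phi(i)}$ and $C_{\phi(i+1)}$) holds because the priority-$p$ letters only got duplicated, never deleted, so the required letter is still present in the right range.

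\textbf{I expect the main obstacle to be bookkeeping the seam blocks and the case analysis on whether $v$ begins/ends at a block boundary versus strictly inside a block} — in particular making sure the map $\phi$ remains strictly monotone and that conditions \ref{def:first-to-first} and \ref{def:last-to-last} are not violated when $a=0$ or $b=N$ (i.e.\ when $v$ touches the very first or last sub-$p$ block, which can itself be further split by priority-$p$ letters occurring inside $v$). The degenerate sub-case $a=b$, where the duplication is invisible at level $p$, should be dispatched by a clean appeal to the induction hypothesis on the single affected sub-$p$ block; I would handle that first to get it out of the way. Everything else is a matter of writing down $\phi$ and verifying the four conditions, each of which reduces either to ``we only added/duplicated letters, never removed them'' (conditions \ref{def:last-to-last}, \ref{def:generalized is subword}, and monotonicity) or to the induction hypothesis plus multiplicativity of $\bo$ on the lower-priority sub-alphabet (condition \ref{def:generalized is recursive}).
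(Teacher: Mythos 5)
Your proof is correct and takes essentially the same route as the paper: induction on the number of priorities, with the degenerate sub-case $a=b$ (where $v$ contains no top-priority letter) dispatched by the induction hypothesis, and the generic sub-case handled by an explicit witness block map that skips the inserted blocks. The paper avoids the seam-block bookkeeping you flag as a concern by factorizing $v$ itself at its top-priority letters as $v=v_0x_1\cdots x_mv_m$, so the skipped segment $v_mv_0x_1\cdots v_{m-1}x_m$ sits between two verbatim occurrences and every sub-$p$ block of $uvw$ maps to an \emph{identical} block of $uvvw$ --- which, incidentally, means your extra appeal to \cref{generalizedblockmultiplicative} for the seam blocks in the generic case is not actually needed.
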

Before we prove \cref{generalizedblockrepeat}, let us note that by applying
\cref{generalizedblockrepeat} multiple times, this implies that we can also
repeat multiple factors. For instance, if $w=w_1w_2w_3w_4w_5$, then $w\bo
w_1w_2^2w_3w_4^3w_5$. \Cref{fig:sborepeat} shows an example on how to choose
the witness block map.
\begin{proof}
	We proceed by induction on the number of priorities. If there is just a
	single priority (i.e. $\priority=\{0\}$), then $\bo$ coincides with $\so$
	and the statement is trivial. Let us assume the \lcnamecref{generalizedblockrepeat} is established for words with up to $n$ priorities. We distinguish two cases.
	
	\begin{itemize}
		\item Suppose $v$ contains only letters of priorities $[0,n]$. Then repeating
	$v$ means repeating a factor inside a sub-$(n+1)$ block, which is a
	word with priorities in $[0,n]$. Hence, the statement follows by
	induction: Formally, this means we can use the embedding mapping that
	sends block $i$ of $uvw$ to block $i$ of $uvvw$.

	\item Suppose $v$ contains a letter of priority $n+1$. write
		$v=v_0x_1v_1\cdots x_mv_m$, where $x_1,\ldots,x_m$ are the letters of
	priority $n+1$ in $v$ and $v_0,\ldots,v_m$ are the sub-$(n+1)$ blocks
	of $v$. Then:
	\begin{align*} uvw&=uv_0x_1\cdots v_{m-1}x_mv_mw, &
		uvvw&=uv_0x_1\cdots v_{m-1}x_m\underbrace{v_mv_0x_1\cdots v_{m-1}x_m}_{\text{skipped}}v_mw. \end{align*}
	The idea is simple: Our witness block map just skips the $m$ sub-$(n+1)$ blocks
	inside of $v_mv_0x_1\cdots v_{m-1}x_m$. Thus, the sub-$(n+1)$ blocks in $uv_0x_1\cdots v_{m-1}x_m$ are mapped to the same blocks in $uv_0x_1\cdots v_{m-1}x_m$, and the sub-$(n+1)$ blocks in $v_mw$ are mapped to the same blocks in $v_mw$.
	This is clearly a valid witness block map, since the first (resp.\
	last) sub-$(n+1)$ block is mapped to the first (resp.\ last), and each
	sub-$(n+1)$ block is mapped to an identical sub-$(n+1)$ block. \qedhere
	\end{itemize}

\end{proof}
	\begin{figure}[t]
		\centering
		\begin{tikzpicture}
			\matrix (n) [matrix of math nodes, row sep=1.7em,
			column sep=0.5,
			row 1 column 4/.style={red},
			row 1 column 5/.style={red},
			row 1 column 6/.style={red},
			row 1 column 7/.style={red},
			row 1 column 10/.style={red},
			row 1 column 11/.style={red},
			row 1 column 12/.style={red},
			row 1 column 13/.style={red},
			row 1 column 14/.style={red},
			row 1 column 15/.style={red},
			row 1 column 16/.style={red},
			row 1 column 17/.style={red},
			row 1 column 18/.style={red},
			row 2 column 8/.style={red},
			row 2 column 9/.style={red},
			row 2 column 12/.style={red},
			row 2 column 13/.style={red},
			row 2 column 14/.style={red}
			]{
				w'=&1&2&0&1&0&1&2&1&1&2&1&1&2&1&1&2&1&0\\
				w=&&&&&1&2&0&1&2&1&1&2&1&0&&&&\\
			};
			
			\sqbrctdown{-}{n-1-2}{n-1-2}{}{};
			\sqbrctdown{-}{n-1-4}{n-1-7}{}{};
			\sqbrctdown{-}{n-1-9}{n-1-10}{}{};
			\sqbrctdown{-}{n-1-12}{n-1-13}{}{};
			\sqbrctdown{-}{n-1-15}{n-1-16}{}{};
			\sqbrctdown{-}{n-1-18}{n-1-19}{}{};
			\sqbrctdown{-}{n-2-6}{n-2-6}{}{};
			\sqbrctdown{-}{n-2-8}{n-2-9}{}{};
			\sqbrctdown{-}{n-2-11}{n-2-12}{}{};
			\sqbrctdown{-}{n-2-14}{n-2-15}{}{};

			\draw [-Latex, gray!60] (n-2-6.north)-- (n-1-2.south);
			\draw [-Latex, gray!60] (n-2-8.north east)-- (n-1-5.south east);
			\draw [-Latex, gray!60] (n-2-11.north east)-- (n-1-9.south east);
			\draw [-Latex, gray!60] (n-2-14.north east)-- (n-1-19.south west);
		\end{tikzpicture}

		\caption[Generalized block order repetition mapping]{Here $ \Sigma=[0,2] $, $ \priority=[0,2] $, and $ A_i=\{i\} $, $ w=12(01)21(121)0 $ and $ w'=12(01)^221(121)^30 $. The repeated segments are marked in {\color{red} red}, and the arrows denote the witness block map.}
		\label{fig:sborepeat}
	\end{figure}
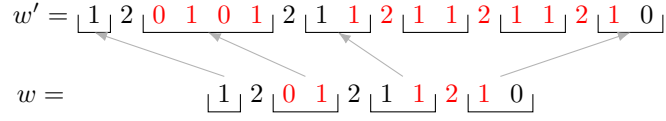

\subparagraph{Regular downward closures}
As for $\so$ and $\po$, we define $L\bd=\{u\in\Sigma^* \mid \exists v\in L\colon u\bo v\}$ for any $L\subseteq \Sigma^*$.

\begin{restatable}{lemma}{lemgeneralizedblockregular}\label{generalizedblockregular}
	For every $L\subseteq\Sigma^*$, $L\bd$ is a regular language.
\end{restatable}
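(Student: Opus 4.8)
The plan is to use exactly the same strategy that establishes \cref{subwordregular} for $\so$ and $\po$, namely: a downward-closed set is the complement of an upward-closed set; by \cref{thm:generalizedblockwqo} the block order is a \wqo, so every upward-closed set (in particular $\Sigma^*\setminus L\bd$) has finitely many minimal elements; hence $L\bd$ is a finite intersection of complements of principal filters $\mathord{\uparrow} w = \{v : w \bo v\}$; and regular languages are closed under complement and finite intersection. So the whole lemma reduces to the single claim: for a fixed word $w\in\Sigma^*$, the set $\mathord{\uparrow} w = \{v\in\Sigma^* : w\bo v\}$ is regular. (This is exactly the analogue promised in the text for the block ordering; the argument is self-contained and does not need effectiveness.)

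First I would prove the principal-filter claim by induction on the number of priority levels $d$. For $d=0$ the block order coincides with $\so$, and $\mathord{\uparrow} w$ for the subword order is recognized by the standard $(|w|+1)$-state automaton that reads letters of $v$ and advances a pointer through $w$ whenever it sees the next needed letter — this is precisely Haines's construction. For the inductive step, fix $w$ with maximal priority $p = d$ and write $w = u_0x_0u_1x_1\cdots x_{n-1}u_n$ with $x_i\in\equalletters{p}$ and each $u_i\in\lowerletters{p-1}^*$. The defining conditions (i)–(ii) of $\bo$ say: either $v\in\equalletters{p}^*$ and the subword case applies directly (again regular), or $v$ also has maximal priority $p$, $v = v_0y_0v_1y_1\cdots y_{m-1}v_m$, and there is a strictly monotone $\phi:[0,n]\to[0,m]$ with $\phi(0)=0$, $\phi(n)=m$, $u_i\bo v_{\phi(i)}$ for all $i$, and $x_i\so v_{\phi(i)}y_{\phi(i)}v_{\phi(i)+1}\cdots v_{\phi(i+1)}$ for all $i\in[0,n-1]$. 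By the induction hypothesis, for each $i$ the set $\{z\in\lowerletters{p-1}^* : u_i\bo z\}$ is regular. Also, for a fixed letter $x_i$, the set of words in $\lowerletters{p}^*$ that contain $x_i$ as a subword is trivially regular (it is $\Sigma^*x_i\Sigma^*$ restricted to the relevant alphabet). I would then build an NFA for $\mathord{\uparrow} w$ that scans $v$ from left to right, maintaining as state the index $i$ of the sub-$p$ block of $w$ that it is currently trying to match: it runs the (product of) automata for "$u_i$ is a block-subword of the current sub-$p$ block of $v$" and for "$x_i$ still needs to be seen"; on reading a letter of priority $p$ it may either stay (treating that $\equalletters{p}$-letter as part of the stretch that is allowed to contain $x_i$) or, once the $u_i$-automaton has accepted and the pending $x_i$ has been matched by some earlier or the current $\equalletters{p}$-letter, advance to block $i+1$; it accepts when it has advanced past block $n$ with the last sub-$p$ block of $v$ having matched $u_n$. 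The conditions $\phi(0)=0$ and $\phi(n)=m$ are enforced by starting the matching at block $0$ at the very beginning of $v$ and by requiring that the match of $u_n$ extends to the end of the input; strict monotonicity of $\phi$ is automatic since we only ever advance the block index. One subtlety to handle carefully is condition (iv): $x_i$ may be matched by an $\equalletters{p}$-letter of $v$ lying anywhere between the image of block $i$ and the image of block $i{+}1$ (not necessarily immediately at $y_{\phi(i)}$), so the automaton must allow the "pending $x_i$" obligation to be discharged by any $\equalletters{p}$-letter read while still in the $i$-th phase, and it must not advance to phase $i{+}1$ until $u_i$ is matched \emph{and} $x_i$ is matched. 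After this, regularity of $L\bd$ follows from the complementation/finite-intersection argument of the first paragraph.

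The main obstacle I anticipate is getting the automaton for a single principal filter $\mathord{\uparrow} w$ exactly right — in particular, correctly interleaving the recursive sub-automata for the sub-$p$ blocks with the bookkeeping for the priority-$p$ letters, and making sure the boundary conditions $\phi(0)=0$, $\phi(n)=m$ are faithfully captured (the "first block at the start, last block to the end" discipline). Once that automaton is built, everything else is routine closure of regular languages. Note also that, consistent with the remark following \cref{subwordregular}, this proof is non-effective: it uses the finiteness of the set of minimal elements of $\Sigma^*\setminus L\bd$ without computing it, so it only asserts existence of the finite automaton for $L\bd$, which is all that \cref{generalizedblockregular} claims.
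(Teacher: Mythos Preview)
Your proposal is correct and follows exactly the strategy the paper sketches in the main text immediately after the statement of \cref{generalizedblockregular}: pass to the upward-closed complement, use the \wqo property (\cref{thm:generalizedblockwqo}) to get finitely many minimal elements, and show that each principal filter $\{v:w\bo v\}$ is regular via an inductive automaton construction. Your handling of the recursive block structure and of the boundary conditions $\phi(0)=0$, $\phi(n)=m$ is sound.

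It is worth noting, however, that the paper's actual proof in the appendix does \emph{not} carry out the principal-filter construction. Instead it takes a shortcut: since $\bo$ is shown to be \emph{multiplicative} (\cref{generalizedblockmultiplicative}), the Ehrenfeucht--Haussler--Rozenberg characterization (\cref{multiplicativewqoregular}: a language is regular iff it is upward closed with respect to some multiplicative \wqo) immediately gives that the complement $\Sigma^*\setminus L\bd$ is regular, hence so is $L\bd$. This avoids building any automaton at all. Your argument is more elementary and self-contained (it does not invoke an external characterization theorem and does not even need multiplicativity, only the \wqo property), whereas the paper's route is a one-line application once multiplicativity is in hand. Both are valid; they simply trade explicitness for brevity.
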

For the proof of \cref{generalizedblockregular}, one can argue as mentioned above: The complement $\Sigma^*\setminus (L\bd)$ of $L\bd$ is upward closed. And since $\bo$ is a WQO, $\Sigma^*\setminus (L\bd)$ has finitely many minimal elements. It thus remains to show that for each word $w\in\Sigma^*$, the set of words $\bo$-larger than $w$ is regular, which is a simple exercise. Details can be found in \cref{app:block}.

\subparagraph{Block order vs.\ priority order}
We will later see (\cref{generalizedToPriority}) that under mild conditions, computing priority downward closures reduces to computing block downward closures. The following \lcnamecref{generalizedBlockFiner} is the main technical ingredient in this:
It shows that the block order refines the priority order on words that end in the same letter, assuming the alphabet has a certain shape. A priority alphabet $(\Sigma,\priority)$ with $\priority=[1,d]$ is called \emph{flat} if $|\equalletters{i}|=1$ for each $i\in [1,d]$.
\begin{lemma}\label{generalizedBlockFiner}
	If $ \s $ is flat and $ u,v\in \Sigma^*a$ for some $a\in\s $, then $ u\bo v$ implies $u\po v$. 
\end{lemma}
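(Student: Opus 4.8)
The plan is to prove this by induction on the number of priority levels $d$, exploiting flatness heavily. The key reformulation: when $\Sigma$ is flat, a sub-$p$ block structure is much more rigid, because there is exactly one letter of each priority. Suppose $u \bo v$ with $u,v \in \Sigma^*a$. Let $p$ be the maximum priority occurring in $u$ and $v$; if $p$ is the only priority, then $\bo$ coincides with $\so$ and also with $\po$ (over a single-priority alphabet all three orders collapse), so the base case is immediate. Otherwise, write $u = u_0 x_0 u_1 \cdots x_{n-1} u_n$ and $v = v_0 y_0 v_1 \cdots y_{m-1} v_m$ along the (unique, since flat) priority-$p$ letter, with witness block map $\phi$. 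By flatness, all $x_i$ and $y_j$ equal the same letter, call it $x$; in particular item~\ref{def:generalized is subword} just says that between the images of consecutive sub-$p$ blocks of $u$ there is at least one copy of $x$ in $v$, which is automatic from $\phi(i) < \phi(i+1)$ and the definition of the $y_j$.

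The heart of the argument is to produce a priority embedding $u \po v$, i.e.\ a factorization $v = w_0 u^{(1)} w_1 u^{(2)} \cdots$ letter by letter, where between consecutive letters of $u$ we insert only letters of priority at most that of the following letter. I would build this greedily from left to right, processing one sub-$p$ block of $u$ at a time. For sub-$p$ block $u_i$ mapped to $v_{\phi(i)}$, I apply the induction hypothesis: since $u_i \bo v_{\phi(i)}$ and — crucially — both $u_i$ and $v_{\phi(i)}$ are words over the flat sub-alphabet $\lowerletters{p-1}$, I need them to end in the same letter to invoke induction. This is where the trailing constraint $u,v\in\Sigma^*a$ and the choice of $\phi(n)=m$ (item~\ref{def:last-to-last}) come in: for the last block, $u_n$ and $v_m$ both end in $a$ (if $a$ has priority $<p$; if $a$ has priority $p$ then $u_n = v_m = \epsilon$ and there's nothing to do). For the \emph{interior} blocks $u_i$ ($i<n$), $u_i$ need not end in $a$, so I cannot directly apply the induction hypothesis to $u_i \bo v_{\phi(i)}$. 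The fix is to note that $u_i x \bo v_{\phi(i)} y_{\phi(i)} v_{\phi(i)+1} \cdots$? — no; instead, I would strengthen the induction hypothesis, or observe that each sub-$p$ block of $u$ can be padded on the right by the letter $x$ (which by item~\ref{def:generalized is subword} is available in $v$ right after the image of the block), reducing to the "ends in the same letter" case over the alphabet $\lowerletters{p}$ restricted appropriately — but $x$ has priority $p$, not $\le p-1$, so this requires care about which sub-alphabet we induct over.

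The honest way to set this up is to prove a slightly more general statement by induction: \emph{if $\Sigma$ is flat, $u\bo v$, and either $u=\epsilon$ or $u$ and $v$ end in the same letter, then $u \po v$}, and additionally, in the inductive step, handle interior sub-$p$ blocks $u_i$ by appending the priority-$p$ letter $x$: we have $u_i x \bo v_{\phi(i)} y_{\phi(i)} = v_{\phi(i)} x$ does not quite follow from the $\bo$ data either. A cleaner route: process $u_i x_i$ together as a unit. Since $u_i$ has priorities $\le p-1$ and $x_i$ has priority $p$, the word $u_i x_i$ over the flat alphabet $\lowerletters{p}$ has a sub-$p$-block decomposition with a single priority-$p$ letter at the end; by item~\ref{def:generalized is subword}, $x_i \so v_{\phi(i)} y_{\phi(i)} \cdots v_{\phi(i+1)}$, and combining with $u_i \bo v_{\phi(i)}$ (induction hypothesis applicable if we've arranged that $u_i$ ends suitably — use that we only need a priority embedding of $u_i$, where by induction on $d$ the \emph{subword} refinement property of $\bo$ over $\lowerletters{p-1}$ gives us enough). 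Then concatenate these priority embeddings across $i=0,\dots,n$ using multiplicativity of $\po$; the first-to-first and last-to-last conditions ensure the leading $v_0$ fragment and trailing $v_m$ fragment are absorbed correctly.

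\textbf{Main obstacle.} The real difficulty is the interior sub-$p$ blocks: the hypothesis "$u,v$ end in the same letter" is global, not inherited by interior blocks, so the naive induction does not close. Resolving this — by bundling each $u_i$ with its following priority-$p$ letter, or by strengthening the induction hypothesis to a statement about embeddings that end at a prescribed letter/position — is the crux, and getting the quantifiers right there (especially tracking that inserted letters between a priority-$p$ letter and the next letter of $u$ indeed have low enough priority, using item~\ref{def:generalized is subword} together with the fact that sub-$p$ blocks of $v$ have priority $\le p-1$) is where the proof's weight lies. Flatness is used precisely to guarantee that the priority-$p$ letters of $u$ and $v$ coincide, so that item~\ref{def:generalized is subword}'s subword condition directly yields the priority-embedding condition for those letters.
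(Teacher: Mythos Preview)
Your approach---induction on the number of priority levels, decomposing along the top-priority letter---is genuinely different from the paper's. The paper does not induct on $d$; instead it takes a position mapping $\rho$ witnessing $u\bo v$ (one that ``respects the block order'' and sends the last position of $u$ to the last of $v$), locates the \emph{rightmost} position $i$ at which the priority condition fails (some letter of priority exceeding that of $u[i{+}1]$ occurs in $v[\rho(i){+}1,\rho(i{+}1)]$), and argues from the block-order structure plus flatness that the offending letter must have priority exactly that of $u[i]$---hence, by flatness, \emph{is} $u[i]$. Remapping $\rho(i)$ to the last such occurrence removes the violation at $i$ without introducing one further right; iterating yields a priority-order witness. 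No recursion into sub-blocks is needed.

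Your proposal has a gap you yourself name but do not close: the ``same last letter'' hypothesis is not inherited by interior sub-$p$ blocks, so the IH cannot be applied to $u_i\bo v_{\phi(i)}$; and bundling with the following $x$ puts you back at priority level $p$, so invoking the lemma there is circular. The route \emph{can} be salvaged by strengthening the IH to: \emph{if $u\bo v$ over a flat alphabet with maximum occurring priority $q$, then $uc\po vc$ for every letter $c$ of priority at least $q$}. This applies to each $u_i\bo v_{\phi(i)}$ (max priority $\le p{-}1$) with $c=x$, yielding $u_i x\po v_{\phi(i)} x$; one then remaps that trailing $x$ to the \emph{last} $x$ preceding $v_{\phi(i+1)}$ (the skipped blocks $v_{\phi(i)+1},\dots,v_{\phi(i+1)-1}$ and the intervening $x$'s all have priority $\le p$ and are absorbed), after which multiplicativity of $\po$ assembles the pieces into $uc\po vc$. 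That remap-to-last-occurrence step is, incidentally, exactly the maneuver driving the paper's direct argument---so the two proofs share the same atomic trick, organized differently.
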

\begin{proof}
	Since $ u\bo v $, there exists a witness position mapping $ \rho $ that maps the positions of the letters in $ u $ to that of $ v $, such that it respects the block order, and it maps the last position of $ u $ to the last of $ v $. 
	
	Let $ u=u_0u_1\cdots u_k $. We say that a position mapping violates the priority order at position $ i $ (for $ i\in [0,k-1] $), if $ v[\rho(i)+1, \rho(i+1)] $ has a letter of priority higher than that of $ u[i+1] $. Note that if $ \rho $ does not violate the priority order at any position, then $ u\po v $. %
	
	Let $ i $ be the largest position at which $ \rho $ violates the priority order, i.e.  $ v[\rho(i)+1, \rho(i+1)] $ has a letter of priority higher than that of $ u[i+1] $. We show that if $ \rho $ respects the block order till position $ i $, there exists another witness position mapping $ \rho' $ that respects the block order till position $ i-1 $, and has one few position of violation (i.e. no violation at position $ i $).

	We first observe that $ u[i]>u[i+1] $, which holds since $ \rho $ respects the block order till position $ i $, implying that $ v[\rho(i)+1, \rho(i+1)] $ does not have a letter of priority higher than $ min\{u[i], u[i+1]\} $, and if $ u[i]\leq u[i+1] $, $ \rho $ does not violate the priority order at $ i $.%

	Then observe that $  v[\rho(i)+1, \rho(i+1)]  $ does not have a letter with priority $ p $, where $ u[i]>p> u[i+1] $, otherwise the sub-$ u[i] $ block of $ u $ immediately after $ u[i] $, can not be embedded to that of $ v $ immediately after $ v[\rho(i)] $, since it would have to be split along $ p $, and the first sub-$ p $ block in $ v $ will not be mapped to any in $ u $. Then $  v[\rho(i)+1, \rho(i+1)]  $ has letter of priority $ u[i] $ (for a violation at $ i $). Then consider the mapping $ \rho' $ that maps $ i $ to the last $ u[i] $ letter in $  v[\rho(i)+1, \rho(i+1)]  $ (say at $ v[j] $ for some $j$, $ \rho(i)+1\leq j\leq\rho(i+1) $). 
	
	This mapping respects the block order till position $ i-1 $, trivially, as we do not change the mapping before $ i $. We show that there is no priority order violation at position $ i $. This holds because the only larger priority letter occurring in $ v[\rho(i)+1, \rho(i+1)] $ was $ u[i] $, and due to the definition of $ \rho' $, $ v[\rho'(i)+1, \rho'(i+1)]  $ has no letter of priority higher than $ u[i+1] $. Since we do not change the mapping after position $ i $, $ \rho' $ does not introduce a violation at any position after $ i $. Hence we have a new position mapping that has one few position of priority order violation.
\end{proof}

\begin{remark}
	We want to stress that the flatness assumption in \cref{generalizedBlockFiner} is crucial: Consider the alphabet $ \s $ from the \cref{ex:block-order}. Then $ 1^a0^a\bo 1^a1^b0^a $, but $ 1^a0^a\not\po 1^a1^b0^a $. Here only one position mapping exists, and it is not possible to remap $ 1^a $ to $ 1^b $ since they are two distinct letters of same priority. Hence, we need to assume that each priority greater than zero has at most one letter.
\end{remark}

\section{Regular Languages}\label{sec:regular}

In this section, we show how to construct an NFA for the block downward closure of a regular language. To this end, we show that both orders are rational transductions.

\subparagraph{Rational transductions}
A \emph{finite state transducer} is a tuple $ \na=(Q,X,Y,E,q_0,F) $, where %
$ Q $ is a finite set of states,
$ X $ and $ Y $ are \emph{input} and \emph{output alphabets}, respectively,
$ E $ is the set of \emph{edges} i.e. finite subset of $ Q\times X^*\times Y^*\times Q $,
$ q_0\in Q $ is the \emph{initial state}, and
$ F\subseteq Q $ is the set of \emph{final states}.
A \emph{configuration} of $ \na $ is a triple $ (q,u,v)\in Q\times X^*\times Y^* $. We write $ (q,u,v)\rightarrow_{\na} (q',u',v')  $, if there is an edge $ (q,x,y,q') $ with $ u'=ux $ and $ v'=vy $. If there is an edge $ (q,x,y,q') $, we sometimes denote this fact by $ q\xrightarrow{(x,y)}_{\na} q' $, and say ``read $ x $ at $ q $, output $ y $, and goto $ q' $''. The \emph{size of a transducer}, denoted by $ |\na| $, is the number of its states.

A \emph{transduction} is a subset of $ X^*\times Y^* $ for some finite alphabets $ X,Y $. The \emph{transduction defined by $ \na $} is $  \trans(\na) = \{ (u,v)\in X^*\times Y^* \ |\ (q_0,\epsilon,\epsilon)\rightarrow_{\na}^* (f,u,v)\text{ for some } f\in F     \}.  $ 
A transduction is called \emph{rational} if it is defined by some finite-state transducer. Sometimes we abuse the notation and output a regular language $ R\subseteq Y^* $ on an edge, instead of a letter. It should be noted that this abuse is equivalent to original definition of finite state transducers.

We say that a language class $ \calc $ is \emph{closed under rational transductions} if for each language $ L \in \calc $, and each rational transduction $ R\subseteq X^*\times Y^* $, \emph{the language obtained by applying the transduction} $ R $ to $ L $, $  RL \stackrel{def}{=} \{ v\in Y^*\ |\ (u,v)\in R \text{ for some } u\in L  \}  $  also belongs to $ \calc $. We call such language classes \emph{full trio}%
. Regular languages, context-free languages, recursively enumerable languages are some examples of full trios~\cite{berstel1979transductions}.%

\subparagraph{Transducers for orders}
It is well-known that the subword order is a rational transduction, i.e. the relation $ T = \{ (u,v)\in X^*\times X^*\ |\ v\so u   \} $ is defined by a finite-state transducer. For example, it can be defined by a one-state transducer that can non-deterministically decide to output or drop each letter. Note that on applying the transduction to any language, it gives the subword downward closure of the language. This means, for every $L\subseteq X^*$, we have $TL=L\sd$.
We will now describe analogous transducers for the priority and block order. %
\begin{restatable}{theorem}{sizeprioritytrans}\label{sizeprioritytrans}
	Given a priority alphabet with priorities $[0,k]$, one can construct in polynomial time a transducer for $\bo$ and a transducer for $\po$, each of size $\bigO(k)$.
\end{restatable}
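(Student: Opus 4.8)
The plan is to construct, for $\bo$ and for $\po$, a finite-state transducer with $O(k)$ states defining the relation $\{(w,z) : z\bo w\}$, resp.\ $\{(w,z):z\po w\}$; applying such a transducer to a language $L$ then yields $L\bd$, resp.\ $L\pd$, just as the one-state transducer for $\so$ yields $L\sd$. So the transducer reads the larger word $w$ and nondeterministically emits a smaller word $z$.

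For $\po$ I would read the transducer off the definition directly: $z\po w$ iff $z=\epsilon$, or $w=w_1z_1w_2z_2\cdots w_rz_r$ with $z=z_1\cdots z_r$, each $z_i\in\Sigma$ and $w_i\in\lowerletters{z_i}^*$. Scanning $w$ from left to right, the only information that matters is the largest priority occurring in the deleted block $w_i$ currently being consumed; the letter kept as $z_i$ must then have priority at least that value. So the transducer has states $D_0,\dots,D_k$ (``inside a deletion run whose largest priority so far is $p$''), a state $J$ (``a letter was just emitted, the current deletion run is empty''), a sink $\mathsf e$ consuming everything and emitting nothing (covering $z=\epsilon$, which is $\po$-below every $w$), and an initial state with $\epsilon$-edges to $D_0$ and to $\mathsf e$. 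From $D_p$ (resp.\ $J$) one may delete a letter $a$, moving to $D_{\max(p,\mathrm{prio}(a))}$ (resp.\ $D_{\mathrm{prio}(a)}$), or, provided $\mathrm{prio}(a)\ge p$ (resp.\ unconditionally), emit $a$ and move to $J$. The accepting states are only $J$ and $\mathsf e$, since $w$ must end with $z_r$ when $z\neq\epsilon$, so a run ending inside a nonempty deletion run is legal only for $z=\epsilon$. This uses $O(k)$ states, and correctness is a direct unfolding of the definition of $\po$.

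For $\bo$ I would exploit the recursive shape of the definition and induct on the number of priority levels, building a transducer $T_d$ over the sub-alphabet $\lowerletters{d}$ for the corresponding block order. For $d=0$ the block order is just $\so$, so $T_0$ is the one-state subword transducer. To obtain $T_d$ from $T_{d-1}$: on an input with no priority-$d$ letter, $T_d$ behaves exactly like $T_{d-1}$; otherwise it realises the priority-$d$ clause of the definition. Writing $w=w_0y_0w_1y_1\cdots y_{m-1}w_m$ with the $y_j$ the priority-$d$ letters and the $w_j$ the sub-$d$ blocks, $T_d$ alternates between a \emph{selected-block} mode, in which it runs $T_{d-1}$ on some $w_{\phi(i)}$ to emit the sub-$d$ block $z_i$ of the output (the next priority-$d$ letter marks the end of that block, at which moment $T_{d-1}$ must be accepting, certifying $z_i\bo w_{\phi(i)}$, i.e.\ condition~\ref{def:generalized is recursive}), and a \emph{segment} mode for the stretch between two consecutive selected blocks, in which it skips all intermediate sub-$d$ blocks and emits exactly one priority-$d$ letter $x_i$, chosen among the $y_j$ seen there --- which is exactly condition~\ref{def:generalized is subword}, since $x_i$ is a single priority-$d$ letter. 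The first and last sub-$d$ blocks are forced to be selected (conditions~\ref{def:first-to-first} and~\ref{def:last-to-last}), and $\phi$ is automatically strictly monotone since $T_d$ only moves forward. The segment and skip phases need a constant number of states (``before vs.\ after emitting $x_i$'' times ``currently skipping a block or not''), plus one extra state for the case $z=\epsilon$: unlike $\po$, the empty word is $\bo$-below $w$ only when $w$ is priority-homogeneous, so when $w$ is a nonempty string of priority-$d$ letters this state simply loops over them. Hence $|T_d|=|T_{d-1}|+O(1)$, so $|T_k|=O(k)$, and the whole construction is polynomial.

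Correctness of $T_d$ follows by induction, matching accepting runs of $T_d$ against the two clauses of the definition of $\bo$: a run confined to the $T_{d-1}$-part corresponds to a word without priority-$d$ letters (or, when every letter has priority $d$, to case~(i) via the decomposition into empty sub-$d$ blocks), while a run that enters segment mode reads off exactly a witnessing decomposition $z=z_0x_0z_1\cdots x_{n-1}z_n$, $w=w_0y_0\cdots w_m$ together with the map $\phi$. I expect the only real difficulty to be bookkeeping rather than anything conceptual: pinning down the accepting states and the mode-switching transitions in the presence of empty sub-$d$ blocks (an empty selected block between two priority-$d$ letters forces its $z_i$ to be $\epsilon$ and must be crossed without consuming a letter), of a word ending on a priority-$d$ letter (so its last sub-$d$ block is empty), and of the asymmetric status of $\epsilon$ in the block order. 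Once those cases are fixed, both inclusions of $\trans(T_d)=\{(w,z):z\bo w\}$ are routine.
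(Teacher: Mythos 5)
Your $\po$ transducer is essentially the construction given in the paper: a chain of states $0,\dots,k$ tracking the highest priority dropped since the last emission, emission permitted only for a letter of at least that priority, and a dedicated accepting mechanism forcing the last input letter to be emitted; you are slightly more careful about the $z=\epsilon$ corner case (your sink state $\mathsf e$), which the paper's description glosses over. For $\bo$, however, you take a genuinely different route. The paper builds a single ``flat'' transducer whose states are the priorities $0,\dots,k$ plus a sink; after emitting a letter it resets to state~$0$, on a drop it moves to the state indexed by the priority of the first dropped letter, then keeps dropping whole sub-$i$ blocks and re-synchronises only by emitting a priority-$i$ letter. You instead mirror the inductive definition of $\bo$: starting from the one-state subword transducer $T_0$, you build $T_d$ by wrapping $T_{d-1}$ with a constant number of new control states (selected-block mode runs $T_{d-1}$ on a sub-$d$ block; segment mode, with a before/after flag, skips the intermediate sub-$d$ blocks while emitting exactly one priority-$d$ letter; plus a looping state for the $z=\epsilon$ / priority-homogeneous case (i)). Crucially, the new states sit \emph{disjointly alongside} $T_{d-1}$'s states rather than forming a product with them, since on entering segment mode the inner $T_{d-1}$ state is discarded and on re-entering selected-block mode it is reset; this is what gives the recursion $|T_d|=|T_{d-1}|+O(1)$ and hence $O(k)$ overall. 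Your recursive scheme tracks the definition of $\bo$ clause by clause, so correctness is a direct induction (modulo the edge-case bookkeeping about empty sub-$d$ blocks and accepting states that you flag yourself); the paper's flat automaton is a bit more compact but its relationship to the recursive definition of $\bo$ is less transparent. Both approaches yield transducers of size $O(k)$ constructible in polynomial time, so the proposal is sound and achieves the stated bound by a different decomposition than the paper's.
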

\begin{proof}
	The transducers for the block and priority order are similar. Intuitively, both remember the maximum of the priorities dropped or to be dropped, and keep or drop the coming letters accordingly. We show the transducer for the priority order here since it is applied in \cref{generalizedToPriority}. The transducer for the block order is detailed in \cref{app:regular}. 
	
	Let $ \s $ be a finite alphabet, with priorities $ \priority= [0,k] $.
	Consider the transducer that has one state for every priority, a non-final sink state, and a distinguished final state. If the transducer is in the state for priority $r$ and reads a letter $a$ of priority $s$, then
	\begin{itemize}
		\item if $s<r$, then it outputs nothing and stays in state $r$, 
		\item if $s\ge r$, then it can output nothing, and go to state $s$,
		\item if $s\ge r$, it can also output $a$, and go to state $0$, or the accepting state non-deterministically,
		\item for any other scenario, goes to the sink state.
	\end{itemize}
	The priority $ 0 $ state is the initial state. Intuitively, the transducer remembers the largest priority letter that has been dropped, and keeps only a letter of higher priority later. To be accepting, it has to read the last letter to go to the accepting final state. 
\end{proof}

The following theorem states that the class of regular languages form a full trio.
\begin{theorem}[\protect{\cite[Corollary 3.5.5]{shallit_2008}}]\label{regularFullTrio}
	Given an \FSA $\na$ and a transducer $\nb$, we can construct in polynomial time an \FSA of size $|\na|\cdot|\nb|$ for $\trans(\nb)(\lang(\na))$.
\end{theorem}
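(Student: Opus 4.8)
The plan is a synchronized product construction of $\na$ with $\nb$. Write $\na=(Q_\na,\Sigma,\delta,q_0,F)$ and $\nb=(Q_\nb,\Sigma,Y,E,r_0,F_\nb)$; we may take the input alphabet of $\nb$ to be $\Sigma$, since letters outside $\Sigma$ are never read from a word of $\lna$, and letters of $\Sigma$ that $\nb$ cannot read simply contribute nothing. The guiding intuition is that $v\in\trans(\nb)(\lna)$ holds exactly when there is a \emph{single} input word $u$ together with (i) an accepting run of $\na$ on $u$ and (ii) an accepting run of $\nb$ that consumes $u$ and emits $v$. We therefore want an automaton that guesses $u$ letter by letter, feeds each guessed letter simultaneously to $\na$ and to $\nb$, and outputs whatever $\nb$ outputs; that output stream is precisely the word recognized by the product.

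Concretely, I would first assume each edge of $\nb$ reads a single input letter or $\epsilon$ (this holds for all transducers we construct, e.g.\ those of \cref{sizeprioritytrans}, and in general one normalizes the input side in polynomial time), while allowing edges to be labeled with words, or even regular languages, on the output side — exactly the harmless abuse already permitted for transducers, and equivalently for \FSA{}s. Then define
\[ \calc=\bigl(Q_\na\times Q_\nb,\ Y,\ \Delta,\ (q_0,r_0),\ F\times F_\nb\bigr), \]
where $\Delta$ contains: (a) $(p,r)\xrightarrow{y}(p',r')$ whenever $p\xrightarrow{a}p'$ in $\na$ with $a\in\Sigma$ and $r\xrightarrow{(a,y)}_\nb r'$ — both machines consume the guessed letter $a$ and $\calc$ emits $y$; (b) $(p,r)\xrightarrow{y}(p,r')$ whenever $r\xrightarrow{(\epsilon,y)}_\nb r'$ — $\nb$ moves on empty input while $\na$ stays; and (c) $(p,r)\xrightarrow{\epsilon}(p',r)$ whenever $p\xrightarrow{\epsilon}p'$ in $\na$ — $\na$ takes a silent move. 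This $\calc$ has $|Q_\na|\cdot|Q_\nb|$ states and is clearly computable in polynomial time.

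For correctness one proves, by induction on the lengths of the runs in both directions, that $(q_0,r_0)$ reaches $(p,r)$ in $\calc$ while emitting $v$ if and only if there is some $u\in\Sigma^*$ with $q_0\xrightarrow{u}{}^*p$ in $\na$ and $(r_0,\epsilon,\epsilon)\rightarrow_\nb^*(r,u,v)$. Specializing to $p\in F$ and $r\in F_\nb$ yields $\lang(\calc)=\{v\mid\exists u\in\lna\colon(u,v)\in\trans(\nb)\}=\trans(\nb)(\lna)$, as required.

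The only real delicacy — and the step I expect to require the most care — is the bookkeeping around $\epsilon$-moves: the product must let $\na$ and $\nb$ advance independently on silent/empty steps, yet force them to agree on the letter read whenever $\nb$ actually consumes input, so that neither over- nor under-synchronization occurs. Rules (a)--(c) above are designed to cover exactly these cases; once the letter-at-a-time assumption on $\nb$ is in place, verifying this is routine, and the rest of the argument is a direct unfolding of the definitions of $\trans(\nb)$ and of $R L$.
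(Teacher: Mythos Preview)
The paper does not give its own proof of this statement: it is simply cited from Shallit's textbook, so there is nothing to compare against. Your product construction is exactly the standard argument behind that citation and is correct.

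One small caveat on the size bound: you normalize $\nb$ so that each edge reads at most one input letter, which in general can add states and would then break the exact $|\na|\cdot|\nb|$ state count (only polynomial size would survive). In the paper's setting this is harmless, since the transducers actually used (those of \cref{sizeprioritytrans}) already satisfy the single-letter assumption, so no normalization is needed and the stated bound holds on the nose.
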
 
\Cref{regularFullTrio,sizeprioritytrans} give us a polynomial size \FSA recognizing the priority and block downward closure of a regular language, which is computable in polynomial time as well.

\begin{theorem}\label{prioritydownwardregular}
	Priority and block downward closures for regular languages are effectively computable in time polynomial in the number of states in the \FSA recognizing the language.
\end{theorem}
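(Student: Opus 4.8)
The plan is to obtain \cref{prioritydownwardregular} as an immediate consequence of \cref{sizeprioritytrans} and \cref{regularFullTrio}. Suppose we are given an \FSA $\na$ over a priority alphabet with priorities $[0,k]$. First I would apply \cref{sizeprioritytrans} to construct, in polynomial time, a transducer $\nb$ of size $\bigO(k)$ with $\trans(\nb)=\{(u,v)\mid v\po u\}$; the analogous construction with $\{(u,v)\mid v\bo u\}$ handles the block order. Here it is essential that the transducer reads the \emph{larger} word and emits the \emph{smaller} one, exactly as with the one-state subword transducer discussed before \cref{sizeprioritytrans}: then for every language $L$ we have $\trans(\nb)(L)=\{v\mid \exists u\in L\colon v\po u\}=L\pd$, and likewise $\trans(\nb)(L)=L\bd$ in the block case.

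Second, I would feed $\na$ and $\nb$ into \cref{regularFullTrio}, which produces in polynomial time an \FSA of size $|\na|\cdot|\nb|=\bigO(k\cdot|\na|)$ recognizing $\trans(\nb)(\lang(\na))$, that is, $\lang(\na)\pd$ in the priority case and $\lang(\na)\bd$ in the block case. Since the priority set is part of the (fixed) alphabet, $k$ contributes only a linear factor, so the whole procedure runs in time polynomial in $|\na|$ and outputs an automaton of size polynomial in $|\na|$, which is precisely the claim.

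I do not expect a genuine obstacle here: the theorem is a bookkeeping combination of the two preceding results. The one point requiring care is the orientation of the transduction — applying it must yield a \emph{downward} closure rather than an upward one — and this is already built into the statement of \cref{sizeprioritytrans}. The substantive content lies elsewhere: in \cref{sizeprioritytrans} itself, namely designing the $\bigO(k)$-state transducers for $\po$ and $\bo$ and verifying that they define exactly these orders, and, for the block order, in the regularity statement \cref{generalizedblockregular}, which is what makes a finite-state transducer for $\bo$ possible in the first place.
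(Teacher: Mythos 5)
Your proposal is correct and matches the paper's argument exactly: the paper's (unwritten) proof of \cref{prioritydownwardregular} is precisely the bookkeeping combination of \cref{sizeprioritytrans} and \cref{regularFullTrio} that you describe, and you correctly address the one subtlety (the transducer reads the larger word and emits the smaller, so applying it yields the downward closure).
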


\cref{prioritydownwardregular,generalizedBlockFiner} now allow us to reduce the
priority downward closure computability to computability for block order. 
\begin{theorem}\label{generalizedToPriority}
	If $ \calc $ is a full trio and we can effectively compute block downward closures for $\calc$, then we can effectively compute priority downward closures.
\end{theorem}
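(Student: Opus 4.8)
The plan is to reduce the computation of $L\pd$ for an input $L\in\calc$ over a priority alphabet $(\Sigma,\priority)$ with $\priority=[0,d]$ to a single block downward closure of a $\calc$-language followed only by regularity-preserving operations, via two preprocessing reductions — to a flat alphabet and to a fixed last letter — so that \cref{generalizedBlockFiner} becomes applicable, and then to invoke \cref{prioritydownwardregular}. For the first reduction, introduce a fresh priority-$p$ marker $m_p$ for each $p\in[1,d]$ and a fresh priority-$0$ tag $\hat b$ for each $b\in\Sigma$ of priority $\ge 1$, and let $\Gamma$ be the flat priority alphabet consisting of the markers, the tags, and the priority-$0$ letters of $\Sigma$. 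Let $h\colon\Sigma^*\to\Gamma^*$ be the morphism with $h(b)=b$ if $b$ has priority $0$ and $h(b)=m_p\hat b$ if $b$ has priority $p\ge 1$. A direct inspection of the definition of $\po$ shows that $h$ is a $\po$-monomorphism: the ``$\Rightarrow$'' direction is immediate, and for ``$\Leftarrow$'' one uses that in $h(v)$ every tag is immediately preceded by its marker, so a $\po$-embedding of $h(u)$ into $h(v)$ can neither insert anything between a marker and its tag nor place a marker of priority $q$ in front of a block whose first letter has priority below $q$; hence it restricts to a $\po$-embedding of $u$ into $v$. Consequently $L\pd=h^{-1}(h(L)\pd)$. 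Since $h$ is a morphism and $\calc$ is a full trio, $h(L)\in\calc$; since $h^{-1}$ is a rational transduction and regular languages form a full trio (\cref{regularFullTrio}), it suffices to compute an NFA for $h(L)\pd$. Rename $h(L)$ to $L$, so from now on $L\subseteq\Gamma^*$ with $\Gamma$ flat.

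For the second reduction, note that in the definition of $\po$ the word $v$ always ends with the last letter of $u$; hence a nonempty word and anything $\po$-above it share their last letter, giving $L\pd=E\cup\bigcup_{c\in\Gamma}(L\cap\Gamma^* c)\pd$, where $E=\{\epsilon\}$ if $\epsilon\in L$ and $E=\emptyset$ otherwise. A full trio is closed under intersection with regular languages, so each $L_c:=L\cap\Gamma^* c$ lies in $\calc$; and regular languages are closed under finite unions and adjoining $\epsilon$. So fix $c\in\Gamma$; I claim $L_c\pd=(L_c\bd\cap\Gamma^* c)\pd$. The inclusion ``$\subseteq$'' holds because $L_c\subseteq L_c\bd\cap\Gamma^* c$. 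For ``$\supseteq$'', suppose $v\po y$ with $y\in L_c\bd\cap\Gamma^* c$ and pick $w\in L_c$ with $y\bo w$; then $y$ and $w$ both end with $c$ and $\Gamma$ is flat, so \cref{generalizedBlockFiner} yields $y\po w$, whence $v\po w$ by transitivity of $\po$, i.e.\ $v\in L_c\pd$.

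Each ingredient of $(L_c\bd\cap\Gamma^* c)\pd$ is effective: $L_c\bd$ is computable by hypothesis and is regular (it is an NFA); intersecting with the regular language $\Gamma^* c$ preserves regularity and computability; and the priority downward closure of a regular language is computable by \cref{prioritydownwardregular}. Taking the finite union over $c\in\Gamma$, adjoining $\epsilon$ if needed, and finally applying the rational transduction $h^{-1}$ to undo the flattening produces an NFA for the priority downward closure of the original language $L$.

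The heart of the argument is the identity $L_c\pd=(L_c\bd\cap\Gamma^* c)\pd$, i.e.\ the use of \cref{generalizedBlockFiner}: in general $L_c\bd$ is strictly larger than $L_c\pd$, and it is precisely the two conditions ``flat alphabet'' and ``common last letter'' — installed, respectively, by the flattening encoding and by the split according to last letter — that bring the block order and the priority order close enough together for the argument to go through. The secondary technical point that needs care is verifying that the encoding $h$ is a genuine $\po$-monomorphism, as this is exactly what makes $h^{-1}(h(L)\pd)$ equal $L\pd$ and lets the flattening be undone without loss.
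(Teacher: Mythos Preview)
Your proof is correct and follows the same high-level strategy as the paper (flatten the alphabet, split by last letter, apply \cref{generalizedBlockFiner}, finish with \cref{prioritydownwardregular}), but two points of implementation differ and are worth noting.

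\textbf{Flattening.} The paper flattens much more cheaply: it keeps $\Sigma$ and simply reassigns priorities by any linearization refining the original quasi-order, using the one-line observation that $u\po[\mathsf{flat}] v$ implies $u\po v$. Your encoding via a fresh flat alphabet $\Gamma$ and the morphism $h(b)=m_p\hat b$ works too, but the monomorphism verification is a genuine extra argument that the paper avoids entirely. Your sketch of that verification is right: the crucial step is that in any $\po$-embedding of $h(u)$ into $h(v)$, a tag $\hat u_i$ must land immediately after the marker it is paired with (because tags have priority $0$ and every tag in $h(v)$ is preceded by a positive-priority marker), so the embedding collapses to a $\po$-embedding of $u$ in $v$.

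\textbf{The intersection step.} Here you are actually more careful than the paper. The paper asserts $(L_a\bd[\mathsf{flat}])\pd=L_a\pd$ directly, but this is not literally true: over $\{0,1,2\}$ with $L_a=\{210\}$ one has $21\in L_a\bd$ yet $21\notin L_a\pd$, so $(L_a\bd)\pd\supsetneq L_a\pd$. Your fix—intersecting with $\Gamma^*c$ before applying $\pd$—is exactly what is needed, since \cref{generalizedBlockFiner} requires both words to share their last letter. The paper's argument is easily repaired the same way, but as written your version is the cleaner one on this point.
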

\begin{proof}
	The key idea is to reduce priority downward closure computation to the setting where (i)~all words end in the same letter and (ii)~the alphabet is flat. Since by \cref{generalizedBlockFiner}, on those languages, the block order is finer than the priority order, computing the block order will essentially be sufficient.
	
	Let us first establish (i). Let $ L\in\calc $. Then for each $a\in\s$, the language $L_a=L\cap
	\s^*a$ belongs to $\calc$. Since $L=\bigcup_{a\in \s} L_a\cup E$ and thus
	$L\pd=\bigcup_{a\in\s} L_a\pd\cup E$, it suffices to compute priority downward
	closures for each $L_a$, where $ E = \{\epsilon\} $ if $ \epsilon\in L $, else $ \emptyset $. This means, it suffices to compute priority downward
	closures for languages where all words end in the same letter.
	
	To achieve (ii), we make the alphabet flat. We say that $ (\s,\priority') $ is the \emph{flattening} of $( \s,\priority= [0,d]) $, if $ \priority' $ is obtained by choosing a total order to $\Sigma$ such that if $a$ has smaller priority than $b$ in $(\s,\priority)$, then $a$ has smaller priority than $b$ in $(\s,\priority')$. (In other words, we pick an arbitrary linearization of the quasi-order on $\Sigma$ that expresses ``has smaller priority than''). Then, we assign priorities based on this total ordering. Let $\bo[\mathsf{flat}]$ and $\po[\mathsf{flat}]$ denote the block order and priority order, resp., based on the flat priority assignment. It is a simple observation that for $u,v\in\Sigma^*$, we have that $u\po[\mathsf{flat}] v$ implies $u\po v$. 
	
	Now observe that for $u,v\in L_a$, \cref{generalizedBlockFiner} tells us that $u\bo[\mathsf{flat}] v$ implies $u\po[\mathsf{flat}] v$ and therefore also $u\po v$. This implies that $(L_a\bd[\mathsf{flat}])\pd=L_a\pd$. By assumption, we can compute a finite automaton $\na$ with $\lang(\na)=L_a\bd[\mathsf{flat}]$. Since then $\lang(\na)\pd=(L_a\bd[\mathsf{flat}])\pd=L_a\pd$,  we can compute $L_a\pd$ by applying \cref{prioritydownwardregular} to $\na$ to compute $\lang(\na)\pd=L_a\pd$.
\end{proof}

\section{One-counter Languages}\label{sec:oca}

In this section, we show that for the class of languages accepted by one-counter automata, which form a full-trio \cite[Theorem 4.4]{berstel1979transductions}, the block and priority downward closures can be computed in polynomial time. We prove the following theorem.
\begin{theorem}\label{blockdownwardOCA}
	Given an OCA $\na$, $ \lang(\na)\bd $ and $\lang(\na)\pd$ are computable in polynomial time.
\end{theorem}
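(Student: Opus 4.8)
The plan is to reuse the subword-based downward-closure construction for one-counter automata from~\cite{Atig2016OCAdown} and adapt it to the block order, exploiting the Pumping Lemma (\cref{generalizedblockrepeat}) in an essential way. By \cref{generalizedToPriority} and the fact that one-counter languages form a full trio, it suffices to compute the \emph{block} downward closure $\lang(\na)\bd$ in polynomial time; the priority downward closure then follows by flattening the alphabet, intersecting with $\Sigma^*a$ for each $a\in\Sigma$ (which preserves being a one-counter language with only a linear blow-up), computing the block downward closure, and applying the regular construction of \cref{prioritydownwardregular}. So the whole burden is on proving \cref{blockdownwardOCA} for $\bd$.

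First I would recall the shape of the OCA subword-closure argument: one classifies runs of $\na$ by whether the counter stays bounded (in which case $\na$ behaves like an NFA of polynomial size, and \cref{prioritydownwardregular} applies) or the counter grows large, in which case the run decomposes as a ``hill'' — an increasing part, a high plateau, and a decreasing part — where pumpable cycles of positive and negative effect appear. In the subword setting one replaces such a run's contribution by the downward closure of a regular expression of the form $u_0 C_1^* u_1 C_2^* \cdots C_k^* u_k$, where the $C_j$ are the labels of the pumpable cycles; crucially, once we take subword downward closure we may pump each $C_j$ arbitrarily, and the counter constraints become slack. I would carry out the analogous decomposition, being careful that $k$ and the lengths of the $u_i$ and $C_j$ are polynomially bounded in $|\na|$ (this is where the one-counter, as opposed to pushdown, structure is used).

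The key new ingredient is this: in the subword setting, passing from $u_0 C_1^* \cdots C_k^* u_k$ to its downward closure is justified because inserting the factor $C_j$ anywhere yields a subword-larger word. This fails for $\bo$ in general, but \cref{generalizedblockrepeat} says precisely that \emph{repeating an existing factor} yields a $\bo$-larger word — and $C_j$ is exactly a cycle label that already occurs (once) in the reference run, so $u_0 C_1 u_1 \cdots C_k u_k \bo u_0 C_1^{n_1} u_1 \cdots C_k^{n_k} u_k$ for all $n_j \ge 1$. Hence the block downward closure of the reference word already contains, as subwords-in-the-block-sense, every word reachable by pumping, and conversely every word in $\lang(\na)\bd$ arising from a high run is $\bo$-below some such pumped word, hence $\bo$-below the reference word after applying \cref{generalizedblockmultiplicative} to glue the pieces. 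So $\lang(\na)\bd$ is the union, over the polynomially many run-skeletons, of $\bd$ of finitely many explicit words (together with the bounded-counter regular part), and each $w\bd$ is regular and polynomial-size by \cref{generalizedblockregular} / \cref{prioritydownwardregular}; taking the union gives the result in polynomial time.

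The main obstacle I anticipate is not the block-order reasoning — that is handled cleanly by the Pumping Lemma — but the bookkeeping in the OCA decomposition: ensuring that the collection of run-skeletons (choices of which cycles are pumped, at which counter heights, with which connecting simple paths) is genuinely polynomial in number and size, and that the counter can indeed be driven through each skeleton once the pumped copies are present. Concretely, I would need a lemma saying every accepting run of $\na$ can be shortened/normalized to one of polynomially bounded ``skeleton complexity'' from which the general run is recovered by pumping cycles — essentially the one-counter reachability/shortest-path analysis underlying~\cite{Atig2016OCAdown} — and then check that the factor-repetition in \cref{generalizedblockrepeat} lines up with counter-feasible pumping (both the positive- and negative-effect cycles can be iterated together while keeping the counter nonnegative, which is where the one-counter structure, not just context-freeness, is needed). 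Once that normalization is in place, the block-order part is short.
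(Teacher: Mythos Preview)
Your high-level plan---reuse the OCA construction from~\cite{Atig2016OCAdown} and invoke the Pumping Lemma (\cref{generalizedblockrepeat}) where the subword argument used arbitrary insertions---is exactly the paper's idea. But the way you package the argument has a genuine gap. You write that ``$\lang(\na)\bd$ is the union, over the polynomially many run-skeletons, of $\bd$ of finitely many explicit words'', and earlier that ``every word in $\lang(\na)\bd$ arising from a high run is $\bo$-below some such pumped word, hence $\bo$-below the reference word''. This inference is wrong: \cref{generalizedblockrepeat} gives $w_{\mathrm{ref}}=u_0C_1\cdots C_ku_k \bo u_0C_1^{n_1}\cdots C_k^{n_k}u_k$, so the pumped word's down-set \emph{contains} the reference word's down-set, not the other way around. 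Concretely, with $\Sigma=\{a,b\}$ of priority $0$ and $C_1=ab$, we have $aa\bo abab$ but $aa\not\bo ab$, so $\{ab\}\bd$ misses words that belong to $\lang(\na)\bd$. Taking the $\bd$ of one reference word per skeleton therefore under-approximates. Separately, ``polynomially many skeletons'' is never justified: each skeleton may have polynomial complexity, yet the number of distinct $(u_0,C_1,\ldots,C_k,u_k)$ arising from different runs can be exponential.

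The fix---and this is what the paper does---is not to enumerate skeletons at all, but to build a single polynomial-size \FSA $\nb$ with $\lang(\na)\subseteq\lang(\nb)\subseteq\lang(\na)\bd$, and then apply the transducer of \cref{prioritydownwardregular} to $\nb$. Concretely, $\nb$ simulates $\na$ with the counter stored in the state (bounded by roughly $K^2$), and additionally allows, at any state $q$ in the ``high'' phase, a spontaneous $q$-to-$q$ loop that ignores counter updates. The inclusion $\lang(\na)\subseteq\lang(\nb)$ is the standard unpumping argument; the inclusion $\lang(\nb)\subseteq\lang(\na)\bd$ is where \cref{generalizedblockrepeat} enters: to justify a spontaneous loop one pumps it (and a compensating increasing/decreasing cycle pair, guaranteed to exist since the counter crossed $K$) enough times to obtain a genuine run of $\na$, and since this padding only \emph{repeats} factors already present, the $\nb$-word is $\bo$-below the resulting $\na$-word. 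Finally, general OCA reduce to simple OCA by cutting at zero tests and gluing the $|Q|^2$ resulting closures via multiplicativity (\cref{generalizedblockmultiplicative}); you do not mention this step.
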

Here, the difficulty is that existing downward closure constructions exploit
that inserting any letters in a word yields a super-word. However, for
the block order, this might not be true: Introducing high-priority letters
might split a block unintentionally. However, we observe that the subword
closure construction from~\cite{Atig2016OCAdown} can be modified so that when
constructing larger runs (to show that our NFA only accepts words in the
downward closure), we only repeat existing factors. 
\Cref{generalizedblockrepeat} then yields that the resulting word is
block-larger.

According to \cref{generalizedToPriority}, it suffices to show that block
downward closures are computable in polynomial time (an inspection of the proof
of \cref{generalizedToPriority} shows that computing the priority downward
closure only incurs a polynomial overhead).

\subparagraph{One-counter automata.} One-counter automata are finite state automata with a counter that can be incremented, decremented, or tested for zero. Formally,
a \emph{one-counter automaton (OCA)} $ \na $ is a $ 5 $-tuple $ (Q,\Sigma,\delta,q_0,F) $ where $ Q $ is a finite set of states, $ q_0\in Q $ is an initial state, $ F\subseteq Q $ is a set of final states, $ \Sigma $ is a finite alphabet and $ \delta\subseteq Q\times(\Sigma\cup\{\epsilon \} )\times \{-1,0,+1,z \}\times Q $ is a set of transitions. Transitions $ (p_1,a,s,p_2)\in\delta $ are classified as \emph{incrementing} $ (s=+1) $, \emph{decrementing} $ (s=-1) $, \emph{internal} $ (s=0) $, or \emph{test for zero}$ (s=z) $. 

A \emph{configuration} of an $ OCA $ is a pair that consists of a state and a (non-negative) counter value, i.e., $ (q,n)\in Q\times \N $. 
A sequence $ \pi= (p_0,c_0),t_1,(p_1,c_1),t_2,\cdots, t_m,(p_m,c_m) $ where $ (p_i,c_i)\in Q\times\Z $, $ t_i\in\delta $ and $ (p_{i-1},c_{i-1})\xrightarrow{t_i} (p_i,c_i) $ is called:
\begin{itemize}
	\item a \emph{quasi-run}, denoted $ \pi=(p_0,c_0)\xRightarrow{w}_\na(p_m,c_m)$, if none of $ t_i $ is a test for zero;
	\item a \emph{run}, denoted $ \pi=(p_0,c_0)\xrightarrow{w}_\na(p_m,c_m) $, if all $ (p_i,c_i)\in Q\times\N $.
\end{itemize}
For any quasi-run $ \pi $ as above, the sequence of transitions $ t_1,\cdots,t_m $ is called a \emph{walk} from the state $ p_0 $ to the state $ p_m $.
A run $ (p_0,c_0)\xrightarrow{w}(p_m,c_m) $ is called \emph{accepting} in $ \na $ if $ (p_0,c_0)=(q_0,0) $ where $ q_0 $ is the initial state of $ \na $ and $ p_m $ is a final state of $ \na $, i.e. $ p_m\in F $. In such a case, the word $ w $ is \emph{accepted} by $ \na $.

\subparagraph{Simple one-counter automata} As we will show later, computing
block downward closures of OCA easily reduces to the case of simple OCA. A
\emph{simple OCA (SOCA)} is defined analogously to OCA, with the differences that
(i)~there are no zero tests, (ii)~there is only one final state, (iii)~for acceptance, the final counter value must be zero.

We first show that the block downward closures can be effectively computed for the simple one-counter automata languages.

\begin{restatable}{proposition}{blockdownwardsimpleOCA}\label{blockdownwardsimpleOCA}
	Given a simple OCA $ \na $, we can compute $ \lang(\na)\bd $ in polynomial time.
\end{restatable}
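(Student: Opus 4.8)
The plan is to adapt the subword downward closure construction for one-counter automata from~\cite{Atig2016OCAdown} to the block order. Recall the shape of that construction: one builds an NFA whose states roughly track a state of $\na$ together with some coarse information about whether the counter is ``high'' or ``low'', and whose transitions may output not only single letters read by $\na$ but also entire factors that can be pumped using cycles in the counter (a positive-effect cycle can be iterated arbitrarily often while staying above zero, and symmetrically a negative-effect cycle near the end). The correctness of that NFA has two directions. The easy direction (soundness): every word accepted by the NFA is $\so$-below some word in $\lang(\na)$ — this follows because the NFA's runs correspond to genuine runs of $\na$ with some pumpable factors omitted. The completeness direction: every word in $\lang(\na)\sd$ is accepted — this is the routine one for us, since $\bo$ refines $\so$ means $\lang(\na)\bd\subseteq\lang(\na)\sd$, so the NFA automatically accepts a superset of what we need; the real content is cutting it down to exactly $\lang(\na)\bd$.

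So the main work is reproving soundness \emph{for the block order}: I must show that every word $u$ accepted by the (suitably modified) NFA satisfies $u\bo v$ for some $v\in\lang(\na)$, not merely $u\so v$. The key point, and the reason the construction from~\cite{Atig2016OCAdown} goes through essentially unchanged, is the \textbf{Pumping Lemma} (\cref{generalizedblockrepeat}): if $v = w_1\cdots w_\ell$ is the ``full'' run-word of $\na$ and the NFA omitted some factors, then provided each omitted factor is itself a \emph{repetition of an adjacent factor already present in the word} — i.e.\ we take $w$ and pass to $w_1 w_2^{k_2} w_3 \cdots w_\ell$ rather than inserting foreign letters — the remark after \cref{generalizedblockrepeat} gives $u\bo v$. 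Concretely: the NFA should be designed so that whenever it ``uses'' a counter cycle $c$ to raise or lower the counter, the factor of $\na$'s letters spelled out by that cycle is recorded in the NFA's output, and whenever it ``skips'' iterations of such a cycle it only skips additional copies of that same factor. Then, given $u\in\lang(\na)\bd$ say $u\bo v'\in\lang(\na)$, I take a genuine accepting run of $\na$ on $v'$, decompose it into a bounded-length ``core'' run interleaved with pumped cycles (standard for OCA, using that any long run decomposes into a short run plus iterated cycles), observe the NFA can simulate the core while using each cycle once, obtaining some $u'$ accepted by the NFA with $v' = $ (the NFA-word $u'$ with extra cycle-copies reinserted); the Pumping Lemma then yields $u'\bo v'$, and then a small argument shows $u$ itself (which is $\bo v'$) is accepted — or, more cleanly, one shows directly that the NFA accepts exactly $\lang(\na)\bd$ by combining the two inclusions.

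The steps, in order: (1)~recall/restate the NFA construction of~\cite{Atig2016OCAdown}, isolating the places where it inserts pumpable factors, and modify it so every inserted factor is a copy of an adjacent existing factor (this is where ``our modification'' lives; it may require splitting a cycle-insertion into ``output one copy, then optionally loop outputting more copies of the same factor''); (2)~bound the size of the NFA polynomially in $|\na|$ — the state set is $Q$ together with $O(|Q|)$ bits of counter-regime information and the transition labels are regular languages of polynomial size, exactly as in the original, so this is immediate; (3)~prove $\lang(\text{NFA})\subseteq\lang(\na)\bd$ using \cref{generalizedblockrepeat} and its iterated form as sketched above; (4)~prove $\lang(\na)\bd\subseteq\lang(\text{NFA})$, which reduces via $\bo\ \Rightarrow\ \so$ and the correctness of the original construction to checking that the modification did not remove any accepted words — the modified NFA still accepts $\lang(\na)\sd\supseteq\lang(\na)\bd$; (5)~conclude, noting all constructions are polynomial-time. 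The main obstacle I expect is step~(1)/(3): making the bookkeeping precise so that \emph{every} discrepancy between an NFA-word and a real run-word is an iteration of an adjacent factor, in particular handling the interaction of the ``counter goes up then comes back down'' structure (two separate pumpable cycles, possibly far apart in the word) with the requirement of \cref{generalizedblockrepeat} that we repeat a contiguous factor — one may need to repeat the up-cycle's factor where it occurs and the down-cycle's factor where it occurs, and then appeal to the multiplicativity of $\bo$ (\cref{generalizedblockmultiplicative}) to glue the two local $\bo$-comparisons together.
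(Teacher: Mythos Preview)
Your high-level plan---adapt the construction of \cite{Atig2016OCAdown} and use the Pumping Lemma (\cref{generalizedblockrepeat}) so that every discrepancy between an NFA-word and a real $\na$-word is a repetition of an existing factor---is exactly the paper's idea. But step~(4) as you state it is incoherent and would break the argument: you want the modified NFA to still accept all of $\lang(\na)\sd$, yet step~(3) claims $\lang(\text{NFA})\subseteq\lang(\na)\bd$. Since $\lang(\na)\sd\supsetneq\lang(\na)\bd$ in general, these two cannot both hold, so the NFA cannot compute $\lang(\na)\bd$ on the nose by this route. Relatedly, your narrative paragraph conflates the two directions: the sentence beginning ``given $u\in\lang(\na)\bd$\ldots'' is trying to prove containment of $\lang(\na)\bd$ in the NFA's language but invokes the Pumping Lemma, which naturally goes the other way.

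The paper resolves this by \emph{not} aiming for the NFA to compute the closure. Instead the NFA $\nb$ is a sandwich, $\lang(\na)\subseteq\lang(\nb)\subseteq\lang(\na)\bd$, and one then applies the polynomial regular-language block-closure (\cref{prioritydownwardregular}) to $\nb$. Concretely, $\nb$ simulates $\na$ with bounded counter in three phases, and in the middle phase may additionally execute an arbitrary $q$-to-$q$ walk of $\na$ while \emph{ignoring counter updates}; these ``spontaneous loops'' are the actual modification, not your ``output one copy then more copies''. The inclusion $\lang(\na)\subseteq\lang(\nb)$ holds because whenever an $\na$-run exceeds the counter bound one unpumps an up/down cycle pair as usual and simulates the excised segments as spontaneous loops, leaving the word unchanged. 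The inclusion $\lang(\nb)\subseteq\lang(\na)\bd$ is where \cref{generalizedblockrepeat} enters: given a $\nb$-word $w$ that used a spontaneous loop reading $u$, one repeats $u$, the phase-1 up-cycle $x_2$, and the phase-3 down-cycle $z_2$ suitably many times to obtain a genuine $\na$-word $w'=x_1x_2^{K_1}x_3\,a\,y_1u^{K_2}y_2\,b\,z_1z_2^{K_3}z_3$ with $w\bo w'$; your anticipated use of multiplicativity is subsumed by the iterated form of \cref{generalizedblockrepeat}.
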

We present a rough sketch of the construction, full details can be found in \cref{app:oca}. The starting point of the construction is the one for subwords
in~\cite{Atig2016OCAdown}, but the latter needs to be modified in a non-obvious
way using \cref{generalizedblockrepeat}.

Let $ \na =(Q,\Sigma,\delta,q_0,q_f) $ be a simple OCA, with $ |Q|=K $. We construct an \FSA\ $\nb$ that can simulate $\na$ in three different modes. In the first mode, it simulates $ \na $ until the counter value reaches $ K $, and when the value reaches $ K+1 $, it switches to the second mode. The second mode simulates $ \na $ while the counter value stays below $ K^2+K+1 $. 
Moreover, and this is where our construction differs from \cite{Atig2016OCAdown}: if $\nb$ is in the second mode simulating $\na$ in some state $q$, then $\nb$ can spontaneously execute a loop from $q$ to $q$ of $\na$ while ignoring its counter updates. When the counter value in the second mode drops to $ K $ again, $\nb$ non-deterministically switches to the third mode to simulate $ \na $ while the counter value stays below $ K $. Thus, $\nb$ only needs to track counter values in $[0,K^2+K+1]$, meaning they can be stored in its state. We claim that then $\lang(\na)\subseteq\lang(\nb)\subseteq\lang(\na)\bd$.
\begin{restatable}{lemma}{ocacontainsoriginaloca}\label{lemma:intermediate automata contains original automata}
	$ \lna\subseteq \lnb $.
\end{restatable}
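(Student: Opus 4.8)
The plan is to prove $\lang(\na)\subseteq\lang(\nb)$ by taking an arbitrary accepting run of $\na$ and showing that $\nb$ can simulate it, possibly after replacing the run by a longer one on a block-larger word (which does not hurt, since we only need $\lang(\na)\subseteq\lang(\nb)$ and $\nb$ will in turn be shown to stay inside the block downward closure). So fix $w\in\lang(\na)$ with an accepting run $\pi=(q_0,0)\xrightarrow{w}(q_f,0)$. The counter profile of $\pi$ is a lattice path from $0$ to $0$; I would first locate the structure of $\pi$ relative to the thresholds $K$ and $K^2+K+1$ that $\nb$ uses to switch modes.

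First I would handle the easy case: if the counter value of $\pi$ never exceeds $K$, then $\nb$ can run entirely in the third mode (or, symmetrically, entirely in the first mode while staying $\le K$), simply tracking the exact counter value in its state, so $w\in\lang(\nb)$. In the general case, let $i$ be the first time the counter reaches $K+1$ and let $j$ be the last time it is at $K+1$ before the final descent to $0$; before $i$, $\nb$ simulates $\na$ faithfully in mode~1, and after $j$, $\nb$ simulates $\na$ faithfully in mode~3 (both segments keep the counter in $[0,K+1]$, storable in the state). The heart of the argument is the middle segment from $i$ to $j$, where the counter of $\pi$ stays $\ge K+1$ but may climb arbitrarily high. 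Here I would argue that this segment can be replaced by a run that stays within $[K,K^2+K+1]$ on a word that is block-larger, using the spontaneous-loop feature of mode~2 together with \cref{generalizedblockrepeat}.

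The key combinatorial step — and the main obstacle — is the following surgery on the high part of the run. Consider a maximal sub-run of $\pi$ lying entirely above level $K$. Whenever its counter climbs above $K^2+K+1$, by a pigeonhole argument on the $K$ states there must be a state $q$ repeated at two points where the counter increased by a net positive amount in between: this gives a positive loop from $q$ to $q$ reading some factor $v$. Excising all such high excursions lowers the run to stay within $[K, K^2+K+1]$, at the cost of \emph{deleting} the factors read during those excursions; but \cref{generalizedblockrepeat} is exactly the tool to recover them — instead of deleting the loop, $\nb$ in mode~2 can spontaneously re-execute the loop (ignoring counter updates) to re-insert a repeated copy of an existing factor, and $uvw\bo uvvw$ guarantees the resulting word only grows in the block order. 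Making this precise requires carefully interleaving the descents and ascents so that the net counter change is realized while never leaving the allowed window, and checking that every re-inserted factor genuinely is a repetition of a factor that remains present in the simulated word (so that the Pumping Lemma applies, iterated as noted after \cref{generalizedblockrepeat}). I would organize this as: decompose the above-$K$ part of $\pi$ into an ascending phase, a "plateau" phase, and a descending phase; bound each phase's counter by $K^2+K+1$ after removing pigeonhole loops; then record the removed loops as the spontaneous mode-2 loops of $\nb$, yielding a run of $\nb$ on a word $w'$ with $w\bo w'$, hence $w\in\lang(\nb)$ (noting $\lang(\nb)$ will be shown downward closed, or directly that $w'$ itself is accepted and $w\bo w'$ suffices once $\lnb\subseteq\lang(\na)\bd$ is combined — but for this lemma we only need $w\in\lang(\nb)$, so I would instead ensure $\nb$ can produce $w$ itself by having mode~2 optionally \emph{not} take the spontaneous loops and instead, since the point is containment, simply observe that $\nb$'s faithful simulation in mode~2 already accepts $w$ whenever the middle counter stays bounded, and when it does not, the pigeonhole loop removal plus re-insertion shows $w$ is accepted on the nose because the spontaneous loops reproduce exactly the removed factors).

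The cleanest framing avoids the detour: I would prove directly that for every accepting run of $\na$ on $w$, there is a run of $\nb$ reading exactly $w$. The faithful simulation handles the prefix up to counter value $K+1$ (mode~1) and the suffix from the last visit to $K+1$ (mode~3). For the middle part, I would show that whenever $\pi$'s counter exceeds $K^2+K+1$, we can find a state-repeating positive loop whose counter updates $\nb$ is allowed to ignore in mode~2; by inserting these very loops as spontaneous mode-2 loops, $\nb$ reads the same letters while keeping its tracked counter in $[K,K^2+K+1]$. A measure-decreasing argument (on, say, the total excess $\sum \max(0,c_t-(K^2+K+1))$ or on the run length above the threshold) shows this process terminates, and the invariant "$\nb$ reads exactly $w$, tracked counter in range, current $\na$-state matches" is preserved throughout. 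The main obstacle is bookkeeping the zigzag of the counter so that ignoring a loop's updates never forces the tracked value out of $[K,K^2+K+1]$ or below $0$ — this is where the specific thresholds $K$ and $K^2+K+1$ are tuned, exactly as in \cite{Atig2016OCAdown}, and the only genuinely new ingredient over that reference is that we use spontaneous loops (licensed by \cref{generalizedblockrepeat} for the harder inclusion $\lnb\subseteq\lang(\na)\bd$) rather than arbitrary letter insertions.
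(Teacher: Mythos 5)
The proposal follows the same overall structure as the paper's proof (decompose the run into a low prefix, a high middle segment, and a low suffix; faithfully simulate the prefix in mode~1 and the suffix in mode~3; handle the high middle segment by pigeonhole and spontaneous mode-2 loops), and you correctly converge on the right framing after some back-and-forth: show that $\nb$ accepts $w$ itself, with no need for the block order or \cref{generalizedblockrepeat} in this direction.

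However, there is a genuine gap in the pigeonhole step. Your phrase ``by a pigeonhole argument on the $K$ states there must be a state $q$ repeated'' only produces a single positive loop, which corresponds to a threshold of $K+1$, not $K^2+K+1$. Excising that loop alone shifts every later counter value down and therefore pushes the end of the descent below zero, so the resulting path is not one that $\nb$ can track. The pigeonhole must instead be on \emph{pairs} of states: for each counter level $i$ above $K$, take $(p_l^i, p_r^i)$ where $p_l^i$ is on the ascending part and $p_r^i$ on the descending part of the high excursion. There are only $K^2$ such pairs, so among the $K^2+1$ levels $K+1,\dots,K^2+K+1$ two levels $i<j$ reuse the same pair; this yields \emph{simultaneously} an up-loop at $p_l^i$ (effect $+(j-i)$) and a down-loop at $p_r^i$ (effect $-(j-i)$). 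Excising both together shifts the middle portion down by exactly $j-i$ while leaving the portions before the up-loop and after the down-loop untouched, so the tracked counter never leaves $[K,K^2+K+1]$ nor dips below $0$; $\nb$ then re-reads the two excised factors via spontaneous mode-2 loops. You flag the counter bookkeeping as ``the main obstacle'' and defer to the thresholds being ``tuned exactly as in \cite{Atig2016OCAdown}'', but this paired pigeonhole is precisely that obstacle, and it is the step that has to be made explicit for the proof to go through.
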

If a word in $\lang(\na)$ has a run with counters bounded by $K^2+K+1$, then it trivially belongs to $\lang(\nb)$.
If the counters go beyond $K^2+K+1$, then with the classical ``unpumping'' argument, one can extract two loops, one increasing the counter, one decreasing it. These loops can then be simulated by the spontaneous loops in the second mode of $\nb$.

The more interesting inclusion is the following:
\begin{restatable}{lemma}{ocadownwardclosurecontainsoca}\label{lem:ocadownwardclosurecontainsoca}
	$ \lnb\subseteq \lna\bd $.
\end{restatable}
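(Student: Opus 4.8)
The plan is to show $\lang(\nb)\subseteq\lang(\na)\bd$ by taking an accepting run of $\nb$ on a word $w$ and constructing from it an accepting run of $\na$ on a word $w'$ with $w\bo w'$. The run of $\nb$ decomposes into three phases according to the three modes. In the first mode, $\nb$ simulates $\na$ faithfully (counter values in $[0,K]$), so these parts of the run translate directly into genuine $\na$-steps producing exactly the letters read. The same is true verbatim for the third mode (counter values back in $[0,K]$). The only discrepancy is in the second mode: there $\nb$ additionally performs \emph{spontaneous loops} from some state $q$ to $q$, ignoring the counter. The key point is that each such spontaneous loop corresponds to an actual loop $\pi_q$ on the state $q$ in $\na$ — a walk from $q$ to $q$ — whose counter effect (the net change plus the minimal counter level it needs) we must account for when stitching everything into a real $\na$-run.

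First I would make precise how to ``charge'' each spontaneous loop to a real occurrence of the same loop in the reconstructed $\na$-run. The idea is that, because the second mode only operates while the counter is in the window $[K,K^2+K+1]$, whenever $\nb$ takes a spontaneous loop at state $q$ at counter level $c$, the word read by that loop is some factor $u$, and in $\na$ we can instead genuinely execute the loop $\pi_q$ reading $u$ — provided the counter is high enough to survive the loop, or provided we first pump up an incrementing loop. Here is where \cref{generalizedblockrepeat} becomes essential: rather than inserting new letters, we only ever \emph{repeat existing factors} of $w$ (the incrementing/decrementing loops extracted à la the unpumping argument, or the loop bodies themselves), so the word $w'$ we ultimately feed to $\na$ satisfies $w\bo w'$ by the Pumping Lemma (applied finitely many times, as noted after its statement). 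Concretely, I would argue that the reconstructed $\na$-run reads a word of the form obtained from $w$ by repeating finitely many factors, and that its counter never goes negative and ends at zero, hence it is accepting in $\na$; therefore $w\bo w'\in\lang(\na)$, giving $w\in\lang(\na)\bd$.

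The heart of the argument — and the main obstacle — is the counter bookkeeping: one must show that the finitely many factor-repetitions we are allowed to perform actually suffice to make the reconstructed $\na$-run legal (nonnegative counter throughout, zero at the end). This is exactly the role of the thresholds $K$, $K^2+K+1$: the value $K$ guarantees that any walk visiting $>K$ configurations contains a state repetition, hence a pumpable loop, and the window up to $K^2+K+1$ leaves enough headroom that an incrementing loop of length $\le K$ and net effect $\ge 1$, when pumped, can raise the counter high enough to absorb all the spontaneous loops executed in the second phase before a matching decrementing loop brings it back down. I would carry this out by: (1) isolating the ``increasing'' loop $\lambda_+$ and ``decreasing'' loop $\lambda_-$ from the first/third mode boundary crossings (standard unpumping on a walk of counter-height $>K$); (2) for each spontaneous loop in phase two, identifying the corresponding $\na$-loop and its counter demand; (3) choosing multiplicities for $\lambda_+$ and $\lambda_-$ and inserting the real copies of the phase-two loops at the high-counter plateau so the profile stays nonnegative; (4) verifying the net counter change is zero so the run is accepting; (5) observing that every modification was a repetition of an existing factor of $w$, so $w\bo w'$ by \cref{generalizedblockrepeat}. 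Steps (2)–(3) are where the subtlety lies, because a spontaneous loop in $\nb$ carries no counter information, so we must be careful that its $\na$-counterpart can always be scheduled at a counter level where it does not underflow — which is why the simulation deliberately restricts spontaneous loops to the high-counter second mode.
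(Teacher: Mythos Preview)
Your proposal follows essentially the same route as the paper: decompose the $\nb$-run into the three modes, extract an incrementing loop $\lambda_+$ (effect $k$) and a decrementing loop $\lambda_-$ (effect $\ell$) by pigeonhole at the mode-1/mode-3 boundaries, then pad the run so that the spontaneous phase-2 loops become genuine $\na$-steps, with all padding consisting of repetitions of existing factors so that \cref{generalizedblockrepeat} gives $w\bo w'$.

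One point you should make explicit, because step~(4) does not go through without it: choosing multiplicities for $\lambda_+$ and $\lambda_-$ alone is \emph{not} enough to zero out the net counter change. If the spontaneous loops together have counter effect $E$, you need nonnegative integers $a,b$ with $ak-b\ell=-E$, and for arbitrary $E,k,\ell$ no such solution need exist. The paper's fix (which you hint at with ``or the loop bodies themselves'') is to also \emph{repeat each spontaneous loop in place} enough times that the total effect becomes a multiple of $k\cdot\ell$; then exact compensation by $\lambda_+,\lambda_-$ is always possible, and one further chooses the $\lambda_+$-multiplicity large enough that the counter stays nonnegative throughout the repeated spontaneous loops. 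Since these loop-body repetitions are again factor repetitions at their original positions, \cref{generalizedblockrepeat} still applies. With this arithmetic point spelled out, your outline matches the paper's proof.
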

We have to show that each spontaneous loop in $\nb$ can be justified by padding the run with further loop executions so as to obtain a run of $\na$. This is possible because to execute such a spontaneous loop, we must have gone beyond $K$ and later go to zero again. Thus, there exists a ``pumping up'' loop adding, say $k\ge 0$ to the counter, and a ``pumping down'' loop, subtracting, say $\ell\ge 0$ from the counter. We can therefore repeat all spontaneous loops so often that their effect --- when seen as transitions in $ \na $ --- is a (positive or negative) multiple $M$ of $k\cdot\ell$. Then, we execute the $k$- and the $\ell$-loop so often so as to get the counter values so high that (i)~our repeated spontaneous loops never cross zero and (ii)~the effect difference of the new loops is exactly $M$. Since in our construction (in contrast to \cite{Atig2016OCAdown}), the padding only \emph{repeated words that already exist} in the run of $\nb$, \cref{generalizedblockrepeat} implies that the word of $\nb$ embeds via the block order.

\subparagraph{General OCA} Let us now show how to construct the block downward
closure of general OCAs. Suppose we are given an OCA $\na$. For any two states
$p,q$, consider the simple OCA $\na_{p,q}$ obtained from $\na$ by removing all
zero tests, making $p$ initial, and $q$ final. Then $\lang(\na)$ is the set of
words read from $(p,0)$ to $(q,0)$ without using zero tests. We now compute for
each $p,q$ a finite automaton $\nb_{p,q}$ for the block downward closure of
$\na_{p,q}$. Clearly, we may assume that $\nb_{p,q}$ has exactly one initial
state and one final state. Finally, we obtain the finite automaton $\nb$  from
$\na$ as follows: We remove all transitions \emph{except} the zero tests. Each
zero test from $p$ to $q$ is replaced with an edge
$p\xrightarrow{\varepsilon}q$. Moreover, for any states $p$ and $q$ coming from
$\na$, we glue in the automaton $\nb_{p,q}$ (by connecting $p$ with
$\nb_{p,q}$'s initial state and connecting $\nb_{p,q}$'s final state with $q$).
Then, since the block order is multiplicative, we have that $L(\nb)$ accepts
exactly the block downward closure of $\na$.

Futhermore, note that since our construction for simple OCA is polynomial, the
general case is as well: The latter employs the former to $|Q|^2$ simple OCAs.

\section{Context-free Languages}\label{sec:cfl}

\newcommand{\cG}{\mathcal{G}}
\newcommand{\cH}{\mathcal{H}}
\newcommand{\cE}{\mathcal{E}}
\newcommand{\cR}{\mathcal{R}}
\newcommand{\acyclic}{\mathsf{acyclic}}
\newcommand{\derivs}{\xRightarrow{*}}
\newcommand{\deriv}{\Rightarrow}

The key trick in our construction for OCA was that we could modify the subword construction so that the overapproximating NFA $\nb$ has the property that in any word from $\lang(\nb)$, we can repeat factors to obtain a word from $\na$. This was possible because in an OCA, essentially any pair of loops---one incrementing, one decrementing---could be repeated to pad a run.

However, in context-free languages, the situation is more complicated. With a
stack, any pumping must always ensure that stack contents match: It is not
possible to compensate stack effects with just two loops.  In terms of
grammars, the core idea for subword closures of context-free languages $L$ is
usually to overapproximate ``pump-like'' derivations $X\derivs uXv$ by
observing that---up to subwords---they can generate any $u'Xv'$ where the
letters of $u'$ can occur on the left and the letters of $v'$ can occur on the
right in derivations $X\derivs \cdot X\cdot$. Showing that all such words
belong to the downward closure leads to derivations $X\derivs
u''\bar{v}Xv''\bar{u}$, where $u'',v''$ are super-words of $u',v'$ such that
$X\derivs u''X\bar{u}$ and $X\derivs \bar{v}Xv''$ can be derived. The
additional infixes could introduce high priority letters and thus split blocks
unintentionally.

Therefore, we provide a novel recursive approach to compute the block downward
closure by decomposing derivations at high-priority letters. This is
non-trivial as this decomposition might not match the decomposition given
by derivation trees.  Formally, we show:
\begin{theorem}\label{main-cfl}
	Given a context-free language $L\subseteq\lowerletters{n}^*$, one can construct
	a doubly-exponential-sized automaton for $L\bd$, and thus also for
	$L\pd$.
\end{theorem}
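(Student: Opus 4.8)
The plan is to prove \cref{main-cfl} by induction on the number of priority levels $n$. The base case $n=0$ (or $n=1$ in the flat normalization) is just the classical result that one can compute a finite automaton for the subword downward closure of a context-free language (Courcelle / van Leeuwen), since there $\bo$ coincides with $\so$. For the inductive step, assume $L\subseteq\lowerletters{n}^*$ is given by a context-free grammar $\cG$ in Chomsky normal form, and let $a$ be the unique letter of priority $n$ (after flattening via \cref{generalizedToPriority}, which only costs a polynomial blow-up and lets us assume one letter per level). The idea is to decompose any word $w\in L$ along its occurrences of $a$: $w = w_0 a w_1 a \cdots a w_m$ with each $w_i\in\lowerletters{n-1}^*$. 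A word $u$ is $\bo$-below $w$ iff $u$ has a matching decomposition $u=u_0 a u_1 a\cdots a u_k$, there is a strictly monotone map $\phi$ with $\phi(0)=0$, $\phi(k)=m$, $u_i\bo w_{\phi(i)}$ recursively, and each $a$ of $u$ is covered, i.e.\ $a \so w_{\phi(i)} a w_{\phi(i)+1}\cdots w_{\phi(i+1)}$ — which, since $a$ is a single letter of top priority, just means $\phi(i+1)>\phi(i)$, automatically true. So the block condition reduces to: the sub-$n$ blocks of $u$ embed monotonically (first-to-first, last-to-last) into those of $w$, recursively in $\bo$ on $\lowerletters{n-1}^*$.

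Next I would capture this combinatorially. For a CNF grammar, I would analyze how the letter $a$ appears in derivations. Call a nonterminal $X$ \emph{$a$-productive} if some derivation $X\derivs \cdot a\cdot$ exists; classify each nonterminal as \emph{pure} (all derivable words lie in $\lowerletters{n-1}^*$) or not. For a pure nonterminal, $L(X)\subseteq\lowerletters{n-1}^*$ is context-free over a smaller priority alphabet, so by the induction hypothesis I can compute a (doubly-exponential-in-$(n-1)$) automaton for $L(X)\bd$. For a non-pure nonterminal, I would extract from the grammar a finite "skeleton" description of the sequence of $a$'s and the intervening sub-$n$ blocks it can produce: the intervening maximal $a$-free stretches are generated by pure nonterminals (or concatenations thereof) whose languages live at level $\le n-1$. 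Concretely, I would build an auxiliary grammar / automaton over the alphabet $\{a\} \cup \{B : B \text{ a fresh symbol for each pure subexpression}\}$ recording the "shape" of derivations modulo their sub-$n$ content. Because first and last block must map to first and last, I would track, for each non-pure $X$, the set of possible (first pure-block, $\ldots$, last pure-block) patterns — but since there can be unboundedly many $a$'s, I keep only a regular abstraction of the middle: which pure nonterminals can appear, in which order-constraints, between two consecutive $a$'s, and at the two ends. This is exactly where a regular (finite-automaton) abstraction of the $a$-delimited structure suffices, because the block-order condition on the middle blocks only asks for \emph{existence} of a monotone embedding, and middle blocks need no endpoint constraint — so a finite automaton that, reading a downward-closed word $u=u_0 a\cdots a u_k$, guesses $\phi$ on the fly and checks $u_i\bo w_{\phi(i)}$ using the inductively-built automata for the relevant pure-block languages, plus matching endpoints, is enough. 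Assembling: $L\bd = \bigcup$ over the shapes, substituting into each "slot" the corresponding $\bd$-automaton from the induction hypothesis, and closing the $a$-skeleton under the regular $a$-part.

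The main obstacle — flagged in the paper itself — is that the decomposition of a word along its top-priority letters need not align with the derivation tree: a single grammar rule $X\to YZ$ can place $a$'s inside both $Y$ and $Z$, so "sub-$n$ blocks" straddle subtree boundaries, and a "pump" $X\derivs uXv$ may add $a$'s on both sides, coupling the left and right context. So I cannot simply treat each nonterminal's language independently. I would handle this by first putting the grammar into a form that makes the $a$-structure explicit: introduce, for each nonterminal $X$, finitely many "interface" variants recording whether the portion of the derived word to be tracked is entirely $a$-free, begins/ends with an $a$-free stretch bordered by $a$'s, etc., and rewrite rules accordingly (a bounded blow-up). After this refactoring, every maximal $a$-free stretch is generated by a designated pure nonterminal, the $a$'s form a context-free — hence, since we only need them up to the block embedding and they are all a single top-priority letter, effectively regular — skeleton, and the induction hypothesis plugs into the $a$-free slots. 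Then one shows soundness (the constructed automaton accepts only words in $L\bd$) by exhibiting, for each accepted $u$, an actual derivation and witness block map $\phi$; and completeness (every $u\bo w$ for $w\in L$ is accepted) by reading off $\phi$ from the assumed embedding. Tracking sizes: the pure-slot automata are doubly exponential in $n-1$; the $a$-skeleton automaton and the interface refactoring add a single-exponential factor and a product, which stays doubly exponential in $n$. Since $\bd$ refines the desired output and, by \cref{generalizedToPriority} together with \cref{prioritydownwardregular}, a block-closure automaton yields a priority-closure automaton with only polynomial overhead, the bound for $L\pd$ follows as well.
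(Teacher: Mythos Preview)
Your proposal shares the high-level inductive shape with the paper (induction on priority levels, flatten the alphabet, recurse on languages with fewer priorities), but the concrete machinery differs and there is a real gap at the step that carries all the weight.

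The paper does \emph{not} decompose the generated words along the single top letter~$a$. Instead, it decomposes \emph{pump derivations} $X\derivs uXv$ according to the pair $(r,s)$ of maximal priorities appearing in $u$ and in $v$ (which need not be the global maximum). For each such pump it keeps $u$ and $v$ \emph{together} by encoding them as a single word $\lend_r(u)\,\#\,\rend_s(v)$ over fresh \emph{zero-priority} separators $\#,\lhash,\rhash$ (the language $E_{X,r,s}$), and it separately extracts the ``repeatable'' sub-blocks $\lrepeat_{X,r,s}$, $\rrepeat_{X,r,s}$. \Cref{embedding-fresh-letter} is the crux: because the separators have priority~$0$, one has $u\#v\bo u'\#v'$ iff $u\bo u'$ and $v\bo v'$ separately. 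That is what lets the recursion drop a priority level while preserving the left--right correlation of each pump. The whole thing is packaged as a Kleene grammar in which \emph{acyclic} derivations already suffice (\cref{cfl-acyclicity}), and the NFA is read off from those (\cref{acyclic-to-nfa}).

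Your proposal does not supply a comparable mechanism. The ``interface variants'' are supposed to make the $a$-structure explicit, but you never say how the coupling between the two sides of a pump $X\derivs uXv$ survives the refactoring: the first sub-$n$ block contributed by $u$ and the last sub-$n$ block contributed by $v$ come from the \emph{same} derivation, so you cannot compute their downward closures independently and then freely recombine them. The assertion that ``the $a$'s form a context-free --- hence \ldots{} effectively regular --- skeleton'' is also unjustified: the skeleton is genuinely context-free, and the block condition on the intermediate blocks (not merely the number of $a$'s) must be checked against a context-free set of possible block sequences, which is exactly the hard part. Moreover, a maximal $a$-free stretch between two $a$'s need not be generated by a single pure nonterminal or even by a bounded concatenation of them; it can straddle arbitrarily many subtree boundaries, and the ``bounded blow-up'' refactoring you allude to would have to be spelled out before one could see whether it avoids this. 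In short, the outline is plausible, but the step that does the work --- turning coupled pumps into lower-priority languages without losing the coupling --- is missing; that is precisely what the paper achieves with $E_{X,r,s}$, the fresh zero-priority separators, and \cref{embedding-fresh-letter,grammar-correctness}.
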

We do not know if this doubly exponential upper bound is optimal. A
singly-exponential lower bound follows from the subword case: It is known that
subword downward closures of context-free languages can require exponentially
many states~\cite{DBLP:conf/lata/BachmeierLS15}.  However, it is not clear
whether for priority or block downward closures, there is a singly-exponential
construction.

We again note that \cref{generalizedToPriority} (and its proof) imply that for
\cref{main-cfl}, it suffices to compute a finite automaton for the block
downward closure of the context-free language: Computing the priority downward closure
then only increases the size polynomially.

\subparagraph{Grammars} We present the construction using \emph{context-free
grammars}, which are tuples $\cG=(N,T,P,S)$, where $N$ is a finite set of
\emph{non-terminal letters}, $T$ is a finite set of \emph{terminal letters}, $P$ is a
finite set of \emph{productions} of the form $X\to w$ with $X\in N$ and $w\in
(N\cup T)^*$, and $S$ is the \emph{start symbol}. For $u,v\in (N\cup T)^*$, we have $u\deriv
v$ if there is a production $X\to w$ in $P$ and $x,y\in (N\cup T)^*$ with
$u=xXy$ and $v=xwy$. The \emph{language generated by $\cG$}, is then
$\lang(\cG):=\{w\in T^* \mid S\derivs w\}$, where $\derivs$ is the reflexive,
transitive closure of $\deriv$.

\subparagraph{Assumption on the alphabet} In order to compute block downward
closures, it suffices to do this for flat alphabets (see
\cref{sec:block-order}).  The argument is essentially the same as in
\cref{generalizedToPriority}: By flattening the alphabet as in the proof of
\cref{generalizedToPriority}, we obtain a finer block order, so that first
computing an automaton for the flat alphabet and then applying
\cref{prioritydownwardregular} to the resulting finite automaton will yield a
finite automaton for the original (non-flat) alphabet.  In the following, we
will assume that the input grammar $\cG$ is in Chomsky normal form, meaning
every production is of the form $X\to YZ$ for non-terminals $X,Y,Z$, or of the
form $X\to a$ for a non-terminal $X$ and a terminal $a$.

\subparagraph{Kleene grammars} Suppose we are given a context-free grammar
$\cG=(N,\Sigma,P,S)$.  Roughly speaking, the idea is to construct another grammar
$\cG'$ whose language has the same block downward closure as $\lang(\cG)$, but
with the additional property that every word can be generated using a
derivation tree that is \emph{acyclic}, meaning that each path contains every
non-terminal at most once. Of course, if this were literally true, $\cG'$ would
generate a finite language.  Therefore, we allow a slightly expanded syntax: We
allow Kleene stars in context-free productions.

This means, we allow right-hand sides to contain occurrences of $B^*$, where
$B$ is a non-terminal. The semantics is the obvious one: When applying such a
rule, then instead of inserting $B^*$, we can generate any $B^k$ with $k\ge 0$.
We call grammars with such productions \emph{Kleene grammar}. A
\emph{derivation tree} in a Kleene grammar is defined as for context-free
grammars, aside from the expected modification: If some $B^*$ occurs on a
right-hand side, then we allow any (finite) number of $B$-labeled children in
the respective place. Then indeed, a Kleene grammar can generate infinite sets
using acyclic derivation trees. Given a Kleene grammar $\cH$, let
$\acyclic(\cH)$ be the set of words generated by $\cH$ using acyclic derivation
trees.
\begin{lemma}\label{acyclic-to-nfa}
	Given a Kleene grammar $\cH$, one can construct an exponential-sized
	finite automaton accepting $\acyclic(\cH)$.
\end{lemma}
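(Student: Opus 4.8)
The plan is to build the finite automaton by a standard subset-style construction that tracks, along each branch of a derivation tree, which non-terminals have already been used, thereby bounding the length of any branch by $|N|$. Concretely, let $\cH=(N,T,P,S)$ be the Kleene grammar. I would define an NFA whose states are pairs $(\alpha, U)$, where $U\subseteq N$ is the set of non-terminals already seen on the current path, together with a representation of the ``pending'' suffix $\alpha$ of symbols still to be processed. Since acyclic derivation trees have branches of length at most $|N|$, and each right-hand side has bounded length, the relevant stack-like suffixes that ever arise have length bounded by a function of $|N|$ and the maximal right-hand-side length; hence there are at most exponentially many such configurations. The automaton simulates a leftmost derivation: from a configuration whose leading symbol is a terminal $a$, it reads $a$ and advances; from a configuration whose leading symbol is a non-terminal $X\notin U$, it nondeterministically picks a production $X\to w$ (or, for a starred occurrence $B^*$, picks some bounded ``unrolled'' prefix and either stops or re-enters) and pushes the right-hand side while adding $X$ to $U$; a configuration with empty pending suffix is accepting.

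The key steps, in order: (1) fix the notion of leftmost acyclic derivation and observe that it suffices to simulate those, since every word in $\acyclic(\cH)$ has a leftmost acyclic derivation and conversely; (2) handle the Kleene stars: a starred symbol $B^*$ on a right-hand side can be expanded to arbitrarily many copies of $B$, so in the simulation I keep $B^*$ itself as a ``regenerating'' stack symbol that can either be popped (generating nothing more) or can spawn one more $B$ subtree and remain; crucially, entering a $B$-subtree does \emph{not} commit $B$ to the ``seen'' set of the ongoing $B^*$-branch in a way that forbids reuse across sibling copies, but within a single root-to-leaf path each non-terminal still appears at most once because acyclicity is a path condition, not a tree condition --- I would make this precise by having the ``seen'' set be carried down into each child and reset appropriately for the subtree rooted at that child; (3) bound the configuration space: argue that in a leftmost simulation of an acyclic tree, the pending suffix is a concatenation of at most $|N|$ right-hand-side tails (one per ancestor on the current path), so its length is $O(|N|\cdot r)$ where $r$ is the maximal right-hand-side length, giving at most $|T\cup N\cup\{B^*:B\in N\}|^{O(|N|\cdot r)}\cdot 2^{|N|}$ states, which is exponential; (4) verify correctness: $\lang(\text{NFA}) = \acyclic(\cH)$, by showing a bijection-like correspondence between accepting runs and leftmost acyclic derivation trees.

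The main obstacle I anticipate is step (2) combined with the bookkeeping of the ``seen'' set in the presence of Kleene stars. The subtlety is that $\acyclic$ forbids a non-terminal from repeating along a \emph{path}, but under a $B^*$ a branch may fork into many $B$-rooted subtrees, each of which is itself an acyclic subtree --- so the ``seen'' set must be passed down into a child fresh for that child's subtree (reset to reflect only that child's ancestors), rather than globally accumulated across siblings. In a left-to-right NFA simulation this means the ``seen'' component of the state cannot simply be the union of everything read so far; instead it must be tied to the current position in the pending suffix, i.e., each segment of the pending suffix should carry its own ancestor set. I would resolve this by letting a state be a \emph{sequence} of (right-hand-side tail, ancestor-set) pairs of bounded length, which still yields only exponentially many states. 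Once this is set up correctly, the transition rules and the correctness proof are routine, so I would keep those brief and focus the writeup on the state design and the length bound on the pending suffix.
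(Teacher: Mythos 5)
Your proposal is essentially the same approach the paper sketches: simulate a preorder (equivalently leftmost) traversal of an acyclic derivation tree, with the NFA state encoding the root-to-current-node path, and bound the state count exponentially via the observation that acyclicity limits path length to $|N|$. Your detour about scoping the ``seen'' set per branch segment rather than globally, and your resolution via a bounded-length sequence of (right-hand-side tail, ancestor-set) pairs, is just a mildly redundant encoding of that same path (the ancestor sets are recoverable from the frames), so the construction and the complexity bound match the paper's.
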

\begin{proof}[Proof sketch]
	The automaton simulates a (say, preorder) traversal of an acyclic
	derivation tree of $\cH$.  This means, its state holds the path to the
	currently visited node in the derivation tree. Since every path has
	length at most $|N|$, where $N$ is the set of non-terminals of $\cH$,
	the automaton has at most exponentially many states.   
\end{proof}

Given \cref{acyclic-to-nfa}, for \cref{main-cfl}, it suffices to construct a
Kleene grammar $\cG'$ of exponential size such that
$\acyclic(\cG')\bd=\lang(\cG)\bd$.

\subparagraph{Normal form and grammar size} We will ensure that in the
constructed grammars, the productions are of the form (i)~$X\to w$, where $w$
is a word of length $\le 3$ and consisting of non-terminals $Y$ or Kleene stars
$Y^*$ or (ii)~$X\to a$ where $a$ is a terminal. This means, the total size of
the grammar is always polynomial in the number of non-terminals. Therefore, to
analyze the complexity, it will suffice to measure the number of non-terminals.
\newcommand{\lend}{\overleftarrow{\tau}}
\newcommand{\rend}{\overrightarrow{\tau}}
\newcommand{\rhash}{\overrightarrow{\#}}
\newcommand{\lhash}{\overleftarrow{\#}}

\newcommand{\Max}[1]{\Sigma_{\max #1}^+}
\newcommand{\MaxWithout}[1]{\Sigma_{\max #1}}
\subparagraph{Highest occurring priorities} Similar to classical downward
closure constructions for context-free languages, we want to overapproximate
the set of words generated by ``pump derivations'' of the form $X\derivs uXv$.
Since we are dealing with priorities, we first partition the set of such
derivations according to the highest occurring priorities, on the left and on
the right.  Thus, for $r,s\in[0,p]$, we will consider all derivations $X\derivs
uXv$ where $r$ is the highest occurring priority in $u$ and $s$ is the highest
occurring priority in $v$.  To ease notation, we define $\MaxWithout{r}$ to be
the set of words in $\lowerletters{r}^*$ in which $r$ is the highest occurring priority.
Since $\MaxWithout{r}=\Max{r}$, we will write $\Max{r}$ to remind us that this
is not an alphabet. Notice that for $r\in[1,p]$, we have
$\Max{r}=\lowerletters{r}^*r\lowerletters{r}^*$ and $\Max{0}=\lowerletters{0}^*$.

\subparagraph{Language of ends} 
In order to perform an inductive construction, we need
a way to transform pairs $(u,v)\in\Max{r}\times\Max{s}$ into words over an alphabet with fewer
priorities. Part of this will be achieved by the \emph{end maps}  $\lend_r(\cdot)$ and $\rend_s(\cdot)$ as follows.
Let $\hat{\Sigma}$ be the priority alphabet obtained from $\Sigma$ by adding the letters $\#$, $\lhash$, and $\rhash$ as letters with priority zero. Now for $r\in[1,p]$, the function $\lend_r\colon \Max{r}\to \hat{\Sigma}_{\leq r-1}^*$
is defined as:
\[ \lend_r(w) = u\lhash v,~\text{where $w=urx_1r\cdots x_nrv$ for some $n\ge 0$, $u,v,x_1,\ldots,x_n\in\lowerletters{r-1}^*$}. \]
Thus, $\lend_r(w)$ is obtained from $w$ by replacing the largest
possible infix surrounded by $r$ with $\lhash$. For $r=0$, it will be convenient to have the constant function $\lend_0\colon\Max{0}\to \{\lhash\}$.
Analogously, we define for $s\in[1,p]$
the function $\rend_s\colon \Max{s}\to\hat{\Sigma}_{\leq s-1}^*$ by
\[ \rend_s(w)=u\rhash v,~\text{where $w=usx_1s\cdots x_nsv$ for some $n\ge 0$, $u,v,x_1,\ldots,x_n\in\lowerletters{s-1}^*$}. \]
Moreover, we also set $\rend_0\colon\Max{0}\to\{\rhash\}$ to be the constant function yielding $\rhash$.

In particular, for $r,s\in[1,p]$, we have $\lend_r(w),\rend_s(w)\in\hat{\Sigma}_{\leq p-1}$
and thus we have reduced the number of priorities. Now consider for $r,s\in[0,p]$ the language
\begin{align*}
	E_{X,r,s} = \{ \lend_r(u)\#\rend_s(v) \mid X\derivs uXv,~u\in\lowerletters{r}^*r\lowerletters{r}^*,~v\in\lowerletters{s}^*s\lowerletters{s}^* \}.
\end{align*}
For the language $E_{X,r,s}$, it is easy to construct a context-free grammar:
\newcommand{\egrammar}{\cE}
\begin{restatable}{lemma}{transformEnd}\label{transform-end}
	Given $\cG$, a non-terminal $X$, and $r,s\in[0,p]$, one can construct a grammar $\egrammar_{X,r,s}$ for $E_{X,r,s}$ of linear size.
\end{restatable}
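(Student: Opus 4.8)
The goal is: given $\cG$ in Chomsky normal form, a non-terminal $X$, and priorities $r,s \in [0,p]$, build a context-free grammar $\egrammar_{X,r,s}$ of linear size generating
$$E_{X,r,s} = \{\lend_r(u)\#\rend_s(v) \mid X \derivs uXv,~ u \in \lowerletters{r}^*r\lowerletters{r}^*,~ v \in \lowerletters{s}^*s\lowerletters{s}^*\}.$$

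The plan is to first build a grammar for the "raw" pump language $D_X = \{u\#v \mid X \derivs uXv\}$ and then compose it with a rational transduction (or directly hardwire the end maps into the productions). For the raw pump language, I would introduce for every non-terminal $A$ of $\cG$ a fresh non-terminal $A^{\langle X\rangle}$ meant to generate exactly $\{u\#v \mid A \derivs uXv\}$. The productions are obtained by "tracking the spine to $X$": for each production $A \to BC$ of $\cG$, we add $A^{\langle X\rangle} \to B^{\langle X\rangle}\, C$ (the $X$-descendant sits in the $B$-subtree, and $C$ derives an ordinary terminal string appended on the right — we reuse $\cG$'s own non-terminals on the non-spine side) and symmetrically $A^{\langle X\rangle} \to B\, C^{\langle X\rangle}$; and we add the base production $X^{\langle X\rangle} \to \#$. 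Taking $S' = X^{\langle X\rangle}$ as start symbol, a straightforward induction on derivation-tree height shows $\lang(\cG',X^{\langle X\rangle}) = \{u\#v \mid X\derivs uXv\}$, and the grammar has $2|N|$ non-terminals and $O(|P|)$ productions, hence linear size. (If $X$ has no recursive derivation, this language is empty, which is fine.)

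It remains to restrict to the case where $u \in \Max{r}$ and $v \in \Max{s}$, and to apply $\lend_r$ to the $u$-part and $\rend_s$ to the $v$-part. Since the $u$-part is generated entirely to the left of $\#$ and the $v$-part entirely to the right, and since "$u \in \Max{r}$, then output $\lend_r(u)$" is a rational transduction on the left component (the transducer scans $u$, verifies that $r$ is the maximum priority present, guesses the first and last occurrence of an $r$-letter, emits the prefix before the first $r$, emits $\lhash$, skips everything strictly between, and emits the suffix after the last $r$ — all with a constant number of states, independent of the grammar), and analogously for $\rend_s$ on the right component, we can compose: replace each terminal-producing production $A \to a$ of $\cG$ that can occur inside the left part by a small gadget implementing the left transducer's state transitions, and likewise on the right. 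Concretely, since regular languages / transductions preserve context-freeness (full trio, Theorem \ref{regularFullTrio} for the automaton side, and the standard product construction for grammars), applying a fixed-size transducer to a linear-size grammar yields a grammar of size linear in the original, so $\egrammar_{X,r,s}$ has linearly many non-terminals. For the degenerate cases $r = 0$ or $s = 0$, the maps $\lend_0, \rend_0$ are constant, so one simply intersects the relevant component with $\lowerletters{0}^*$ and replaces the whole component's output by the single symbol $\lhash$ (resp. $\rhash$), which is even simpler.

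The main obstacle I anticipate is purely bookkeeping rather than conceptual: one must be careful that the spine-tracking construction correctly separates "the string generated to the left of the hole $X$" from "the string generated to the right" — in a Chomsky-normal-form derivation tree the left/right split at each internal node along the spine is unambiguous, so the two production schemes $A^{\langle X\rangle} \to B^{\langle X\rangle} C$ and $A^{\langle X\rangle} \to B C^{\langle X\rangle}$ exactly capture the two ways the spine can descend, and $B$ (resp. $C$) ranging over ordinary $\cG$-derivations gives precisely the context accumulated on that side. The second delicate point is ensuring the transducer for $\lend_r$ genuinely recognizes "$r$ is the maximal priority occurring" (it must reject any input containing a letter of priority $> r$, and accept only if at least one $r$-letter is present), and that it emits $\lhash$ for the maximal infix strictly between the first and last $r$-occurrence; this is a constant-size automaton and composing it with the grammar only multiplies the non-terminal count by a constant, preserving linear size. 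Everything else — correctness of the composition, the $r=0$ or $s=0$ special-casing, and the final size bound — follows routinely.
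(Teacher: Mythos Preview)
Your proposal is correct and follows essentially the same approach as the paper: first build a linear-size grammar for $\{u\#v \mid X\derivs uXv\}$ via spine-tracking (this is exactly the paper's \cref{X-to-hash}), then apply a constant-size finite-state transducer that checks $u\in\Max{r}$, $v\in\Max{s}$ and implements $\lend_r$, $\rend_s$ (the paper uses a $7$-state transducer and the triple construction of \cref{cfg-transducers} rather than your hand-wired gadget, but the idea and the size bound are identical). The only minor slip is your reference to \cref{regularFullTrio}, which concerns NFAs; the relevant closure here is the grammar-level triple construction, but you clearly have the right mechanism in mind.
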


Defining the sets $E_{X,r,s}$ with fresh zero-priority letters $\#$, $\lhash$,
$\rhash$ is a key trick in our construction: Note that each word in $E_{X,r,s}$
is of the form $u\lhash v\# w\rhash x$ for $u,v,w,x\in\lowerletters{p-1}^*$. The
segments $u,v,w,x$ come from different blocks of the entire generated word, so
applying the block downward closure construction recursively to $E_{X,r,s}$
must guarantee that these segments embed as if they were blocks. However, there
are only a bounded number of segments. Thus, we can reduce the number of
priorities while retaining the block behavior by using fresh zero-priority
letters.  This is formalized in the following
\lcnamecref{embedding-fresh-letter}:
\begin{restatable}{lemma}{embeddingFreshLetter}\label{embedding-fresh-letter}
For $u,u',v,v'\in\lowerletters{p}^*$, we have $u\#v\bo u'\#v'$ iff both (i)~$u\bo u'$ and (ii)~$v\bo v'$.
\end{restatable}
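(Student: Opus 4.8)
The plan is to prove both directions, with the forward direction (the "only if") being the interesting one. For the backward direction, suppose $u\bo u'$ and $v\bo v'$. Since $\#$ is a priority-zero letter and $\bo$ is multiplicative by \cref{generalizedblockmultiplicative}, we get $u\#v\bo u'\#v'$ from $u\bo u'$, $\#\bo\#$, and $v\bo v'$ by applying multiplicativity twice. This direction is immediate.

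For the forward direction, assume $u\#v\bo u'\#v'$. First I would observe that since $\#$ has priority zero, and $u,v,u',v'\in\lowerletters{p}^*$, the letter $\#$ lies in its own sub-$q$ block for every priority $q\ge 1$ occurring in the words. The key point is that a witness position mapping respecting the block order must send the occurrence of $\#$ in $u\#v$ to the occurrence of $\#$ in $u'\#v'$. To see this, note that $u\#v\bo u'\#v'$ implies $u\#v\so u'\#v'$ via a block-order-respecting position mapping $\rho$; but $u,u',v,v'$ contain no $\#$, so the single $\#$ position in $u\#v$ can only be matched to the single $\#$ position in $u'\#v'$. Since $\rho$ respects the block order, the positions of $u\#v$ strictly before the $\#$ are sent to positions of $u'\#v'$ strictly before the $\#$ (a position mapping is strictly monotone), i.e.\ into $u'$, and likewise positions strictly after $\#$ go into $v'$. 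This already shows $u\so u'$ and $v\so v'$ via restricted position mappings; the remaining work is to upgrade these to $\bo$.

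The main obstacle — and the step I would spend the most care on — is arguing that the restriction of the witness block map to the $u$-part is itself a valid witness block map for $u\bo u'$, and similarly for $v\bo v'$. Here I would proceed by induction on the number of priorities $p$. If $p=0$ there is nothing beyond the subword observation above. For the inductive step, let $q$ be the largest priority occurring among $u\#v$ and $u'\#v'$; decompose both words into sub-$q$ blocks along their priority-$q$ letters. Since $\#$ has priority $0<q$, it sits inside one sub-$q$ block of $u\#v$, say block $u_i$ (where $u_i = u_i^L \# u_i^R$ with $u_i^L$ a suffix-block of $u$ and $u_i^R$ a prefix-block of $v$), and inside one sub-$q$ block of $u'\#v'$, say $u'_j$ (with $u'_j = (u'_j)^L \# (u'_j)^R$ analogously). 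The witness block map $\phi$ must send $u_i$ to a sub-$q$ block $u'_{\phi(i)}$ with $u_i \bo u'_{\phi(i)}$; by the $\#$-matching argument above applied at the block level, $\phi(i)=j$, so $u_i^L\#u_i^R \bo (u'_j)^L\#(u'_j)^R$, and the induction hypothesis (these blocks have priorities in $[0,q-1]$) gives $u_i^L\bo (u'_j)^L$ and $u_i^R\bo (u'_j)^R$. Now I would assemble the witness block map for $u\bo u'$: the blocks of $u$ are exactly $u_0,\dots,u_{i-1}$ together with $u_i^L$ as the final block; map these via $\phi$ restricted to $[0,i-1]$ together with sending $u_i^L$ to $(u'_j)^L$. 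One checks that conditions \ref{def:generalized is recursive}--\ref{def:generalized is subword} transfer: the first-to-first condition \ref{def:first-to-first} holds because $\phi(0)=0$; the last-to-last condition \ref{def:last-to-last} holds because $u_i^L\bo(u'_j)^L$ and $(u'_j)^L$ is the last block of $u'$; condition \ref{def:generalized is subword} for the priority-$q$ letters of $u$ follows by restricting the corresponding subword witnesses of $\phi$, noting that all these letters and their target infixes lie entirely within the $u'$-portion (to the left of $\#$). The argument for $v\bo v'$ is symmetric, using that $u_i^R$ is the \emph{first} block of $v$ and $(u'_j)^R$ the first block of $v'$. Combining both yields (i) and (ii), completing the proof.
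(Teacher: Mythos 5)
Your proof is correct and takes essentially the same approach as the paper's: for the forward direction, restrict the witness block map to the blocks lying in the $u$-portion (resp.\ $v$-portion) and handle the block containing $\#$ by recursion on the number of priorities, splitting it at $\#$; for the backward direction, glue the two block maps together, which you do slightly more cleanly by simply invoking multiplicativity (\cref{generalizedblockmultiplicative}).
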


\subparagraph{Language of repeated words} Roughly speaking, the language
$E_{X,r,s}$ captures the ``ends'' of words derived in derivations $X\derivs
uXv$ with $u\in\Max{r}$ and $v\in\Max{s}$: On the left, it keeps everything
that is not between two occurrences of $r$ and on the right, it keeps
everything not between two occurrences of $s$. We now need languages that
capture the infixes that can occur between $r$'s and $s$'s, respectively.
Intuitively, these are the words that can occur again and again in words
derived from $X$. There is a ``left version'' and a ``right version''. We set
for $r,s\in[1,p]$:
\newcommand{\lrepeat}{\overleftarrow{R}}
\newcommand{\rrepeat}{\overrightarrow{R}}
\newcommand{\lrgrammar}{\overleftarrow{\cR}}
\newcommand{\rrgrammar}{\overrightarrow{\cR}}
\begin{align*}
	\lrepeat_{X,r,s} &= \{yr \mid y\in\lowerletters{r-1}^*,~\exists x,z\in\lowerletters{r}^*,~v\in\Max{s}\colon X\derivs xryrzXv \} \\
	\rrepeat_{X,r,s} &= \{ys \mid y\in\lowerletters{s-1}^*,~\exists u\in\Max{r},~x,z\in\lowerletters{r}^*\colon X\derivs uXxsysz \}.
\end{align*}
The case where one side has highest priority zero must be treated slightly differently: There are no enveloping occurrences of some $r,s\in[1,p]$. However, we can overapproximate those words by the set of all words over a particular alphabet. Specifically, for $r,s\in[0,p]$, we set
\begin{align*}
	\rrepeat_{X,0,s} &= \{a\in\lowerletters{0} \mid \exists u\in\Max{0},~v\in\Max{s}\colon X\derivs uXv,~\text{$a$ occurs in $u$} \} \\
	\lrepeat_{X,r,0} &= \{a\in\lowerletters{0} \mid \exists u\in\Max{r},~v\in\Max{0}\colon X\derivs uXv,~\text{$a$ occurs in $v$} \}
\end{align*}

\begin{restatable}{lemma}{transformRepeat}\label{transform-repeat}
	Given $\cG$, a non-terminal $X$, and $r,s\in[0,p]$, one can construct grammars
	$\lrgrammar_{X,r,s}$, $\rrgrammar_{X,r,s}$ for $\lrepeat_{X,r,s}$,$\rrepeat_{X,r,s}$,
	respectively, of linear size.
\end{restatable}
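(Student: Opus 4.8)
The plan is to construct each of the grammars $\lrgrammar_{X,r,s}$ and $\rrgrammar_{X,r,s}$ directly from $\cG$ by a product-style construction that tracks, alongside the nonterminal being expanded, just enough bookkeeping to (a)~recognise when we are inside a ``pump derivation'' $X \derivs \alpha X \beta$, and (b)~extract exactly the right infix. I will treat the ``left'' version $\lrepeat_{X,r,s}$; the ``right'' version is fully symmetric, and the degenerate cases $r=0$ or $s=0$ are handled separately and more simply at the end.

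First, for $r,s \in [1,p]$: I would build a grammar whose nonterminals are triples carrying the original nonterminal together with a flag indicating one of the phases ``before the first enveloping $r$'', ``emitting the block $y$ that immediately precedes an $r$'', ``after that $r$, still inside the left factor, ensuring it lies in $\lowerletters{r}^*$'', ``the branch that eventually reaches $X$ again'', and ``the right factor, which must lie in $\Max{s}$''. Concretely, since $\cG$ is in Chomsky normal form, a derivation $X \derivs xryrzXv$ decomposes along the derivation tree into a spine from the root $X$ down to the re-occurrence of $X$, with subtrees hanging off to the left and right. The terminal yield left of the recursion is $x\,r\,y\,r\,z$ and right of it is $v$; I want to output only the segment $yr$ (with $y \in \lowerletters{r-1}^*$). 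So the start symbol derives, along the left of the spine, a sequence of complete subtrees whose yields concatenate to $x \in \lowerletters{r}^*$ and at some point produces a terminal $r$; then it outputs the maximal $r$-free suffix $y$ of what follows up to the next spine-$r$, checks $y \in \lowerletters{r-1}^*$, emits the letter $r$, then verifies (without emitting) that the remaining left yield $z$ lies in $\lowerletters{r}^*$, verifies (without emitting) that the right yield lies in $\Max{s}$, and verifies that the recursion actually closes on $X$. Each of these verifications is a regular (indeed finite-state) condition on a context-free sublanguage, so it is implementable by intersecting with a fixed small DFA via a product — e.g., ``lies in $\lowerletters{r}^*$'', ``lies in $\lowerletters{r-1}^*$'', ``lies in $\lowerletters{s}^* s \lowerletters{s}^*$'' are each recognised by a constant-size automaton whose size does not depend on $\cG$. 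Hence the number of nonterminals of $\lrgrammar_{X,r,s}$ is $O(|N| \cdot c)$ for a constant $c$, i.e.\ linear in $|\cG|$, and productions stay of bounded length after the usual CNF-restoring cleanup, so the total grammar size is linear as claimed.

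For the degenerate cases, $\lrepeat_{X,r,0}$ and $\rrepeat_{X,0,s}$ are each merely a subset of $\lowerletters{0}$, so the grammar just needs, for each $a \in \lowerletters{0}$, a single production deriving $a$ guarded by the (context-free, emptiness-decidable) condition ``$\exists u \in \Max{r}, v \in \Max{0}\colon X \derivs uXv$ and $a$ occurs in $v$''. That condition is again a product of $\cG$ with constant-size automata enforcing $u \in \Max{r}$, $v \in \lowerletters{0}^*$, and ``$v$ contains $a$'', together with reachability of the recursion on $X$; since we only need to know whether the resulting sublanguage is nonempty, this is decidable, and the output grammar has $O(|N| + |\lowerletters{0}|)$ nonterminals.

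The main obstacle I anticipate is the bookkeeping around the phrase ``maximal $r$-free suffix'': in $\lrepeat_{X,r,s}$ the word $y$ must be the segment between the \emph{last} $r$ on the left before the recursion and the recursion point — and the decomposition ``$x r y r z$'' is not unique, so I must pin down which $r$'s are the enveloping ones. The clean way is to notice that $\lrepeat_{X,r,s}$ is the \emph{union}, over all valid choices, of $\{yr\}$ and then observe this union is exactly $\{\,yr \mid y \in \lowerletters{r-1}^*,\ \exists x', z' \in \lowerletters{r}^*,\ v \in \Max{s}\colon X \derivs x' r y r z' X v\,\}$ with no maximality requirement at all — once $y \in \lowerletters{r-1}^*$ is forced, any occurrence of $r$ immediately before $y$ serves, and the trailing material up to $X$ can be absorbed into $z'$. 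This removes the need to reason about maximality inside the grammar and keeps the construction local; I would make this reduction explicit as the first step of the proof and then the product construction above goes through routinely.
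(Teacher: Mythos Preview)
Your proposal is correct and follows essentially the same idea as the paper; the paper merely packages it more modularly by first building (via \cref{X-to-hash}) a linear-size grammar for $\{u\#v \mid X\derivs uXv\}$ and then applying a constant-size transducer through the triple construction (\cref{cfg-transducers}), where the transducer's states are precisely the ``phases'' you track in your direct product. Your observation that no maximality is needed on the enveloping $r$'s is exactly right and is implicit in the paper's nondeterministic transducer.
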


\subparagraph{Overapproximating derivable words}
The languages $E_{X,r,s}$ and $\lrepeat_{X,r,s}$ and $\rrepeat_{X,r,s}$ now serve to define overapproximations of the set of $(u,v)\in\Max{r}\times\Max{s}$ with $X\derivs uXv$: One can obtain each such pair by taking a word from $E_{X,r,s}$, replacing $\lhash$ and $\rhash$, resp., by words in $r\lrepeat_{X,r,s}^*$ ($\lrepeat_{X,0,s}^*$ if $r=0$) and $s\rrepeat_{X,r,s}^*$ ($\rrepeat_{X,r,0}^*$ if $s=0$), respectively. By choosing the right words from $E_{X,r,s}$, $\lrepeat_{X,r,s}$, and $\rrepeat_{X,r,s}$, we can thus obtain $u\# v$. However, this process will also yield other words that cannot be derived. However, the key idea in our construction is that every word obtainable in this way from $E_{X,r,s}$, $\lrepeat_{X,r,s}$, and $\rrepeat_{X,r,s}$ will be in the block downward closure of a pair of words derivable using $X\derivs \cdot X\cdot$.

Let us make this precise. To describe the set of words obtained from
$E_{X,r,s}$, $\lrepeat_{X,r,s}$, and $\rrepeat_{X,r,s}$, we need the notion of
a substitution. For alphabets $\Gamma_1,\Gamma_2$, a \emph{substitution}  is a
map $\sigma\colon\Gamma_1\to2^{\Gamma_2^*}$ that yields a language in
$\Gamma_2$ for each letter in $\Gamma_1$. Given a word $w=w_1\cdots w_n$ with
$w_1,\ldots w_n\in\Gamma_1$, we define
$\sigma(w):=\sigma(w_1)\cdots\sigma(w_n)$. Then for $K\subseteq\Gamma_1^*$, we
set $\sigma(K)=\bigcup_{w\in K}\sigma(w)$. Now let
$\Sigma_{X,r,s}\colon\hat{\Sigma}_{\leq p}\to 2^{\hat{\Sigma}_{\leq p}^*}$ be the
substitution that maps every letter in $\lowerletters{p}\cup\{\#\}$ to itself (as a
singleton) and maps $\lhash$ to $r\lrepeat_{X,r,s}^*$ and $\rhash$ to
$s\rrepeat_{X,r,s}^*$. Now our observation from the previous paragraph can be phrased as:
\begin{restatable}{lemma}{grammarCorrectness}\label{grammar-correctness}
	For every $u\#v\in \Sigma_{X,r,s}(E_{X,r,s})$, there are $u'\in\Max{r}$ and $v'\in\Max{s}$ with $u\bo u'$, $v\bo v'$, and $X\derivs u'Xv'$.
\end{restatable}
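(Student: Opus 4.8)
The plan is to prove Lemma~\ref{grammar-correctness} by analyzing the structure of a word $u\#v\in\Sigma_{X,r,s}(E_{X,r,s})$. By definition of the substitution, such a word arises from some word $\lend_r(\tilde u)\#\rend_s(\tilde v)\in E_{X,r,s}$, where $X\derivs\tilde uX\tilde v$ with $\tilde u\in\Max{r}$ and $\tilde v\in\Max{s}$, by replacing the single occurrence of $\lhash$ in $\lend_r(\tilde u)$ with some word in $r\lrepeat_{X,r,s}^*$ and the single $\rhash$ in $\rend_s(\tilde v)$ with some word in $s\rrepeat_{X,r,s}^*$. Since $\#$ occurs exactly once in every word of $E_{X,r,s}$ and the substitution does not touch it, it suffices to handle the two sides separately: I would show that from $\lend_r(\tilde u)$ with $\lhash$ replaced by $ry_1r\cdots ry_k$ (where each $y_ir\in\lrepeat_{X,r,s}$, or each $y_i\in\lrepeat_{X,0,s}$ and there is no enveloping $r$ when $r=0$) one obtains a word $u$ with $u\bo u'$ for some $u'\in\Max{r}$ satisfying $X\derivs u'X v'$ for a suitable $v'$, and symmetrically for the right side. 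The two sides can then be combined: picking $u'$ and $v'$ with $X\derivs u'Xv'$ simultaneously requires a small argument that the witnessing derivations for the left and right sides can be merged, which follows because both sides only constrain a ``pump'' derivation $X\derivs\cdot X\cdot$ and one can first pump on the left, then on the right.

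The heart of the argument is the left side. Write $\tilde u = a_0 r a_1 r\cdots r a_m$ with $a_i\in\lowerletters{r-1}^*$, so $\lend_r(\tilde u)=a_0\lhash a_m$. Each $y_ir\in\lrepeat_{X,r,s}$ means there exist $x_i,z_i\in\lowerletters{r}^*$ and some $w_i\in\Max{s}$ with $X\derivs x_iry_irz_iXw_i$. The key observation is that because $X$ is a non-terminal that can be pumped, we can iterate: starting from $X\derivs x_1ry_1rz_1Xw_1$, substitute into the $X$ a derivation $X\derivs x_2ry_2rz_2Xw_2$, and so on; each step inserts a copy of $ry_ir$ somewhere inside what will eventually become the left-hand pumped factor, surrounded only by letters of priority $\le r$. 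Finally apply the original derivation $X\derivs\tilde uX\tilde v$ once more to terminate the pumping on the $X$. Concatenating, we obtain $X\derivs u'Xv'$ where $u'$ contains, between its first and last $\lowerletters{r-1}^*$-segments and possibly interleaved with other priority-$\le r$ material, the factors $y_1,\ldots,y_k$ each immediately followed by an $r$, and moreover $u'$ starts with $a_0$ (extended by priority-$\le r$ letters) and ends with $a_m$ (similarly). Now $u=a_0\,r y_1 r\cdots ry_k\,a_m$; I claim $u\bo u'$. To see this, note $u'\in\Max{r}$ so we compare along the priority-$r$ letters: the sub-$r$ blocks of $u$ are $a_0,\varepsilon,\ldots,\varepsilon,a_m$, and one can map the first block $a_0$ of $u$ into the first block of $u'$ (which contains $a_0$ as a subword, hence $a_0\bo$ it by the subword clause since these are priority-$\le r-1$ words — actually by induction on priorities, $a_0\bo a_0\cdot(\text{extra})$ may fail, so more care: I would instead route through Lemma~\ref{generalizedblockrepeat}), the empty middle blocks anywhere, and the last block $a_m$ into the last block of $u'$; the condition that each priority-$r$ letter $x_i$ of $u$ appears between the images of its neighbouring blocks is met because the $r$ after $y_i$ literally occurs in $u'$ between where block $i$ and block $i+1$ get mapped. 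For $r=0$ the construction is simpler: $\lrepeat_{X,0,s}$ just collects single letters that occur in left pump-parts, and repeating the relevant pump derivations produces a word in $\Max{0}=\lowerletters{0}^*$ containing all of $u$ as a subword, so $u\so u'$ and hence $u\bo u'$.

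The main obstacle I anticipate is the precise embedding bookkeeping in the case $r\ge 1$: after iterating pump derivations, $u'$ is not simply $a_0 ry_1r\cdots ry_k a_m$ but has additional priority-$\le r$ letters (the $x_i,z_i$, and the letters of $\tilde u$ other than $a_0,a_m$) scattered among these factors. I need to verify that $u\bo u'$ despite this. The clean way is to invoke Lemma~\ref{generalizedblockrepeat} and monotonicity of $\bo$: arrange the iteration so that $u'$ is literally obtained from $u$ (or from a word trivially $\bo$-above $u$) by inserting factors that are \emph{repetitions of existing factors} of a base word; but since the inserted $x_i,z_i$ need not be repetitions, the honest route is to prove a short auxiliary claim: if $w = b_0 r b_1 r\cdots r b_t$ with all $b_i\in\lowerletters{r-1}^*$, and $w'$ is obtained from $w$ by (a) replacing $b_0$ by any $b_0'$ with $b_0\bo b_0'$, (b) replacing $b_t$ by any $b_t'$ with $b_t\bo b_t'$, and (c) inserting arbitrary words of priority $\le r$ anywhere strictly between the first and last $r$, then $w\bo w'$. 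This is a direct check from the definition of $\bo$ (map block $0\mapsto 0$, block $t\mapsto$ last block, intermediate empty/small blocks into the new material, and observe each original $r$ survives in $w'$ in the right relative position), and then the lemma follows by instantiating $w=u$, $w'=u'$, together with the symmetric right-side statement and the merge of the two pump derivations.
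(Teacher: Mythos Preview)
Your overall strategy---iterate the pump derivations witnessing the $y_i\in\lrepeat_{X,r,s}$ and bracket them with the base derivation $X\derivs\tilde uX\tilde v$---is exactly the paper's. However, there is a genuine gap in your construction, and your auxiliary claim is false as stated.

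\textbf{The gap.} As you describe it, you start with $X\derivs x_1ry_1rz_1Xw_1$, substitute the next pump, and ``finally'' apply $X\derivs\tilde uX\tilde v$ once. That yields
\[
u'=x_1ry_1rz_1\cdots x_kry_krz_k\,\tilde u,\qquad v'=\tilde v\,w_k\cdots w_1.
\]
The first sub-$r$ block of $u'$ is then the first sub-$r$ block of $x_1$, not $a_0$; and the last sub-$s$ block of $v'$ is the last sub-$s$ block of $w_1$, not the last block of $\tilde v$. So hypothesis~(a) of your auxiliary claim is not met on the left, and the symmetric condition fails on the right. Your assertion that ``$u'$ starts with $a_0$'' is simply not true for this construction. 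The same problem recurs in your ``merge'' step: composing a left-side derivation with a right-side derivation appends $u'_R\in\Max{r}$ after $u'_L$, so the last sub-$r$ block of $u'_Lu'_R$ is no longer $a_m$.

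The fix (and this is what the paper does) is to apply the base derivation $X\derivs\tilde uX\tilde v$ \emph{twice}, once outermost and once innermost, sandwiching all the $\lrepeat$ and $\rrepeat$ pumps in between. Then $u'=\tilde u\,(\cdots)\,\tilde u$ and $v'=\tilde v\,(\cdots)\,\tilde v$, so the first and last sub-$r$ blocks of $u'$ are literally $a_0$ and $a_m$, and each $y_i$ occurs as an intact sub-$r$ block of some $e_i$ inside $u'$. The witness block map sends $a_0\mapsto a_0$, $a_m\mapsto a_m$, and $y_i\mapsto y_i$; this is immediate without any further lemma.

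\textbf{The auxiliary claim.} As phrased, item~(c) (``insert arbitrary words of priority $\le r$ anywhere strictly between the first and last $r$'') is too strong: take $r=1$, $w=1\,00\,1$, and insert a single $1$ between the two $0$'s to get $w'=10101$. Then $w\not\bo w'$, because the middle sub-$1$ block $00$ of $w$ cannot embed into any sub-$1$ block of $w'$. What is true, and what the sandwiched construction actually produces, is the weaker statement where insertions happen only at sub-$r$ block boundaries (equivalently: each $b_i$ for $1\le i\le t-1$ survives as an intact sub-$r$ block of $w'$). With the sandwich in place you don't even need this lemma, since every block of $u$ appears verbatim as a block of $u'$.
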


\subparagraph{Constructing the Kleene grammar} We now construct the Kleene
grammar for $\lang(\cG)\bd$ by first computing the grammars
$\egrammar_{X,r,s}$, $\lrgrammar_{X,r,s}$, and $\rrgrammar_{X,r,s}$ for each non-terminal $X$ and each $r,s\in[1,p]$.
Then, since $\egrammar_{X,r,s}$, $\lrgrammar_{X,r,s}$, and $\rrgrammar_{X,r,s}$ generate languages with
at most $p-1$ priorities, we can call our construction recursively to 
obtain grammars $\egrammar'_{X,r,s}$, $\lrgrammar'_{X,r,s}$, and $\rrgrammar'_{X,r,s}$, respectively.
Then, we add all productions of the grammars $\egrammar'_{X,r,s}$, $\lrgrammar'_{X,r,s}$, and
$\rrgrammar'_{X,r,s}$ to $\cG'$.  Moreover, we make the following
modifications: Each production of the form $Y\to \lhash$ (resp. $Y\to \rhash$)
in $\egrammar_{X,r,s}$ is replaced with $Y\to Z_r\overleftarrow{S}_{X,r,s}^*$ (resp.\
$Y\to Z_s\overrightarrow{S}_{X,r,s}^*$), where $\overleftarrow{S}_{X,r,s}$ (resp.\
$\overrightarrow{S}_{X,r,s}$) is the start symbol of $\lrgrammar'_{X,r,s}$ (resp.
$\rrgrammar'_{X,r,s}$), and $Z_r$ is a fresh non-terminal used to derive $r$ or $\varepsilon$: We also have $Z_r\to r$ for each $r\in[1,p]$ and $Z_0\to\varepsilon$. 
Moreover, each production $Y\to\#$ in $\egrammar'_X$ is
removed and replaced with a production $Y\to w$ for each production $X\to w$ in $\cG$.
We call the resulting grammar $\cG'$.

\subparagraph{Correctness} Let us now observe that the grammar $\cG'$ does
indeed satisfy $\lang(\cG')\bd=\lang(\cG)\bd$. The
inclusion ``$\supseteq$'' is trivial as $\cG'$ is obtained by adding
productions. For the converse, we need some terminology. We say that a
derivation tree $t_1$ in $\cG'$ is obtained using an \emph{expansion step} from
$t_0$ if we take an $X$-labeled node $x$ in $t_0$, where $X$ is a non-terminal
from $\cG$, and replace this node by a derivation $X\derivs uwv$ using newly
added productions (i.e. using $\egrammar_{X,r,s}$, $\lrgrammar_{X,r,s}$, and $\rrgrammar_{X,r,s}$
and some $Y\to w$ where $X\to w$ was the production applied to $x$ in $t_0$).
Then by construction of $\cG'$, any derivation in $\cG'$ can be
obtained from a derivation in $\cG$ by finitely many expansion steps. An induction on the number of expansion steps shows: 
\begin{restatable}{lemma}{cflCorrectness}\label{cfl-correctness}
We have $\lang(\cG')\bd=\lang(\cG)\bd$.
\end{restatable}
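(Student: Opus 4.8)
The plan is to prove the nontrivial inclusion $\lang(\cG')\bd\subseteq\lang(\cG)\bd$ by induction on the number of expansion steps needed to obtain a derivation tree of $\cG'$ from one of $\cG$, using the recursive structure of the construction together with the key embedding lemmas established earlier. First I would set up the induction: a derivation tree $t$ in $\cG'$ producing a word $w$ is obtained from a tree $t_0$ in $\cG$ (producing some word $w_0$) by finitely many expansion steps, where in each expansion step an $X$-labeled node (with $X$ a non-terminal of $\cG$) is replaced by a derivation $X\derivs u\,w'\,v$ using the newly added productions from $\egrammar'_{X,r,s}$, $\lrgrammar'_{X,r,s}$, $\rrgrammar'_{X,r,s}$ and one production $Y\to w'$ with $X\to w'$ in $\cG$. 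The base case (zero expansion steps) is immediate since then $w=w_0\in\lang(\cG)\subseteq\lang(\cG)\bd$.

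For the induction step, I would isolate a single, "topmost" expansion step in $t$ and peel it off. That step replaces an $X$-node by a subtree that, reading its frontier, yields a word of the form $u\,w'\,v$ where $u\#v$ is (after applying the substitution $\Sigma_{X,r,s}$ implicitly encoded by the replaced $\lhash,\rhash$ productions) an element of $\Sigma_{X,r,s}(E_{X,r,s})$ — here I must be careful that the recursively constructed grammars $\egrammar'_{X,r,s}$, $\lrgrammar'_{X,r,s}$, $\rrgrammar'_{X,r,s}$ only generate the block downward closures of $E_{X,r,s}$, $\lrepeat_{X,r,s}$, $\rrepeat_{X,r,s}$, so by the recursive correctness assumption the word actually produced is only $\bo$-below some genuine $u\#v\in\Sigma_{X,r,s}(E_{X,r,s})$. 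Here the role of \cref{embedding-fresh-letter} is essential: it lets me pass from a $\bo$-inequality on $u\#v$ to separate $\bo$-inequalities on the $u$- and $v$-parts, and from \cref{grammar-correctness} I then get $u'\in\Max{r}$, $v'\in\Max{s}$ with $u\bo u'$, $v\bo v'$ and $X\derivs u'Xv'$ in $\cG$. Replacing the expanded subtree in $t$ by the genuine derivation $X\derivs u'\,w'\,v'$ (note $w'$ is the frontier of the original $\cG$-production applied, unchanged) yields a derivation tree $t'$ in $\cG'$ with strictly fewer expansion steps whose frontier word $w'$ satisfies $w\bo w'$, because $\bo$ is multiplicative (\cref{generalizedblockmultiplicative}): the surrounding context of the $X$-node is common, and only the $u,v$ around $w'$ got replaced by $\bo$-larger $u',v'$. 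By the induction hypothesis $w'\in\lang(\cG)\bd$, and since $\bo$ is transitive, $w\in\lang(\cG)\bd$.

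The main obstacle I anticipate is the bookkeeping around nested expansions and the interaction of the recursion with the substitution: an expansion step introduces new $\lhash$/$\rhash$ occurrences (inside the recursively added grammars) that themselves get rewritten, so "peeling off one expansion step" must be defined so that it genuinely strips one layer while leaving a well-formed $\cG'$-tree with a measurable decrease. The cleanest way is probably to choose an $X$-node that is \emph{outermost} among those whose incident production comes from the expansion machinery, argue that all expansions in its subtree are then "internal" to the recursively built grammars and handled by recursive correctness, and only then apply \cref{grammar-correctness}. A secondary subtlety is the $r=0$ / $s=0$ cases, where $\lrepeat$, $\rrepeat$ are over-approximated by full alphabets and $\lend_0,\rend_0$ are constant; there one must check that \cref{grammar-correctness} still supplies a genuine derivable pair $\bo$-above the constructed word, which it does by its statement, so these cases fold into the same argument with the understanding that the $Z_r\to\varepsilon$ / $Z_0\to\varepsilon$ productions contribute nothing problematic.
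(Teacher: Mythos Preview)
Your proposal is correct and follows essentially the same approach as the paper: induction on the number of expansion steps, invoking the recursive correctness of $\egrammar'_{X,r,s},\lrgrammar'_{X,r,s},\rrgrammar'_{X,r,s}$ to land in $\Sigma_{X,r,s}(E_{X,r,s})\bd$, then applying \cref{embedding-fresh-letter} and \cref{grammar-correctness} to replace the expansion by a genuine $\cG$-derivation $X\derivs u'Xv'$, and concluding via multiplicativity. The only cosmetic difference is that the paper peels off the \emph{last} expansion step (an innermost one) rather than an outermost one, which slightly simplifies the bookkeeping you flag as an obstacle; either choice works.
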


\subparagraph{Acyclic derivations suffice} Now that we have the grammar $\cG'$
with $\lang(\cG')\bd=\lang(\cG)\bd$, it remains to show
that every word in $\cG'$ can be derived using an acyclic derivation:
\begin{restatable}{lemma}{cflAcyclicity}\label{cfl-acyclicity}
$\acyclic(\cG')\bd=\lang(\cG)\bd$.
\end{restatable}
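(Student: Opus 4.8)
The plan is to show that every word generated by $\cG'$ admits an acyclic derivation tree, after possibly passing to a block-smaller word — combined with \cref{cfl-correctness}, this yields $\acyclic(\cG')\bd=\lang(\cG')\bd=\lang(\cG)\bd$. The inclusion $\acyclic(\cG')\bd\subseteq\lang(\cG')\bd\subseteq\lang(\cG)\bd$ is immediate since acyclic derivations are in particular derivations. So the work is in the converse: given a word $w\in\lang(\cG')$ with derivation tree $t$, I must produce $w'\in\acyclic(\cG')$ with $w\bo w'$. (Actually I expect one proves the stronger statement $w\bo w'$ for some $w'\in\acyclic(\cG')$, so that $w\in\acyclic(\cG')\bd$.)

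The key structural fact is the separation enforced by the grammar construction: in $\cG'$, the productions fall into "layers" indexed by priority. The productions coming from $\egrammar'_{X,r,s}$, $\lrgrammar'_{X,r,s}$, $\rrgrammar'_{X,r,s}$ operate on alphabets with strictly fewer priorities, and the only non-terminals shared between the original grammar $\cG$ and these sub-grammars are the $X\in N$ reached via the replaced $Y\to\#$ productions. So the first step is to make precise this layered/recursive structure of $\cG'$ and argue that a cycle (a non-terminal repeating on a root-to-leaf path) in a derivation tree of $\cG'$ is of one of two kinds: (a) a cycle entirely within one of the sub-grammars $\egrammar'_{X,r,s}$, $\lrgrammar'_{X,r,s}$, $\rrgrammar'_{X,r,s}$, which can be removed by the inductive hypothesis applied to that sub-grammar (these already produce acyclic trees by recursion), or (b) a cycle through the original non-terminals of $\cG$, i.e. a subderivation $X\derivs \alpha X\beta$ built from newly-added productions plus one $X\to w$. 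For case (b), the point is that such a pump-derivation was introduced precisely to be simulated by the Kleene stars $\overleftarrow{S}_{X,r,s}^*$ and $\overrightarrow{S}_{X,r,s}^*$: repeating the pump corresponds to taking more copies under the Kleene star, which does \emph{not} increase path length. Hence one argues that any word derivable with a nested/repeated pump on $X$ can be re-derived using a single expansion step of $X$ (feeding the accumulated repetitions into the $*$'s), collapsing the $X$-cycle. Using \cref{generalizedblockrepeat} (and \cref{embedding-fresh-letter} to handle the $\#$-separated segments) one checks the rearranged word is block-larger than — or equal to — the original, so we stay within $\lang(\cG)\bd$.

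Concretely I would set up an induction on the number of priorities $p$ (base case $p=0$ or $p=1$, where $\cG'$ is essentially the classical subword-closure grammar and acyclicity is the standard fact), and within that, on the number of "cyclic" root-to-leaf paths in $t$, or on the height of $t$ above the topmost repeating non-terminal. In the inductive step: pick a path in $t$ with a repeated non-terminal $X$; by the layering, either the repetition is inside a sub-grammar (apply the outer induction on priorities / the recursively constructed $\cG'$ for that sub-grammar, which by \cref{acyclic-to-nfa}'s precondition we may assume already yields acyclic trees), or it is a $\cG$-level pump $X\derivs \alpha X\beta$; replace the whole nested pump at $X$ by a single application of the corresponding $Y\to Z_r\overleftarrow{S}_{X,r,s}^*$ / $Y\to Z_s\overrightarrow{S}_{X,r,s}^*$ machinery with enough Kleene-star copies to cover all the repeated material, obtaining a tree with strictly fewer cycles and a block-larger frontier word. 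Iterate until the tree is acyclic.

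The main obstacle is case (b): showing that collapsing a $\cG$-level pump-cycle into a single expansion step (i) is actually possible within $\cG'$ — i.e. that the Kleene-starred sub-grammars $\lrgrammar'_{X,r,s}$, $\rrgrammar'_{X,r,s}$ together with $\egrammar'_{X,r,s}$ really do generate all the needed repeated infixes, which is exactly the content of \cref{grammar-correctness} read in the reverse direction — and (ii) only loses information up to $\bo$. Subtlety (ii) is where the block order's failure of arbitrary insertion bites: the repeated infixes sit between the $r$'s (resp. $s$'s) that were exposed by $\lend_r,\rend_s$, and one must argue that shuffling these repetitions around respects the block decomposition. This is precisely what \cref{generalizedblockrepeat} was designed for — repeating existing factors yields a block-larger word — together with \cref{embedding-fresh-letter} to keep the $u\#v$ and $u\lhash v$, $u\rhash v$ segments from interfering. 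A secondary nuisance is bookkeeping the interaction between acyclicity of the outer $\cG$-layer and acyclicity of the recursively constructed inner layers, since a non-terminal of $\cG$ might formally reappear inside $\egrammar'_{X,r,s}$; one resolves this by renaming (the inner grammars use fresh non-terminals), so that "acyclic" can be taken per-layer and then globally.
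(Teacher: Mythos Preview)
Your approach is essentially the paper's: induct on the number of priorities, observe that a path in a $\cG'$-derivation alternates between $\cG$-segments and segments in one of the sub-grammars $\egrammar'_{X,r,s},\lrgrammar'_{X,r,s},\rrgrammar'_{X,r,s}$, handle repetitions inside a sub-grammar segment by the inductive hypothesis, and eliminate a repetition of an original non-terminal $X$ by replacing the whole sub-derivation $X\derivs_{\cG'} uXv$ by a single ``expansion step'' through the new productions. The paper phrases the outer loop as minimizing the number of repeating nodes and picking the topmost and lowest occurrence of $X$ on the offending path (which ensures no new cross-segment repetitions are introduced), but this is the same descent you set up.

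Two small corrections. First, you do not need \cref{generalizedblockrepeat} here; that lemma is the engine of the OCA construction, not this one. The reason the replacement word is block-larger is simply: by (the proof of) \cref{cfl-correctness} the pair $(u,v)$ embeds via $\bo$ into some $(u'',v'')$ with $X\derivs_{\cG} u''Xv''$, and then $u''\# v''$ lies in $\sigma_{X,r,s}(E_{X,r,s})$ \emph{exactly}, hence in the language generated by the new productions; passing to acyclic sub-derivations uses the inductive hypothesis on fewer priorities together with \cref{embedding-fresh-letter} and multiplicativity of $\bo$. No factor-repetition argument enters. Second, your phrase ``repeating the pump corresponds to taking more copies under the Kleene star'' misdescribes the mechanism: you are not accumulating repeated pumps into the stars, you are replacing one pump derivation by one expansion step whose correctness is guaranteed by the definitions of $E_{X,r,s}$, $\lrepeat_{X,r,s}$, $\rrepeat_{X,r,s}$ (the ``easy'' containment, as you note). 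With those adjustments your plan goes through and matches the paper.
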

Essentially, this is due to the fact that any repetition of a non-terminal $X$
on some path means that we can replace a corresponding derivation $X\derivs
uXv$ by using new productions from $\egrammar'_{X,r,s}$, $\lrgrammar'_{X,r,s}$, and
$\rrgrammar'_{X,r,s}$. Since these also have the property that every derivation can
be made acyclic, the lemma follows. See \cref{appendix-cfl} for details.

\subparagraph{Complexity analysis} To estimate the size of the constructed grammar,  let $f_p(n)$ be the maximal number of non-terminals of
a constructed Kleene grammar for an input grammar with $n$ non-terminals over
$p$ priorities. By \cref{transform-end,transform-repeat}, there is a constant $c$ such that each grammar $\egrammar_X$, $\lrgrammar_X$, and $\rrgrammar_X$ has at most $cn$ non-terminals.
Furthermore, $\cG'$ is obtained by
applying our construction to $3n(p+1)^2$ grammars with $p-1$ priorities of
size $cn$, and adding $Z_p$. Thus $f_p(n)\le n+3n(p+1)^2f_{p-1}(cn)+1$.
Since $f_{p-1}(n)\ge 1$, we can simplify to $f_p(n)\le 4n(p+1)^2 f_{p-1}(cn)$.
It is easy to check that $f_0(n)\le 4n+1\le 5n$, because
$\egrammar_{X,0,0}$ and $\lrgrammar_{X,0,0}$ and $\rrgrammar_{X,0,0}$ each only
have one non-terminal.  Hence $f_p(n)\le (4n(p+1)^2)^pf_0(c^pn)\le
(4n(p+1)^2)\cdot 4(c^pn)$, which is exponential in the size of $\cG$.

\section{Conclusion}\label{sec:conclusion}
We have initiated the study of computing priority and block downward closures for infinite-state systems. We have shown that for OCA, both closures can be computed in polynomial time. For CFL, we have provided a doubly exponential construction.

Many questions remain. First, we leave open whether the doubly exponential bound for context-free languages can be improved to exponential. An exponential lower bound is easily inherited from the exponential lower bound for subwords~~\cite{DBLP:conf/lata/BachmeierLS15}. Moreover, it is an intriguing question whether computability of subword downward closures for vector addition systems~\cite{DBLP:conf/icalp/HabermehlMW10}, higher-order pushdown automata~\cite{DBLP:conf/popl/HagueKO16}, and higher-order recursion schemes~\cite{DBLP:conf/lics/ClementePSW16} can be strengthend to block and priority downward closures.

\bibliography{main}

\begin{thebibliography}{10}

\bibitem{DBLP:conf/icalp/AbdullaBB01}
Parosh~Aziz Abdulla, Luc Boasson, and Ahmed Bouajjani.
\newblock Effective lossy queue languages.
\newblock In Fernando Orejas, Paul~G. Spirakis, and Jan van Leeuwen, editors,
  {\em Automata, Languages and Programming, 28th International Colloquium,
  {ICALP} 2001, Crete, Greece, July 8-12, 2001, Proceedings}, volume 2076 of
  {\em Lecture Notes in Computer Science}, pages 639--651. Springer, 2001.
\newblock \href {https://doi.org/10.1007/3-540-48224-5\_53}
  {\path{doi:10.1007/3-540-48224-5\_53}}.

\bibitem{DBLP:journals/corr/abs-1111-1011}
Mohamed~Faouzi Atig, Ahmed Bouajjani, and Shaz Qadeer.
\newblock Context-bounded analysis for concurrent programs with dynamic
  creation of threads.
\newblock {\em Log. Methods Comput. Sci.}, 7(4), 2011.
\newblock \href {https://doi.org/10.2168/LMCS-7(4:4)2011}
  {\path{doi:10.2168/LMCS-7(4:4)2011}}.

\bibitem{Atig2016OCAdown}
Mohamed~Faouzi Atig, Dmitry Chistikov, Piotr Hofman, K.~Narayan Kumar, Prakash
  Saivasan, and Georg Zetzsche.
\newblock The complexity of regular abstractions of one-counter languages.
\newblock In {\em Proceedings of the 31st Annual ACM/IEEE Symposium on Logic in
  Computer Science}, LICS '16, page 207–216, New York, NY, USA, 2016.
  Association for Computing Machinery.
\newblock \href {https://doi.org/10.1145/2933575.2934561}
  {\path{doi:10.1145/2933575.2934561}}.

\bibitem{DBLP:conf/mfcs/AtigMMS17}
Mohamed~Faouzi Atig, Roland Meyer, Sebastian Muskalla, and Prakash Saivasan.
\newblock On the upward/downward closures of {Petri} nets.
\newblock In Kim~G. Larsen, Hans~L. Bodlaender, and Jean{-}Fran{\c{c}}ois
  Raskin, editors, {\em 42nd International Symposium on Mathematical
  Foundations of Computer Science, {MFCS} 2017, August 21-25, 2017 - Aalborg,
  Denmark}, volume~83 of {\em LIPIcs}, pages 49:1--49:14. Schloss Dagstuhl -
  Leibniz-Zentrum f{\"{u}}r Informatik, 2017.
\newblock \href {https://doi.org/10.4230/LIPIcs.MFCS.2017.49}
  {\path{doi:10.4230/LIPIcs.MFCS.2017.49}}.

\bibitem{DBLP:conf/lata/BachmeierLS15}
Georg Bachmeier, Michael Luttenberger, and Maximilian Schlund.
\newblock Finite automata for the sub- and superword closure of cfls:
  Descriptional and computational complexity.
\newblock In Adrian{-}Horia Dediu, Enrico Formenti, Carlos
  Mart{\'{\i}}n{-}Vide, and Bianca Truthe, editors, {\em Language and Automata
  Theory and Applications - 9th International Conference, {LATA} 2015, Nice,
  France, March 2-6, 2015, Proceedings}, volume 8977 of {\em Lecture Notes in
  Computer Science}, pages 473--485. Springer, 2015.
\newblock \href {https://doi.org/10.1007/978-3-319-15579-1\_37}
  {\path{doi:10.1007/978-3-319-15579-1\_37}}.

\bibitem{DBLP:conf/icalp/0001GMTZ23}
Pascal Baumann, Moses Ganardi, Rupak Majumdar, Ramanathan~S. Thinniyam, and
  Georg Zetzsche.
\newblock Context-bounded analysis of concurrent programs (invited talk).
\newblock In Kousha Etessami, Uriel Feige, and Gabriele Puppis, editors, {\em
  50th International Colloquium on Automata, Languages, and Programming,
  {ICALP} 2023, July 10-14, 2023, Paderborn, Germany}, volume 261 of {\em
  LIPIcs}, pages 3:1--3:16. Schloss Dagstuhl - Leibniz-Zentrum f{\"{u}}r
  Informatik, 2023.
\newblock \href {https://doi.org/10.4230/LIPIcs.ICALP.2023.3}
  {\path{doi:10.4230/LIPIcs.ICALP.2023.3}}.

\bibitem{BaumannGanardiMajumdarThinniyamZetzsche2023b}
Pascal Baumann, Moses Ganardi, Rupak Majumdar, Ramanathan~S. Thinniyam, and
  Georg Zetzsche.
\newblock Context-bounded verification of context-free specifications.
\newblock {\em Proc. {ACM} Program. Lang.}, 7({POPL}):2141--2170, 2023.
\newblock \href {https://doi.org/10.1145/3571266} {\path{doi:10.1145/3571266}}.

\bibitem{DBLP:journals/pacmpl/BaumannMTZ22}
Pascal Baumann, Rupak Majumdar, Ramanathan~S. Thinniyam, and Georg Zetzsche.
\newblock Context-bounded verification of thread pools.
\newblock {\em Proc. {ACM} Program. Lang.}, 6({POPL}):1--28, 2022.
\newblock \href {https://doi.org/10.1145/3498678} {\path{doi:10.1145/3498678}}.

\bibitem{berstel1979transductions}
J.~Berstel.
\newblock {\em Transductions and Context-Free Languages}.
\newblock Vieweg+Teubner Verlag, 1979.

\bibitem{Blake1998AnAF}
S.~Blake, D.~Black, M.~Carlson, Elwyn~B. Davies, Zheng Wang, and Walter Weiss.
\newblock An architecture for differentiated services.
\newblock {\em RFC}, 2475:1--36, 1998.

\bibitem{DBLP:conf/lics/ClementePSW16}
Lorenzo Clemente, Pawel Parys, Sylvain Salvati, and Igor Walukiewicz.
\newblock The diagonal problem for higher-order recursion schemes is decidable.
\newblock In Martin Grohe, Eric Koskinen, and Natarajan Shankar, editors, {\em
  Proceedings of the 31st Annual {ACM/IEEE} Symposium on Logic in Computer
  Science, {LICS} '16, New York, NY, USA, July 5-8, 2016}, pages 96--105.
  {ACM}, 2016.
\newblock \href {https://doi.org/10.1145/2933575.2934527}
  {\path{doi:10.1145/2933575.2934527}}.

\bibitem{COURCELLE91}
Bruno Courcelle.
\newblock On constructing obstruction sets of words.
\newblock {\em Bulletin of the EATCS}, 44:178--186, 01 1991.

\bibitem{10.2307/2370405}
Leonard~Eugene Dickson.
\newblock Finiteness of the odd perfect and primitive abundant numbers with n
  distinct prime factors.
\newblock {\em American Journal of Mathematics}, 35(4):413--422, 1913.
\newblock URL: \url{http://www.jstor.org/stable/2370405}.

\bibitem{EHRENFEUCHT1983311}
A.~Ehrenfeucht, D.~Haussler, and G.~Rozenberg.
\newblock On regularity of context-free languages.
\newblock {\em Theoretical Computer Science}, 27(3):311--332, 1983.
\newblock Special Issue Ninth International Colloquium on Automata, Languages
  and Programming (ICALP) Aarhus, Summer 1982.
\newblock \href {https://doi.org/10.1016/0304-3975(82)90124-4}
  {\path{doi:10.1016/0304-3975(82)90124-4}}.

\bibitem{DBLP:journals/corr/abs-1904-10703}
Jean Goubault{-}Larrecq, Simon Halfon, Prateek Karandikar, K.~Narayan Kumar,
  and Philippe Schnoebelen.
\newblock The ideal approach to computing closed subsets in
  well-quasi-ordering.
\newblock {\em CoRR}, abs/1904.10703, 2019.
\newblock \href {http://arxiv.org/abs/1904.10703} {\path{arXiv:1904.10703}}.

\bibitem{GRUBER2007167}
Hermann Gruber, Markus Holzer, and Martin Kutrib.
\newblock The size of higman–haines sets.
\newblock {\em Theoretical Computer Science}, 387(2):167--176, 2007.
\newblock Descriptional Complexity of Formal Systems.
\newblock \href {https://doi.org/10.1016/j.tcs.2007.07.036}
  {\path{doi:10.1016/j.tcs.2007.07.036}}.

\bibitem{HaaseSchmitzSchnoebelen:PowerOfPriorityChannelSystems}
Christoph Haase, Sylvain Schmitz, and Philippe Schnoebelen.
\newblock {The Power of Priority Channel Systems}.
\newblock {\em {Logical Methods in Computer Science}}, {Volume 10, Issue 4},
  December 2014.
\newblock \href {https://doi.org/10.2168/LMCS-10(4:4)2014}
  {\path{doi:10.2168/LMCS-10(4:4)2014}}.

\bibitem{DBLP:conf/icalp/HabermehlMW10}
Peter Habermehl, Roland Meyer, and Harro Wimmel.
\newblock The downward-closure of {Petri} net languages.
\newblock In Samson Abramsky, Cyril Gavoille, Claude Kirchner, Friedhelm~Meyer
  auf~der Heide, and Paul~G. Spirakis, editors, {\em Automata, Languages and
  Programming, 37th International Colloquium, {ICALP} 2010, Bordeaux, France,
  July 6-10, 2010, Proceedings, Part {II}}, volume 6199 of {\em Lecture Notes
  in Computer Science}, pages 466--477. Springer, 2010.
\newblock \href {https://doi.org/10.1007/978-3-642-14162-1\_39}
  {\path{doi:10.1007/978-3-642-14162-1\_39}}.

\bibitem{DBLP:conf/popl/HagueKO16}
Matthew Hague, Jonathan Kochems, and C.{-}H.~Luke Ong.
\newblock Unboundedness and downward closures of higher-order pushdown
  automata.
\newblock In Rastislav Bod{\'{\i}}k and Rupak Majumdar, editors, {\em
  Proceedings of the 43rd Annual {ACM} {SIGPLAN-SIGACT} Symposium on Principles
  of Programming Languages, {POPL} 2016, St. Petersburg, FL, USA, January 20 -
  22, 2016}, pages 151--163. {ACM}, 2016.
\newblock \href {https://doi.org/10.1145/2837614.2837627}
  {\path{doi:10.1145/2837614.2837627}}.

\bibitem{HAINES196994}
Leonard~H. Haines.
\newblock On free monoids partially ordered by embedding.
\newblock {\em Journal of Combinatorial Theory}, 6(1):94--98, 1969.
\newblock \href {https://doi.org/10.1016/S0021-9800(69)80111-0}
  {\path{doi:10.1016/S0021-9800(69)80111-0}}.

\bibitem{halfon:tel-01945232}
Simon Halfon.
\newblock {\em {On Effective Representations of Well Quasi-Orderings}}.
\newblock Theses, {Universit{\'e} Paris-Saclay}, June 2018.
\newblock URL: \url{https://tel.archives-ouvertes.fr/tel-01945232}.

\bibitem{10.1112/plms/s3-2.1.326}
Graham Higman.
\newblock {Ordering by Divisibility in Abstract Algebras}.
\newblock {\em Proceedings of the London Mathematical Society},
  s3-2(1):326--336, 01 1952.
\newblock \href {https://doi.org/10.1112/plms/s3-2.1.326}
  {\path{doi:10.1112/plms/s3-2.1.326}}.

\bibitem{Kruskal1960WellquasiorderingTT}
J.~Kruskal.
\newblock Well-quasi-ordering, the tree theorem, and vazsonyi’s conjecture.
\newblock {\em Transactions of the American Mathematical Society}, 95:210--225,
  1960.

\bibitem{LEBOUDEC1992279}
Jean-Yves {Le Boudec}.
\newblock The asynchronous transfer mode: a tutorial.
\newblock {\em Computer Networks and ISDN Systems}, 24(4):279--309, 1992.
\newblock The ATM-Asynchronous Transfer Mode.
\newblock \href {https://doi.org/10.1016/0169-7552(92)90114-6}
  {\path{doi:10.1016/0169-7552(92)90114-6}}.

\bibitem{DBLP:journals/lmcs/MajumdarTZ22}
Rupak Majumdar, Ramanathan~S. Thinniyam, and Georg Zetzsche.
\newblock General decidability results for asynchronous shared-memory programs:
  Higher-order and beyond.
\newblock {\em Log. Methods Comput. Sci.}, 18(4), 2022.
\newblock \href {https://doi.org/10.46298/lmcs-18(4:2)2022}
  {\path{doi:10.46298/lmcs-18(4:2)2022}}.

\bibitem{10.1007/10719839_37}
Richard Mayr.
\newblock Undecidable problems in unreliable computations.
\newblock In Gaston~H. Gonnet and Alfredo Viola, editors, {\em LATIN 2000:
  Theoretical Informatics}, pages 377--386, Berlin, Heidelberg, 2000. Springer
  Berlin Heidelberg.

\bibitem{shallit_2008}
Jeffrey Shallit.
\newblock {\em A Second Course in Formal Languages and Automata Theory}.
\newblock Cambridge University Press, 2008.
\newblock \href {https://doi.org/10.1017/CBO9780511808876}
  {\path{doi:10.1017/CBO9780511808876}}.

\bibitem{DBLP:conf/concur/TorreMW15}
Salvatore~La Torre, Anca Muscholl, and Igor Walukiewicz.
\newblock Safety of parametrized asynchronous shared-memory systems is almost
  always decidable.
\newblock In Luca Aceto and David de~Frutos{-}Escrig, editors, {\em 26th
  International Conference on Concurrency Theory, {CONCUR} 2015, Madrid, Spain,
  September 1.4, 2015}, volume~42 of {\em LIPIcs}, pages 72--84. Schloss
  Dagstuhl - Leibniz-Zentrum f{\"{u}}r Informatik, 2015.
\newblock \href {https://doi.org/10.4230/LIPIcs.CONCUR.2015.72}
  {\path{doi:10.4230/LIPIcs.CONCUR.2015.72}}.

\bibitem{vanleeuwen1978}
Jan van Leeuwen.
\newblock Effective constructions in well-partially-ordered free monoids.
\newblock {\em Discrete Mathematics}, 21(3):237--252, 1978.

\bibitem{DBLP:conf/icalp/Zetzsche15}
Georg Zetzsche.
\newblock An approach to computing downward closures.
\newblock In Magn{\'{u}}s~M. Halld{\'{o}}rsson, Kazuo Iwama, Naoki Kobayashi,
  and Bettina Speckmann, editors, {\em Automata, Languages, and Programming -
  42nd International Colloquium, {ICALP} 2015, Kyoto, Japan, July 6-10, 2015,
  Proceedings, Part {II}}, volume 9135 of {\em Lecture Notes in Computer
  Science}, pages 440--451. Springer, 2015.
\newblock \href {https://doi.org/10.1007/978-3-662-47666-6\_35}
  {\path{doi:10.1007/978-3-662-47666-6\_35}}.

\bibitem{DBLP:conf/stacs/Zetzsche15}
Georg Zetzsche.
\newblock Computing downward closures for stacked counter automata.
\newblock In Ernst~W. Mayr and Nicolas Ollinger, editors, {\em 32nd
  International Symposium on Theoretical Aspects of Computer Science, {STACS}
  2015, March 4-7, 2015, Garching, Germany}, volume~30 of {\em LIPIcs}, pages
  743--756. Schloss Dagstuhl - Leibniz-Zentrum f{\"{u}}r Informatik, 2015.
\newblock \href {https://doi.org/10.4230/LIPIcs.STACS.2015.743}
  {\path{doi:10.4230/LIPIcs.STACS.2015.743}}.

\bibitem{DBLP:conf/icalp/Zetzsche16}
Georg Zetzsche.
\newblock The complexity of downward closure comparisons.
\newblock In Ioannis Chatzigiannakis, Michael Mitzenmacher, Yuval Rabani, and
  Davide Sangiorgi, editors, {\em 43rd International Colloquium on Automata,
  Languages, and Programming, {ICALP} 2016, July 11-15, 2016, Rome, Italy},
  volume~55 of {\em LIPIcs}, pages 123:1--123:14. Schloss Dagstuhl -
  Leibniz-Zentrum f{\"{u}}r Informatik, 2016.
\newblock \href {https://doi.org/10.4230/LIPIcs.ICALP.2016.123}
  {\path{doi:10.4230/LIPIcs.ICALP.2016.123}}.

\bibitem{Zetzsche2016c}
Georg Zetzsche.
\newblock {\em Monoids as Storage Mechanisms}.
\newblock {PhD thesis}, Technische Universit{\"a}t Kaiserslautern, 2016.

\bibitem{DBLP:conf/lics/Zetzsche18}
Georg Zetzsche.
\newblock Separability by piecewise testable languages and downward closures
  beyond subwords.
\newblock In Anuj Dawar and Erich Gr{\"{a}}del, editors, {\em Proceedings of
  the 33rd Annual {ACM/IEEE} Symposium on Logic in Computer Science, {LICS}
  2018, Oxford, UK, July 09-12, 2018}, pages 929--938. {ACM}, 2018.
\newblock \href {https://doi.org/10.1145/3209108.3209201}
  {\path{doi:10.1145/3209108.3209201}}.

\end{thebibliography}

\appendix

\section{Missing proofs from~\cref{sec:block-order}}\label{app:block}

Before we prove \cref{thm:generalizedblockwqo}, we recall some well-known facts that are used in the proof. WQOs are preserved under many operations on quasi-ordered sets. We mention some of these operations below, which will be used later in the paper to show the WQO property. The reader is referred to Halfon's thesis \cite{halfon:tel-01945232} for a well informed survey of such results. We mention some of these results below.

\begin{lemma}[Dickson's Lemma, \cite{10.2307/2370405} ]
	Given two {\wqo}s $ (A,\leq_1) $ and $ (B,\leq_2) $, its product $ (A\times B,\leq_1\times \leq_2) $ is also a \wqo. Here, $ (a_1,b_1)\leq_1\times \leq_2 (a_2,b_2) $ if $ a_1\leq_1 a_2 $ and $ b_1\leq_2 b_2 $.
\end{lemma}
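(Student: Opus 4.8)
The plan is to reduce everything to the standard characterization of a \wqo in terms of infinite non-decreasing subsequences and then argue coordinatewise. First I would record the following equivalent formulation: a quasi-order $(X,\leq)$ is a \wqo if and only if every infinite sequence $x_0,x_1,x_2,\ldots$ over $X$ contains an infinite non-decreasing subsequence $x_{i_0}\leq x_{i_1}\leq\cdots$ with $i_0<i_1<\cdots$. The only nontrivial implication is proved by the ``peak'' argument: call an index $i$ \emph{terminal} if $x_i\not\leq x_j$ for all $j>i$. If there were infinitely many terminal indices $t_0<t_1<\cdots$, then for $k<l$ we would have $t_k<t_l$ and hence $x_{t_k}\not\leq x_{t_l}$, so the subsequence $x_{t_0},x_{t_1},\ldots$ has no increasing pair, contradicting the \wqo assumption. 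Thus there is some $N$ larger than every terminal index; for every $i\geq N$ the index $i$ is not terminal, so one can pick $j>i$ with $x_i\leq x_j$. Iterating this from $i_0=N$ produces the desired infinite non-decreasing subsequence.

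Given this, the proof of Dickson's Lemma is short. Let $(a_0,b_0),(a_1,b_1),\ldots$ be an arbitrary infinite sequence in $A\times B$. Since $(A,\leq_1)$ is a \wqo, applying the characterization to the first components yields indices $i_0<i_1<\cdots$ along which $a_{i_0}\leq_1 a_{i_1}\leq_1 a_{i_2}\leq_1\cdots$. Now look at the infinite sequence $b_{i_0},b_{i_1},b_{i_2},\ldots$ in $B$; since $(B,\leq_2)$ is a \wqo, there exist $k<l$ with $b_{i_k}\leq_2 b_{i_l}$. Putting $p=i_k$ and $q=i_l$ we get $p<q$, and $a_p\leq_1 a_q$ because $p$ and $q$ both lie on the non-decreasing chain in the first coordinate, while $b_p\leq_2 b_q$ by choice. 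Hence $(a_p,b_p)\leq_1\times\leq_2(a_q,b_q)$, which is an increasing pair in the original sequence, so $(A\times B,\leq_1\times\leq_2)$ is a \wqo.

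I do not expect a real obstacle here: the single point needing a little care is making the ``infinite non-decreasing subsequence'' characterization precise (reflexivity of $\leq_1$ is what lets the extracted first coordinates form a genuine non-decreasing chain, and transitivity is implicitly used). If one preferred to avoid invoking that characterization, one could instead iterate a pigeonhole/Ramsey-style extraction directly on the pair sequence; and one could alternatively observe that this is simply the finite-product instance of the Higman-type machinery already invoked for \cref{thm:generalizedblockwqo}. But the two-step coordinatewise argument above is the cleanest and most self-contained route.
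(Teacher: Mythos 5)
The paper does not prove this lemma at all: it is stated as a classical result (Dickson's Lemma) with a citation to the original source, and is used as a black box in the proof of \cref{thm:generalizedblockwqo}. Your proof is the standard textbook argument and is correct. The ``terminal index'' (or ``bad suffix'') extraction giving the equivalence of \wqo with the existence of infinite non-decreasing subsequences is exactly the usual route, and the two-step coordinatewise extraction (first pass through $A$ to linearize the first coordinate, then one application of the \wqo property of $B$ on that subsequence) is sound. Your remark that reflexivity and transitivity of $\leq_1$ are what make the extracted first coordinates a genuine chain is precisely the right point of care. Since the paper offers no competing proof, there is nothing to compare against; if you wanted to align even more closely with the machinery the paper already invokes, you could, as you suggest, derive Dickson from Higman (viewing $A\times B$ as embedding into two-letter sequences), but the direct argument you give is cleaner and self-contained.
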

\begin{lemma}[Higman's Lemma, \cite{10.1112/plms/s3-2.1.326}]\label{higmanLemma}
	$ (X^*,\leq_*) $ is a \wqo $ \iff $ $ (X,\leq) $ is a \wqo. We say $ a_1a_2\cdots a_k \leq_* b_1b_2\cdots b_l  $ if there is a strictly monotonically increasing map $ \phi:[1,k]\rightarrow [1,l] $ such that $ \forall\ i\in [1,k], a_i\leq b_{\phi(i)} $.
\end{lemma}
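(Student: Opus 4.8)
The plan is to prove both directions of the equivalence, with the forward direction being routine and the backward direction being the substantial one, handled by Nash-Williams' minimal bad sequence technique. Before anything, I would record that $\leq_*$ is genuinely a quasi-order: reflexivity is witnessed by the identity map on positions, and transitivity by composing the strictly increasing witness maps (using transitivity of $\leq$ on $X$). For the easy direction, assuming $(X^*,\leq_*)$ is a \wqo, note that for length-one words $a \leq_* b$ holds iff $a \leq b$; hence any infinite sequence $a_0,a_1,\ldots$ in $X$ gives an infinite sequence of length-one words, which by hypothesis contains an increasing pair $a_i \leq_* a_j$ with $i<j$, i.e. $a_i \leq a_j$. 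So $(X,\leq)$ is a \wqo.

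For the hard direction, assume $(X,\leq)$ is a \wqo and suppose toward a contradiction that $(X^*,\leq_*)$ is not, so there is a \emph{bad} sequence $w_0,w_1,w_2,\ldots$ in $X^*$ with $w_i \not\leq_* w_j$ for all $i<j$. Since $\epsilon \leq_* w$ for every $w$, no $w_i$ is empty. I would then build a \emph{minimal} bad sequence by recursion: having fixed $w_0,\ldots,w_{k-1}$, choose $w_k$ of minimal length among all words that can appear in position $k$ of a bad sequence extending $w_0,\ldots,w_{k-1}$. Write each $w_i = a_i v_i$ with $a_i \in X$ the first letter and $v_i \in X^*$ the remaining suffix.

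Now the core step. Using the standard fact (provable by Ramsey's theorem) that every infinite sequence in a \wqo has an infinite non-decreasing subsequence, apply this to $a_0,a_1,a_2,\ldots$ to get indices $i_0 < i_1 < i_2 < \cdots$ with $a_{i_0} \leq a_{i_1} \leq a_{i_2} \leq \cdots$. Consider the sequence
\[ w_0, w_1, \ldots, w_{i_0-1},\; v_{i_0}, v_{i_1}, v_{i_2}, \ldots. \]
Since $|v_{i_0}| < |w_{i_0}|$, minimality at position $i_0$ forces this sequence to be not bad, hence it has an increasing pair, and I would rule out all possibilities: it cannot lie among $w_0,\ldots,w_{i_0-1}$ (a subsequence of the original bad sequence); it cannot be $w_k \leq_* v_{i_j}$ with $k < i_0$, since then $w_k \leq_* v_{i_j} \leq_* w_{i_j}$; and it cannot be $v_{i_j} \leq_* v_{i_l}$ with $j < l$, since combining the witnessing embedding with $a_{i_j} \leq a_{i_l}$ gives $w_{i_j} = a_{i_j} v_{i_j} \leq_* a_{i_l} v_{i_l} = w_{i_l}$. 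Every case contradicts badness of the original sequence.

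The main obstacle is expository rather than mathematical: carefully justifying the construction of the minimal bad sequence (which invokes dependent choice) and cleanly stating the auxiliary fact that every infinite sequence in a \wqo has an infinite non-decreasing subsequence. Once the set-up is fixed, the case analysis and the routine ``extend the embedding'' arguments are mechanical.
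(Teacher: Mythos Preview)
Your proof is correct and is the standard Nash-Williams minimal bad sequence argument. However, there is nothing to compare against: the paper does not prove this lemma at all. It is stated in the appendix with a citation to Higman's original 1952 paper and is used purely as a black-box ingredient (together with Dickson's Lemma and the Monomorphism Lemma) in the proof of \cref{thm:generalizedblockwqo}. So your write-up goes well beyond what the paper provides, supplying a self-contained proof where the paper simply invokes the literature.
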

\begin{lemma}[Monomorphism Lemma, \cite{Kruskal1960WellquasiorderingTT}]
	Let $ (X,\leq_1) $ and $ (Y,\leq_2) $ be two quasi orders, and $ h:X\rightarrow Y $ be a monomorphism. If $ (Y,\leq_2) $ is a \wqo then $ (X,\leq_1) $ is a \wqo.
\end{lemma}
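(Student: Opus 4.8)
The plan is to argue directly from the definition of a \wqo used in the paper (every infinite sequence contains an increasing pair), by transporting an arbitrary candidate bad sequence in $X$ through $h$ into $Y$, where the \wqo hypothesis applies, and then pulling the witnessing pair back along $h$.

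Concretely, I would first fix an arbitrary infinite sequence $x_0,x_1,x_2,\ldots$ of elements of $X$ and consider the induced infinite sequence $h(x_0),h(x_1),h(x_2),\ldots$ in $Y$. Since $(Y,\leq_2)$ is a \wqo, there exist indices $i<j$ with $h(x_i)\leq_2 h(x_j)$. Now I would invoke the monomorphism property of $h$ --- specifically the implication $h(x_1)\leq_2 h(x_2)\Rightarrow x_1\leq_1 x_2$, which is one half of the biconditional in the definition of a monomorphism --- to conclude $x_i\leq_1 x_j$. Since $i<j$, this exhibits the required increasing pair in the original sequence; as the sequence was arbitrary, $(X,\leq_1)$ is a \wqo.

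There is essentially no obstacle here: the argument uses only that $h$ \emph{reflects} the order (the ``$\Leftarrow$'' direction of the biconditional), and injectivity of $h$ is not even needed for this direction. The one point worth stating carefully is that the hypothesis and the conclusion both refer to the same notion of \wqo (``every infinite sequence has an increasing pair $x_i\leq x_j$ with $i<j$''), so no reformulation via infinite antichains or infinite strictly descending chains is required; if one preferred the contrapositive formulation, the same half of the biconditional shows that the $h$-image of an infinite bad sequence in $X$ is an infinite bad sequence in $Y$, contradicting the hypothesis.
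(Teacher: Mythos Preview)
Your proof is correct and is the standard argument. Note, however, that the paper does not give its own proof of this lemma: it merely states it with a citation to Kruskal~\cite{Kruskal1960WellquasiorderingTT} as a well-known preservation result for \wqo{}s, so there is nothing to compare against.
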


We prove that the block order is a WQO. We restate \cref{thm:generalizedblockwqo} for convenience of the reader.
\thmgeneralizedblockwqo*
\begin{proof}
	We will prove the lemma by induction on the size of $ \priority $. Firstly, we note that generalized block order and subword order coincide for singleton priority set i.e. $ u\bo v \iff u\so v $, by definition. Since $ (\Sigma^*,\so) $ is a \wqo, this gives us the base case, i.e. if $ \priority $ is singleton, then $ (\Sigma^*,\bo) $ is a \wqo.
	
	Now, for the induction hypothesis, assume that the lemma is true for $ \priority=[0,p-1] $. We show that the lemma holds for $ \priority=[0,p] $.
	
	Since $ (A,=) $ is a \wqo for any finite $ A $, by Dickson's lemma, $ (\Sigma_{p-1}^*\times A_p,~\bo\times =) $ is a \wqo. Then, by Higman's lemma, $ ((\Sigma_{p-1}^*\times A_p)^*,\ (\bo\times =)_*) $ is a \wqo. Again, by Dickson's lemma, $ ((\Sigma_{p-1}^*\times A_p)\times(\Sigma_{p-1}^*\times A_p)^*\times \Sigma_{p-1}^*,\ (\bo\times =)\times(\bo\times =)_*\times \bo) $ is a \wqo.
	
	Now, consider the function  \[ h: (\Sigma_{p}^*,\bo) \rightarrow ((\Sigma_{p-1}^*\times A_p)\times(\Sigma_{p-1}^*\times A_p)^*\times \Sigma_{p-1}^*,\ (\bo\times =)\times(\bo\times =)_*\times \bo)  \]  defined as,
	\[ u_0y_0u_1y_1u_2y_2\cdots y_{k-1}u_k\mapsto ((u_0,y_0),(u_1,y_1),(u_2,y_2),\ldots ,(u_{k-1},y_{k-1}), u_k),  \]
	where $ u_i $s are sub-$ p $ blocks.
	
	It is easy to see that $ h $ is a monomorphism. Then, by the monomorphism lemma, we get that $ (\Sigma_{p}^*,\bo) $ is a \wqo.
\end{proof}

We now introduce the notion of upward closed sets, which allows us to prove the \cref{generalizedblockregular}.

\noindent
\textbf{Upward closure.} Upward closure is the dual of downward closure. Given a set $ S $ with a partial order $ \triangleleft $, the $ \triangleleft $-upward closure of $ L\subseteq S $, denoted by $ L\uparrow_{\triangleleft} $, is the set of elements of $ S $ which are larger \wrt $ \triangleleft $ than some element in $ L $, i.e.
\[ L\uparrow_{\triangleleft} = \{ u\in S\, |\ \exists\ v \text{ such that } v\ \triangleleft\ u \} \]
A subset $ L $ of $ S $ is called \textit{$ \triangleleft $-upward closed} if $ L=L\uparrow_{\triangleleft} $.

The \textit{subword upward closure} and \textit{block upward closure} are defined by taking the set of finite words $ \Sigma^* $ with the partial orders $ \so $ and $ \bo $, respectively. For $ L\subseteq\Sigma^* $,
\begin{eqnarray*}
	L\uparrow &=&\{ u\in S\, |\ \exists\ v \text{ such that } v\ \so\ u  \} \\
	L\Uparrow &=&\{ u\in S\, |\ \exists\ v \text{ such that } v\ \bo\ u  \} 
\end{eqnarray*}
It is easy to see that the complement of a downward closed set is an upward closed set and vice versa, hence they are dual of each other. %
The following theorem characterizes regular sets using upward closure.

\begin{theorem}[\cite{EHRENFEUCHT1983311}]\label{multiplicativewqoregular}
	A set $ S $ is regular iff it is the upward closure of some multiplicative \wqo.
\end{theorem}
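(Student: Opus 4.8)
The plan is to prove the two implications separately; the forward one is a short exercise, so I expect the converse to carry the weight. For the forward implication, suppose $S$ is regular and let $\equiv_S$ be its syntactic congruence, which has finite index. Read $\equiv_S$ as a quasi-order by declaring $u\preceq v$ iff $u\equiv_S v$. Then $\preceq$ is reflexive and transitive; it is multiplicative because $\equiv_S$ is a two-sided congruence; and it is a \wqo since, the index being finite, every infinite sequence of words repeats a class and therefore contains an (equal, hence comparable) pair. As $\equiv_S$-equivalent words are both in or both out of $S$, the language $S$ is a union of $\preceq$-classes and is thus $\preceq$-upward closed, which is what the statement requires.

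For the converse, suppose $S$ is $\preceq$-upward closed (equivalently, $S=S\uparrow_\preceq$) for some multiplicative \wqo $\preceq$ on $\Sigma^*$. I would prove $S$ regular through the Myhill--Nerode characterisation, by showing that the family of left residuals $\mathcal R=\{\,u^{-1}S\mid u\in\Sigma^*\,\}$ is finite, where $u^{-1}S=\{z\mid uz\in S\}$. Multiplicativity together with upward-closedness of $S$ gives two facts: every residual $u^{-1}S$ is again $\preceq$-upward closed (if $z\in u^{-1}S$ and $z\preceq z'$, then $uz\in S$ and $uz\preceq uz'$, so $uz'\in S$ and $z'\in u^{-1}S$), and $u\preceq v$ implies $u^{-1}S\subseteq v^{-1}S$ (if $z\in u^{-1}S$, then $uz\in S$ and $uz\preceq vz$, so $vz\in S$). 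Hence $u\mapsto u^{-1}S$ is a monotone map from $(\Sigma^*,\preceq)$ into the poset $\mathcal U$ of all $\preceq$-upward-closed subsets of $\Sigma^*$, ordered by inclusion.

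The finiteness of $\mathcal R$ then follows by combining three observations. First, because $\preceq$ is a \wqo, $\mathcal U$---and hence every sub-poset of it, in particular $\mathcal R$---satisfies the ascending chain condition: given a strictly increasing chain $U_0\subsetneq U_1\subsetneq\cdots$, pick $x_k\in U_k\setminus U_{k-1}$; by the \wqo property there are $i<j$ with $x_i\preceq x_j$, and since $x_i\in U_i\subseteq U_{j-1}$ and $U_{j-1}$ is upward closed, $x_j\in U_{j-1}$, contradicting $x_j\notin U_{j-1}$. Second, $\mathcal R$, as the monotone image of a \wqo, is itself well-quasi-ordered by inclusion. Third, a well-quasi-ordered poset satisfying the ascending chain condition is finite---an infinite such poset contains, by Ramsey's theorem, an infinite antichain or an infinite chain, the latter yielding an infinite strictly monotone sequence, and a \wqo with the ascending chain condition admits none of these (no infinite antichain, no strictly descending sequence, no strictly ascending chain). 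Therefore $\mathcal R$ is finite, and the automaton with state set $\mathcal R$, initial state $S$, transitions $u^{-1}S\xrightarrow{a}(ua)^{-1}S$, and accepting states those residuals that contain $\varepsilon$, recognises $S$; so $S$ is regular.

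The main obstacle lies entirely in the converse, specifically in resisting a natural but dead-end approach: imitating the subword case by building an automaton directly from the finitely many $\preceq$-minimal generators of $S$. For an arbitrary multiplicative \wqo, a principal filter $\{w\}\uparrow_\preceq$ admits no transparent regular description---it need not even be upward closed for the subword order, as the priority and block orders already illustrate---so the residual/Myhill--Nerode route is the one to take. Within it, the only points requiring care are the verification that $\mathcal U$ satisfies the ascending chain condition and the auxiliary fact that a \wqo with the ascending chain condition is finite.
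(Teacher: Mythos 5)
The paper does not prove this theorem: it imports it as a citation from Ehrenfeucht, Haussler, and Rozenberg and uses it as a black box in the appendix, so there is no in-paper proof to compare against. Your argument is correct and is essentially the standard proof of this ``generalized Myhill--Nerode'' theorem. The forward direction via the syntactic congruence is fine (with the paper's definitions a quasi-order need not be antisymmetric, so an equivalence relation of finite index is a legitimate multiplicative \wqo, and $S$ is a union of its classes). For the converse, all the steps check out: multiplicativity yields both that each residual $u^{-1}S$ is $\preccurlyeq$-upward closed and that $u\mapsto u^{-1}S$ is monotone into the upward-closed sets ordered by inclusion; the \wqo gives the ascending chain condition on that family; and the Ramsey-style finiteness argument for a countable partial order that is well-quasi-ordered and satisfies ACC closes the case, after which the residual automaton recognizes $S$. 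One minor streamlining: you can drop observations~2 and~3 entirely --- given pairwise distinct residuals $w_1^{-1}S, w_2^{-1}S,\ldots$, extract (by the perfect-subsequence property of \wqo{}s) an infinite $\preccurlyeq$-increasing subsequence $w_{i_1}\preccurlyeq w_{i_2}\preccurlyeq\cdots$; the corresponding residuals are nested and distinct, so strictly ascending, contradicting the ACC of observation~1 directly. But this only shortens the route; your version is sound as written.
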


We restate the \cref{subwordregular} and \cref{generalizedblockregular} below. %

\donwardclosuresregular*
The proof for priority downward closure is analogous for that of block order, as shown below.
\lemgeneralizedblockregular*
\begin{proof}	
	Since the complement of the downward closed set is an upward closed set, regular languages are closed under complementation, and block order is a multiplicative WQO (\cref{generalizedblockmultiplicative}), the proof of the lemma is a simple corollary of \cref{multiplicativewqoregular}.
\end{proof}

\section{Missing proofs from~\cref{sec:regular}}\label{app:regular}

\sizeprioritytrans*
\begin{proof}
	For the block order consider the transducer that has one state for every priority and a sink state, and for every state it reads a letter, and
	\begin{itemize}
		\item if the letter has lower or equal priority as the state, does not output it, and stays there,
		\item if the letter has equal priority, outputs it, and goes to state $ 0 $, and
		\item for other scenarios, goes to the sink state.
	\end{itemize}
	At state $ 0 $, if the letter is output, then it stays at $ 0 $, else goes to the state with priority of the letter. This intuitively allows dropping whole sub-$ i $ blocks until priority $ i $ is output again. The construction exploits the fact that between two consecutive letters which are not dropped, no bigger priority letter is dropped.
	
	Similarly, for priority order, we have the same state space, along with another accepting state.
	\begin{itemize}
		\item if the letter has strictly lower priority than the state, then does not output it, and stays there,
		\item if the letter has same or higher priority, does not output it, and goes to the state with priority of the letter,
		\item if the letter has same or higher priority, outputs it, and goes to the state with priority 0, or the accepting state non-deterministically.
	\end{itemize}
	The priority $ 0 $ state is the initial state, and the new accepting state is the final state. Intuitively, the transducer remembers the largest priority letter that has been dropped, and keeps only a letter of higher priority later. To be accepting, it has to read the last letter to go to the accepting final state. 
\end{proof}

\section{Missing proofs from~\cref{sec:oca}}\label{app:oca}

We restate the \cref{blockdownwardsimpleOCA} below and prove it formally.

\blockdownwardsimpleOCA*

\begin{proof}
	In this proof we use the shorthand $ [n] $ for $ [0,n]= \{0,\ldots, n\} $, and $ (n) $ for $ [1,n]=\{1,\ldots,n\} $.
	We describe the construction of the intermediate \FSA $ \nb $ formally. Let $ U\geq K^2+K+1 $.
	\[ \nb=(Q_1\cup Q_2\cup Q_3,\Sigma,\Delta,q_0',F') \]
	where $ Q_1=Q\times[K]\times\{1\}, Q_2=(Q\times[U]\times\{2\})\cup (Q\times Q) $ and $ Q_3=Q\times[K]\times\{3\} $. We let $ q_0'=(q_0,0,1) $ and $ F'=\{(q_f,0,1),(q_f,0,3) \} $. The transition relation is the union of the relations $ \Delta_1, \Delta_2 $ and $ \Delta_3 $ defined as follows:
	
	\noindent
	\textbf{Transitions in $ \Delta_1 $:}
	\begin{enumerate}
		\setlength\itemsep{0em}
		\item $ (q,n,1)\xrightarrow{a}(q',n,1) $ for all $ n\in[K] $ whenever $ (q,a,i,q')\in\delta $. Simulate an internal move.
		\item $ (q,n,1)\xrightarrow{a}(q',n-1,1) $ for all $ n\in(K) $ whenever $ (q,a,-1,q')\in\delta $. Simulate a decrement.
		\item $ (q,n,1)\xrightarrow{a}(q',n+1,1) $ for all $ n\in[K-1] $ whenever $ (q,a,+1,q')\in\delta $. Simulate an increment.
		\item $ (q,K,1)\xrightarrow{a}(q',K+1,2) $ whenever $ (q,a,+1,q')\in\delta $. Simulate an increment and shift to second phase.
	\end{enumerate}	
	\textbf{Transitions in $ \Delta_2 $:}
	\begin{enumerate}
		\setlength\itemsep{0em}
		\item $ (q,n,2)\xrightarrow{a}(q',n,2) $ for all $ n\in[U] $ whenever $ (q,a,i,q')\in\delta $. Simulate an internal move.
		\item $ (q,n,2)\xrightarrow{a}(q',n-1,2) $ for all $ n\in(U) $ whenever $ (q,a,-1,q')\in\delta $. Simulate a decrement.
		\item $ (q,n,2)\xrightarrow{a}(q',n+1,2) $ for all $ n\in[U-1] $ whenever $ (q,a,+1,q')\in\delta $. Simulate an increment.
		\item $ (q,K+1,1)\xrightarrow{a}(q',K,3) $ whenever $ (q,a,+1,q')\in\delta $. Simulate an decrement and shift to third phase.
		\item $ (q,n,2)\xrightarrow{ \epsilon }(q,q) $. Start simulating OCA as an NFA.
		\item $ (q,q)\xrightarrow{ \epsilon }(q,n,2) $. Stop simulating OCA as an NFA.
		\item $ (q_1,q_2)\xrightarrow{a}(q_1',q_2) $ for all $ a\in\Sigma $ and $ q_2\in Q $ whenever $ (q_1,a,x,q_1')\in\delta $ for some $ x\in \{+1,-1,0,z  \} $. Simulate OCA as an \FSA starting from $ q_2 $.
	\end{enumerate}
	\textbf{Transitions in $ \Delta_3 $:}
	\begin{enumerate}
		\setlength\itemsep{0em}
		\item $ (q,n,3)\xrightarrow{a}(q',n,3) $ for all $ n\in[K] $ whenever $ (q,a,i,q')\in\delta $. Simulate an internal move.
		\item $ (q,n,3)\xrightarrow{a}(q',n-1,3) $ for all $ n\in(K) $ whenever $ (q,a,-1,q')\in\delta $. Simulate a decrement.
		\item $ (q,n,3)\xrightarrow{a}(q',n+1,3) $ for all $ n\in[K-1] $ whenever $ (q,a,+1,q')\in\delta $. Simulate an increment.
	\end{enumerate}
	
	\ocacontainsoriginaloca*
	\begin{claimproof}
		Let $ w\in\lna $. Then there is a run $ \rho $ in $ \lna $ on $ w $, which can be partitioned as follows,
		\[ \rho = (q_0,0)\xrightarrow{x}(q_1,s)\xrightarrow{y}(q_2,t)\xrightarrow{z}(f,0) \]
		where $ \rho_1= (q_0,0)\xrightarrow{x}(q_1,s) $ is the longest prefix such that the counter value stays below $ K $, and $ \rho_3=(q_2,t)\xrightarrow{z}(f,0) $ is the longest suffix disjoint from $ \rho_1 $ such that the counter value stays below $ K $, and $ \rho_2=(q_1,s)\xrightarrow{y}(q_2,t) $. Since $ Q_1 $ and $ Q_2 $ can simulate $ \na $ by keeping track of counter values below $ K $, we know that there are runs $ \rho'_1= (q_0,0,1)\xrightarrow{x}(q_1,s,1) $ and $ \rho'_3=(q_2,t,1)\xrightarrow{z}(q_3,0,1) $ in $ \nb $. We also observe that if the counter value does not go above $ K $ in $\rho$, then $ \rho_2 $ and $ \rho_3 $ are empty, and $ \rho=(q_0,0)\xrightarrow{w}(f,0) $. So $ (q_0,0,1)\xrightarrow{w}(f,0,1) $ is a valid and accepting run in $ \nb $.
		
		So now suppose $ \rho $ exceeds $ K $ in the counter. Now if the counter value stays below $ U $, then $ \rho_2 $ can be simulated by $ Q_2 $ and it's transitions, and \[ \rho'= (q_0,0,1)\xrightarrow{x}(q_1,K,1)\xrightarrow{y}(q_2,K,1)\xrightarrow{z}(f,0,1) \]
		is a valid run in $ \nb $.
		
		Let the maximum counter value reached in the run $ \rho $ be $ m $. If $ m\geq U $, then we show that the run can be shortened to keep the counter value below $ U $, and the new run along with the trimmed part can be simulated by $ \nb $. 
		
		Let $ \rho_m $ be the shortest prefix of $ \rho_2 $ such that at the end of $ \rho_m $ the counter value is $ m $. For each $ K\leq i\leq m $, consider the pair of states $ (p_l^i, p_r^i) $, such that $ (p_l^i,i) $ be the last configuration in $ \rho_m $ with the counter value $ i $, and $ (p_r^i,i) $ be the first configuration in $ \rho_2 $ after $ \rho_m $ such that the counter value is $ i $. Since we have only $ K $ many vertices, we have $ K^2 $ such pairs. But $ i $ ranges from $ K+1 $ to $ K^2+K+1 $, by PHP, we have that for some $ K\leq i<j\leq m $, $ (p_l^i, p_r^i) =(p_l^j, p_r^j)  $. Then the runs $ (p_l^i,i)\xrightarrow{y_{(i,j)}} (p_l^j,j) $ $ (p_r^j,j)\xrightarrow{y_{(j,i)}} (p_r^i,i) $ can be removed from $ \rho_2 $ to reduce the counter value, and can be simulated in $ \nb $ by edges of type 5, 6, and 7 in $ \Delta_2 $. We do this repeatedly to get a shorter run in $ \na $ which does not exceed $ U $, and simulate the trimmed parts by edges of type $ 5,6, $ and $ 7 $. And this shorter run can be trivially simulated by edges of type $ 1-4 $ in $ \Delta_2 $. Hence, we can simulate $ \rho $ in $ \nb $, and $ w\in\lnb $.%
	\end{claimproof}
	
	\ocadownwardclosurecontainsoca*
	\begin{claimproof}
		Let $ w\in\lnb $, and let $ \rho $ be the witnessing run. Let the minimum value of the counter in $ \rho $ be $ m $. If $ m\geq 0 $, then it is a run in $ \na $, and there is nothing to show. Now suppose $ m $ is negative. Then let 
		\[ \rho=(q_0,0,1)\xrightarrow{x}(q_1,K,1)\xrightarrow{a}(q_2,K+1,2)\xrightarrow{y}(q_3,K+1,2)\xrightarrow{b}(q_4,K,1)\xrightarrow{z}(f,0,1). \]
		For $ 0\leq i\leq K $, consider the state $ p_i $ such that $ (p_i,i,1) $ is the first configuration along $ (q_0,0,1)\xrightarrow{x}(q_1,K,1) $ with the counter value $ i $. Then by PHP, there exist $ 0\leq i<j\leq K $, such that $ p_i=p_j $. Let the run between $ (p_i,i) $ to $ (p_j,j) $ be $ \rho_l $, and the counter difference $ k_1=j-i $. Similarly, there exist a similar run $ \rho_r $ with  counter difference $ k_2 $ in $ (q_4,K,3)\xrightarrow{z}(f,0,3) $. Notice that $ \rho_l $ can be pumped to make the counter value arbitrary high, and similarly, $ \rho_r $ can be pumped to bring down the counter value from arbitrary high value. Moreover, observe that since $ m $ is negative, there must exist
		\[ \rho_c=(q,n,2)\xrightarrow{\epsilon} (q,q)\xrightarrow{u}(q,q)\xrightarrow{\epsilon}(q,n,2), \]
		such that the counter value reduces by $ m'>0 $ after this execution.
		
		Now consider a $ N $ such that $ k_1N + m>0 $. Then on pumping $ \rho_l $ $ k_2Nm' $ times, the counter value before $ \rho_c $ becomes $ n+k_1k_2Nm' $. And executing $ u $ $ k_2m $ times makes the counter value $ n+k_1k_2Nm'+k_2mm'= n+ (k_1N+m)k_2m'>0 $. Then pumping $ \rho_r $ $ (k_1N+m)m' $ times brings the counter value to 0 in the end. 
		
		However, this will give a run on word of the form $ w'=x_1x_2^{K_1}x_3ay_1u^{K_2}y_2bz_1z_2^{K^3}z_3 $  where $ K_1= k_2Nm' $, $ K_2 = k_2m $ and $ K_3 = (k_1N+m)m' $, such that $ x_1x_2x_3=x $, $ y_1uy_2=y $ and $ z_1z_2z_3=z$. But from lemma \ref{generalizedblockrepeat}, we know that $ w\bo w'  $. Since $ w'\in \lna $, $ w\in\lna\bd $.
	\end{claimproof}
	With this we have shown that $ \nb $ has the same downward closure as $ \na $. And observe that the $ \nb $ is a NFA with polynomially many states, $ K^3+3K^2+K $, where $ K= |Q| $.
\end{proof}

\section{Missing proofs from~\cref{sec:cfl}}\label{appendix-cfl}
\begin{lemma}\label{X-to-hash}
	Given a non-terminal $X$ in a context-free grammar, one can construct a
	linear-size grammar for the language $\{u\#v\mid X\derivs uXv \}$.
\end{lemma}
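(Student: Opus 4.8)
The plan is to build the grammar by doubling the non-terminals of the input grammar $\cG=(N,T,P,S)$. For the fixed non-terminal $X$ I would let $\cH$ have non-terminal set $N\cup\{\bar A\mid A\in N\}$ with the $\bar A$ fresh, keep every production of $P$, add the production $\bar X\to\#$, and for every production $A\to w$ in $P$ and every factorization $w=\beta B\gamma$ with $B\in N$ a non-terminal occurrence, add the production $\bar A\to\beta\bar B\gamma$; the start symbol of $\cH$ is $\bar X$. The intended reading of $\bar A$ is the set of words $u\#v$ such that $X$ is derivable from $A$ with $u$ accumulated on its left and $v$ on its right. Note that $\cH$ has $2|N|$ non-terminals and at most $3|P|+1$ productions, hence linear size; and if $\cG$ is in Chomsky normal form, the new productions are exactly $\bar A\to\bar YZ$ and $\bar A\to Y\bar Z$ for each $A\to YZ$, together with $\bar X\to\#$, so $\cH$ also fits the normal form used in \cref{sec:cfl}.

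It then suffices to prove the correctness claim: for all $A\in N$ and $u,v\in T^*$, we have $A\derivs uXv$ in $\cG$ if and only if $\bar A\derivs u\#v$ in $\cH$. Applied with $A=X$, and using that exactly one $\#$ is ever produced (see below), this gives $\lang(\cH)=\{u\#v\mid X\derivs uXv\}$. Along the way I would use the trivial fact that for $w\in T^*$, $A\derivs w$ in $\cH$ iff $A\derivs w$ in $\cG$: the productions of $\cH$ with only $N$-symbols on both sides are precisely those of $P$, and no barred non-terminal can appear strictly below an unbarred non-terminal in a derivation tree.

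For the ``only if'' direction I would argue via derivation trees. A derivation $A\derivs uXv$ in $\cG$ with $u,v\in T^*$ is witnessed by a tree with root $A$, exactly one unexpanded leaf labelled $X$, all other leaves labelled by terminals, and yield $uXv$. Along the root-to-$X$-leaf path $A=A_0,A_1,\dots,A_k=X$, with a production $A_i\to\beta_i A_{i+1}\gamma_i$ applied at $A_i$ for $i<k$, the subtrees hanging to the left of the path yield terminal words $u_0,\dots,u_{k-1}$ with $u=u_0\cdots u_{k-1}$, and those to the right yield $v_0,\dots,v_{k-1}$ with $v=v_{k-1}\cdots v_0$. Mirroring this path by the barred productions $\bar A_i\to\beta_i\bar A_{i+1}\gamma_i$ and $\bar A_k=\bar X\to\#$, and deriving each $\beta_i\derivs u_i$ and $\gamma_i\derivs v_i$ in $\cH$ using the retained productions of $P$, yields $\bar A\derivs u_0\cdots u_{k-1}\#\,v_{k-1}\cdots v_0=u\#v$. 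For the ``if'' direction I would observe that in any $\cH$-derivation from $\bar A$ every sentential form contains exactly one barred non-terminal, up to the single step that applies $\bar X\to\#$, because each barred production rewrites one barred non-terminal into exactly one barred non-terminal and no production creates a barred non-terminal from an unbarred context. Hence the barred productions used form a chain $\bar A_0\to\beta_0\bar A_1\gamma_0,\dots,\bar A_{k-1}\to\beta_{k-1}\bar A_k\gamma_{k-1},\bar X\to\#$, and replacing the remaining $\cH$-derivations of the $\beta_i$ and $\gamma_i$ by $\cG$-derivations (again by the trivial fact) reads off a $\cG$-derivation $A\derivs uXv$ whose generated word is $u\#v$.

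The main obstacle I anticipate is purely bookkeeping in the ``only if'' direction: making precise the identification of the ``position occupied by $X$'' in a derivation $A\derivs uXv$ with the root-to-leaf path of a derivation tree, and checking that the left- and right-subtree yields concatenate in exactly the orders $u=u_0\cdots u_{k-1}$ and $v=v_{k-1}\cdots v_0$ reproduced by the barred productions. Everything else---that $\cG$ and $\cH$ agree on unbarred symbols, that exactly one $\#$ is produced, and the linear size bound---is routine.
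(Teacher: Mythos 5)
Your construction is correct and is the standard ``barred/spine'' construction for tracking the root-to-$X$ path in a derivation tree; the paper states \cref{X-to-hash} without proof, treating it as routine. The key invariant you identify---that every sentential form derived from $\bar X$ in $\cH$ contains exactly one barred non-terminal until $\bar X\to\#$ fires, while $\cG$ and $\cH$ agree on unbarred symbols---is precisely what makes the translation between $A\derivs uXv$ in $\cG$ (with $u,v\in T^*$) and $\bar A\derivs u\#v$ in $\cH$ exact in both directions, and your size accounting ($2|N|$ non-terminals and at most $3|P|+1$ productions under Chomsky normal form) gives the claimed linear bound.
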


\begin{lemma}\label{compute-side-alphabets}
	Given a non-terminal $X$, one can compute in polynomial time the
	alphabets $\overleftarrow{\Gamma}_X$ and $\overrightarrow{\Gamma}_X$.
\end{lemma}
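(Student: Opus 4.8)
The plan is to reduce the computation of $\overleftarrow{\Gamma}_X$ and $\overrightarrow{\Gamma}_X$ to a few standard least-fixed-point computations on $\cG$, of the kind used for grammar reduction. Write $A\rightsquigarrow B$ if $A\derivs \gamma B\delta$ for some sentential forms $\gamma,\delta$, and for a non-terminal $Y$ let $\mathrm{alph}(Y)=\{a\in\Sigma\mid Y\derivs u \text{ for some } u \text{ in which } a \text{ occurs}\}$. Both the relation $\rightsquigarrow$ and all sets $\mathrm{alph}(Y)$ are computable in polynomial time: the former by saturating, from a given non-terminal, the set of non-terminals reachable through right-hand sides; the latter by a simultaneous fixpoint iteration over all non-terminals.

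First I would discard the non-productive non-terminals of $\cG$ (the usual polynomial-time step); if $X$ itself becomes non-productive, both alphabets are empty. Since the grammar is in Chomsky normal form and all remaining non-terminals are productive, every sentential form reachable from a remaining non-terminal can be completed into a terminal word. The core claim is then the following characterization: a terminal $a$ belongs to $\overleftarrow{\Gamma}_X$ iff there is a production $A\to YA'\in P$ with $X\rightsquigarrow A$, $A'\rightsquigarrow X$ and $a\in\mathrm{alph}(Y)$; symmetrically, $a\in\overrightarrow{\Gamma}_X$ iff there is a production $A\to A'Z\in P$ with $X\rightsquigarrow A$, $A'\rightsquigarrow X$ and $a\in\mathrm{alph}(Z)$. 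Given this, both alphabets are read off by a single pass over $P$ after precomputing $\rightsquigarrow$ and the sets $\mathrm{alph}(\cdot)$, which is polynomial overall.

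For the ``if'' direction of the claim, $X\rightsquigarrow A$, $A\to YA'$ and $A'\rightsquigarrow X$ compose to a derivation $X\derivs \cdot Y \cdot X\cdot$; by productivity every side non-terminal can be driven to a terminal word, and $Y$ can be driven to a terminal word containing $a$, so the witnessing pump derivation exists. For the ``only if'' direction, I would take a derivation tree witnessing $X\derivs uXv$ with $a$ occurring in $u$ and follow the path from the root (labelled $X$) to the distinguished recursive occurrence of $X$. Because we are in CNF, every node on this path other than the last one has exactly one sibling, whose subtree lies entirely to the left or entirely to the right of the path; the leaf carrying $a$, being in $u$ and hence to the left of the recursive $X$, sits in a left-sibling subtree rooted at the left child $Y$ of some path production $A\to YA'$, giving $X\rightsquigarrow A$ (path prefix), $A'\rightsquigarrow X$ (path suffix), and $a\in\mathrm{alph}(Y)$. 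Alternatively, one can run the standard ``which terminals occur in a context-free language'' routine on the linear-size grammar for $\{u\#v\mid X\derivs uXv\}$ provided by \cref{X-to-hash}, separately for the factor before and after $\#$.

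I expect the only delicate point to be the productivity bookkeeping: one must be sure that the side sentential forms arising in $X\rightsquigarrow A$ and $A'\rightsquigarrow X$ can genuinely be completed to terminal strings, which is precisely why the non-productive non-terminals are removed up front; everything after that is routine.
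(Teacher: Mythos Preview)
Your argument is correct. The characterisation of $\overleftarrow{\Gamma}_X$ via the existence of a production $A\to YA'$ with $X\rightsquigarrow A$, $A'\rightsquigarrow X$, and $a\in\mathrm{alph}(Y)$ is sound, the derivation-tree analysis for the ``only if'' direction is the right one in CNF, and your caveat about first discarding non-productive non-terminals is exactly the point that needs care. The fixed-point computations you propose are all standard and polynomial.

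The paper takes a shorter, more black-box route: it invokes \cref{X-to-hash} to build a linear-size grammar for $K=\{u\#v\mid X\derivs uXv\}$ and then, for each terminal $a$, tests intersection emptiness of $K$ with a constant-size regular language placing $a$ to the left of $\#$ (and symmetrically for $\overrightarrow{\Gamma}_X$). This is precisely the alternative you sketch in your last paragraph. What your main approach buys is a self-contained, production-level criterion that avoids appealing to the CFL--regular intersection machinery and makes the structure of the witnessing pump derivation explicit; what the paper's approach buys is brevity, since it reduces everything to one already-established lemma and one textbook decidability result. Either is perfectly adequate here.
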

\begin{proof}
	First, apply \cref{X-to-hash} to construct a grammar for $K=\{u\#v\mid
	X\derivs uXv \}$.  Then, we can decide whether
	$a\in\overleftarrow{\Gamma}_X$ by checking whether $K$ intersects the
	regular language $R_a=\{xay\#v\mid x,y,v\in\Sigma_0^*\}$, for which one
	can construct a three-state automaton. Since intersection emptiness
	between a context-free language and a given regular language is
	decidable in polynomial time, the set $\overleftarrow{\Gamma}_X$ can be
	computed in polynomial time. An analogous argument holds for
	$\overrightarrow{\Gamma}_X$.
\end{proof}

Since the context-free languages are closed under rational transduction, we use the standard triple construction (see e.g., \cite[Prop. 2.6.1]{Zetzsche2016c}) to obtain new grammars after applying transductions. The technique allows construction of new grammar of size linear in the original grammar and polynomial in the size of the transducer. 
\begin{lemma}\label{cfg-transducers}
	Given a CFG $ \cG $ recognizing a language $ L $, and a transducer $ \nb $ defining a transduction $ T $, the language $ TL $ is recognized by a CFG of size $ |\cG|\cdot |Q|^2 $, where $ Q $ is the number of states in $ \nb $.
\end{lemma}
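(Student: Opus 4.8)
The approach is to apply the classical \emph{triple construction}; the plan is to recall it and point out why it yields the stated bound. Write $\cG=(N,\Sigma,P,S)$ and let $\nb=(Q,\Sigma,\Gamma,E,q_0,F)$ have output alphabet $\Gamma$. After a routine normalization I may assume each edge of $\nb$ reads at most one input letter: splitting longer edges through fresh states costs $\nb$ only a constant factor, and the transducers arising in this paper already have this form.

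First I would introduce the non-terminals of the new grammar $\cG'$: a fresh start symbol $S'$, a symbol $[p,A,q]$ for every $p,q\in Q$ and $A\in N$, and a symbol $[p,a,q]$ for every $p,q\in Q$ and $a\in\Sigma$. The invariant to aim for is that $[p,A,q]$ derives in $\cG'$ exactly those $v$ for which there is some $u$ with $A\derivs u$ in $\cG$ and a run of $\nb$ from $p$ to $q$ reading $u$ and outputting $v$, and that $[p,a,q]$ derives exactly the outputs of runs of $\nb$ from $p$ to $q$ reading exactly the single letter $a$. To realize this I would add: for every production $A\to X_1\cdots X_k$ of $\cG$ (with $X_i\in N\cup\Sigma$) and all $p_0,\dots,p_k\in Q$, the production $[p_0,A,p_k]\to[p_0,X_1,p_1]\cdots[p_{k-1},X_k,p_k]$; for all $p,q\in Q$ and $a\in\Sigma$, the production $[p,a,q]\to R_{p,a,q}$, where $R_{p,a,q}\subseteq\Gamma^*$ is the regular set of outputs of runs of $\nb$ from $p$ to $q$ with input exactly $a$ (placed on the right-hand side using the paper's convention that a regular language may be used there; for the transducers relevant here, \cref{sizeprioritytrans}, there are no $\varepsilon$-input edges, so each $R_{p,a,q}$ is a single letter or $\varepsilon$); and finally $S'\to[q_0,S,f]$ for every $f\in F$, together with $S'\to R_{q_0,\varepsilon,f}$ for every $f\in F$ in case $\varepsilon\in\lang(\cG)$.

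Correctness reduces to two routine structural inductions establishing the invariant. For ``$\Rightarrow$'' I induct on a derivation in $\cG'$: a step using $[p_0,A,p_k]\to[p_0,X_1,p_1]\cdots[p_{k-1},X_k,p_k]$ is matched by the production $A\to X_1\cdots X_k$ in $\cG$ and by composing along the shared states $p_1,\dots,p_{k-1}$ the $\nb$-runs witnessing the children, concatenating their inputs and outputs. For ``$\Leftarrow$'' I induct on a derivation $A\derivs u$ in $\cG$, splitting $u$ according to the first production applied and reading off the intermediate states from the given run of $\nb$. Instantiating the invariant at $S'$ yields $\lang(\cG')=\{v\mid\exists u\in\lang(\cG)\colon(u,v)\in\trans(\nb)\}=TL$.

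For the size I only count non-terminals, which is what this paper's complexity analysis relies on: $\cG'$ has $(|N|+|\Sigma|)\cdot|Q|^2+1$ of them, hence size $O(|\cG|\cdot|Q|^2)$. I do not expect a genuine obstacle, since this is exactly the construction of \cite[Prop.\ 2.6.1]{Zetzsche2016c}; the one point deserving care is the treatment of $\varepsilon$-input transitions of $\nb$, where the languages $R_{p,a,q}$ (and $R_{q_0,\varepsilon,f}$) must absorb \emph{all} output a run can emit while reading nothing --- before the first letter, after the last letter, and in the gaps between consecutive terminal letters of the generated word --- so that no output behaviour of $\nb$ is lost.
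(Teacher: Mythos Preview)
Your proposal is correct and follows exactly the approach the paper intends: the paper does not give its own proof of this lemma but simply invokes the standard triple construction, citing \cite[Prop.\ 2.6.1]{Zetzsche2016c}, and you have spelled out precisely that construction together with the invariant and the non-terminal count. Your added remark about absorbing $\varepsilon$-input behaviour into the languages $R_{p,a,q}$ is a reasonable precaution, though as you note the transducers actually used in the paper (\cref{sizeprioritytrans}) read one letter per edge, so the subtlety does not arise there.
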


\subsection{Proof of \cref{transform-end}}
\transformEnd*
\begin{proof}
	We first apply \cref{X-to-hash} to construct a grammar $ \cG' $ for $ K=\{u\#v\mid X\derivs uXv \} $. Then consider the following transducer $ T $, that first reads and outputs all the letters until an $ r $ is seen. Once an $ r $ is read, it outputs $ \lhash $ and keeps dropping subsequent letters until on reading another $ r $ it non-deterministically decides that $ r $ will not be seen before $ \# $. Then if another $ r $ is read before the $ \# $, it rejects the run by going to a non-final sink state. Otherwise, it outputs the letters that are seen until it encounters a $ \# $, and outputs it. Then it reads and outputs all the letters until a $ s $ is read, in which case it outputs $ \rhash $, and continues dropping the letters until another $ s $ is read, and it non-deterministically decides to not read a $ s $. It output all the letters after this $ s $. Moreover, it goes to the sink state if it reads a letter with priority greater than $ r $ (and, priority greater than $ s $) before the $ \# $ (after the $ \# $).
	
	Note that the transducer only needs $ 7 $ states; one reject state, 3 states for the right part of $ \# $, and 3 for the left part. We apply the transducer $ T $ to the grammar $ \cG' $ to obtain a grammar of size $ 49|\cG'| $, which is linear in the original grammar, due to \cref{X-to-hash}.
	
	Although we show the transducer for $ r,s>0 $, for the case of $ r=0 $ or $ s=0 $, the transducer just outputs $ \lhash $ or $ \rhash $ accordingly.  
\end{proof}

\subsection{Proof of \cref{embedding-fresh-letter}}
\embeddingFreshLetter*
\begin{proof}
For the forward direction, let us assume that $ u\#v\bo u'\#v' $. Then suppose the largest priority occurring in $ u\#v $ and $ u'\#v' $ be $ p $. Then there exists a witness block map $ \rho $ for $ u\#v\bo u'\#v' $.

Now let $ \# $ belongs to the $ m^{th} $ and $ n^{th} $ sub-$ p $ blocks of $ u\#v $ and $ u'\#v' $ respectively. We then show that $ u\bo u' $. Consider the block map $ \rho_u $ that maps $ i^{th} $ sub-$ p $ block of $ u $ to the $ \rho(i)^{th} $ sub-$ p $ block of $ u' $ for all $ i\in [0,m] $. By definition of $ \rho $, $ u_i\bo u'_{\rho_u(i)} $, where $ w_i $ denotes the $ i^{th} $ sub-$ p $ block of $ w $ for all $ i\in[0,n-1] $. Moreover, $ \rho_u(0) = 0 $, and $ \rho_u(m)=n $. It only remains to show that $ u_m\bo u'_{n} $. But this holds recursively, since the $ 0 $-block that $ \# $ of $ u $ belongs to is subword smaller than the $ 0 $-block that $ \# $ of $ u' $ belongs to. 

A similar argument shows that the block map $ \rho_v $ that maps $ i^{th} $ sub-$ p $ block of $ v $ to $ (\rho(i+m)-n)^{th} $ sub-$ p $ block of $ v' $ is the required witness block map. 

Now, for the other direction, let $ \rho_u $ and $ \rho_v $ be the witness block maps for $ u\bo u' $ and $ v\bo v' $. Then consider the block map $ \rho $ such that 
\[ i\mapsto 
	\begin{cases}
		\rho_u(i), \text{ if } i\geq n\\
		\rho_v(i-m)+n, \text{ otherwise}.
	\end{cases} \]
Again it suffices to show that the $ m^{th} $ sub-$ p $ block of $ u\#v $ is block smaller than the $ n^{th} $ sub-$ p $ block of $ u'\#v' $. But this again recursively holds since the $ 0 $-block that $ \# $ of $ u\#v $ belongs to is subword smaller than the $ 0 $-block that $ \# $ of $ u'\#v' $ belongs to. 
\end{proof}

\subsection{Proof of \cref{transform-repeat}}
\transformRepeat*
\begin{proof}
	Due to \cref{cfg-transducers}, we again only construct a transducer of constant size that results in $ \lrepeat_{X,r,s} $ when applied to $ \cG' $ where $ \cG' $ is the grammar for the language $ K= \{u\#v\mid X\derivs uXv\} $ obtained via \cref{X-to-hash}. The case of $ \rrepeat_{X,r,s} $ is analogous. 
	
	Consider the following transducer $ T $. The transducer has a non-final sink state, which we call the rejecting state. The transducer reads the letters (with equal or less priority than r) and does not output anything (i.e. outputs $ \epsilon $), until it reads an $ r $ and decides to output the next sub-$ r $ block non-deterministically. It then outputs all the letters read till the next $ r $, and then does not output the subsequent letters. On reading $ \# $, it outputs nothing, but verifies if the highest occurring letter in the right of $ \# $ is $ s $. If that is the case, it accepts, otherwise rejects. It is clear from the construction that $ T\lang(\cG') $ is the language $ \lrepeat_{X,r,s} $.
	
	Note that the transducer has 5 states: one rejecting state, 3 to output sub-$ r $ block on the left of $ \# $, and one to verify if the word on the right of $ \# $ is in $ \Max{s} $.
	
	The cases when $ r=0 $ or $ s=0 $ are rather straightforward, as the transducer just non-deterministically outputs one arbitrary letter from the corresponding side of $ \# $.
\end{proof}

\subsection{Proof of \cref{grammar-correctness}}
\grammarCorrectness*
\begin{proof}
	For every $u\#v\in \sigma_{X,r,s}(E_{X,r,s})$, there are $u'\in\Max{r}$ and $v'\in\Max{s}$ with $u\bo u'$, $v\bo v'$, and $X\derivs u'Xv'$.

\newcommand{\lw}{\overleftarrow{w}}
\newcommand{\rw}{\overrightarrow{w}}
\newcommand{\lrho}{\overleftarrow{\rho}}
\newcommand{\rrho}{\overrightarrow{\rho}}
Let $ u\#v\in \sigma_{X,r,s}(E_{X,r,s}) $. Then let $ u=u_0ru_1ru_2r\cdots ru_k $, for $ u_1r\cdots u_{k-1}r\in \lrepeat_{X,r,s}^* $, and $ v=v_0sv_1sv_2s\cdots sv_l $, for $ v_1sv_2s\cdots v_{l-1}s\in \rrepeat_{X,r,s}^* $. That is, $ u_0\lhash u_k\# v_0\rhash v_k\in E_{X,r,s} $. This implies that there exists $ \lw $ (and $ \rw $), such that $ u_0 $ ($ v_0 $) and $ u_k $ ($ v_l $) are respectively the first and the last sub-$ r $ (sub-$ s $) blocks of $ \lw $ ($ \rw $). Let the production rule sequence $ X\xRightarrow{*}\lw X \rw $ be denoted by $ \rho_0 $. 

Then by the definition of $ \lrepeat_{X,r,s} $, we have production rule sequences $ \lrho_i\coloneqq X\xRightarrow{*} e_iXf_i $, such that $ u_i $ is a sub-$ r $ block of $ e_i $, and $ f_i\in\Max{s}  $. Similarly, there are production rule sequences $ \rrho_j\coloneqq X\xRightarrow{*} g_jXh_i $ such that $ v_j $ is a sub-$ s $ block of $ h_j $, and $ g_j\in\Max{r}  $.

Then the derivation sequence $ \rho_0\lrho_1\cdots \lrho_k\rrho_{1}\cdots \rrho_l\rho_0 $ gives 
$ X\derivs u'Xv' $, where $ u'=\lw  e_1e_2\cdots e_k g_1\cdots g_l   \lw $ and $ v'=\rw  f_1f_2\cdots f_k h_1\cdots h_l   \rw $. It is easy to see that $ u\bo u' $ and $ v\bo v' $.

\end{proof}

\subsection{Proof of \cref{cfl-correctness}}
\cflCorrectness*
\begin{proof}
Of course, for zero expansions, there is nothing to prove, so suppose we have a
derivation in $\cG'$ with $k$ expansions and let $t$ be a derivation tree
obtained by an expansion step from $t_0$ by replacing the $X$-labeled node $x$.
Moreover, let $X\derivs uwv$ be the derivation using new productions inserted
at $x$.  Then by construction, we know that $u\#v\in
\sigma_X(E_X)\bd$.  By \cref{embedding-fresh-letter}, this implies
that there is a word $u'\#v'\in\sigma_X(E_X)$ with $u',v'\in\Sigma_p^*$, $u\bo
u'$, and $v\bo v'$ Therefore, by \cref{grammar-correctness}, there exist
$u'',v''\in\Sigma_p^*$ with $u'\bo u''$ and $v'\bo v''$ and a derivation
$X\derivs u''Xv''$ in $\cG$. In particular, there is a derivation $X\derivs
u''wv''$ in  $\cG$. Now consider the derivation $t_1$ obtained from $t_0$ by
replacing $x$ with the derivation $X\derivs u''wv''$.  Then $t_1$ derives a new
word of the form $\alpha u''\beta v''\gamma$, where $\alpha\beta\gamma$ is the
word derived by $t_0$. Since $t_1$ only uses productions from $\cG$ in addition
to those in $t_0$, we know that $t_1$ needs just $k-1$ expansions. Hence, we
know by induction that $\alpha u''\beta v''\gamma\in\lang(\cG)\bd$.
Since $\bo$ is multiplicative, we have $\alpha u\beta v\gamma\bo\alpha u'\beta
v'\gamma\bo \alpha u''\beta v''\gamma\in\lang(\cG)\bd$.  Since
$\alpha u\beta v\gamma$ is the word generated by $t$ and belongs to
$\lang(\cG)\bd$, this completes the proof.
\end{proof}

\subsection{Proof of \cref{cfl-acyclicity}}
\cflAcyclicity*
\begin{proof}
Consider
a derivation tree $t$ in $\cG'$. We pick $t$ so that it minimizes the number of
nodes whose label repeats below them.  Note that each path in $t$ alternates
between productions from $\cG$ and segments using newly introduced productions.
By induction, we may assume that within each segment, no non-terminal repeats.
Now observe that if any new non-terminal that occurs in two segments, then these
two segments must come from the same grammar $\egrammar'_X$ and thus $X$ must
repeat on that path. Therefore, if there is any repetition, there is also a
repetition of some non-terminal $X$ of $\cG$. On the path where $X$ repeats,
pick the top-most and the lowest occurrence of $X$.  Between these two, we have
a derivation $X\derivs_{\cG'} uXv$.  This derivation can be replaced by a
single derivation using new productions. Moreover, by our choice of occurrences
of $X$, this replacement will not introduce cross-segment repetitions. Thus, we
obtain a new derivation with (at least) one fewer repetition. By applying this
argument again and again, we arrive at a derivation with no repetitions. We have thus shown:
For every derivation tree $ t $ in $ \cG' $ with repetition of non-terminals, 
there exists an equivalent acyclic tree in $ \cG' $. This, with \cref{cfl-correctness}, 
gives the desired result. 
\end{proof}

\end{document}